\providecommand{\U}[1]{\protect\rule{.1in}{.1in}}
\newtheorem{theorem}{Theorem}[section]
\newtheorem{corollary}[theorem]{Corollary}
\newtheorem{definition}[theorem]{Definition}
\newtheorem{example}[theorem]{Example}
\newtheorem{lemma}[theorem]{Lemma}
\newtheorem{remark}[theorem]{Remark}
\newenvironment{proof}[1][Proof]{\noindent \textbf{#1.} }{\  \rule{0.5em}{0.5em}}
\begin{document}

\title{Selfish Cops and Active Robber:\ \\Multi-Player Pursuit Evasion on Graphs}
\author{G. Konstantinidis and Ath. Kehagias \thanks{The authors thank Steve Alpern and
Pascal Schweitzer for several useful and inspiring discussions.}}
\maketitle

\begin{abstract}
We introduce and study the game of ``Selfish Cops and Active Robber'' (SCAR)
which can be seen as an multiplayer variant of the ``classic'' two-player Cops
and Robbers (CR) game. In classic CR all cops are controlled by a single
player, who has no preference over which cop captures the robber. In SCAR, on
the other hand, each of $N-1$ cops is controlled by a separate player, and a
single robber is controlled by the $N$-th player; and \emph{the capturing cop
player receives a higher reward than the non-capturing ones}. Consequently,
SCAR is an \emph{$N$-player} pursuit game on graphs, in which each cop player
has an increased motive to be the one who captures the robber. The focus of
our study is the existence and properties of SCAR \emph{Nash Equilibria} (NE).
In particular, we prove that SCAR always has one NE in deterministic
positional strategies and (for $N\ge3$) another in deterministic nonpositional
strategies. Furthermore, we study conditions which, at equilibrium, guarantee
either capture or escape of the robber and show that (because of the
antagonism between the ``selfish'' cop players) the robber may, in certain
SCAR configurations, be captured later than he would be in classic CR, or even
not captured at all. Finally we define the \emph{selfish cop number} of a
graph and study its connection to the classic cop number.

\end{abstract}

\section{Introduction\label{sec01}}

In this note we introduce and study the game of \textquotedblleft\emph{Selfish
Cops and Active Robber}\textquotedblright\ (SCAR) which can be seen as an
$N$\emph{-player \ variant} of the \textquotedblleft classic\textquotedblright%
\ two-player \emph{cops and robbers} (CR) game
\cite{bonato2011,nowakowski1983}.

The rules of SCAR are similar to those of CR: $N-1$ cops and a robber take
turns moving along the edges of an undirected finite simple connected graph;
the robber is \emph{captured} if at the end of a turn he is located in the
same vertex as one or more cops.

However each cop in SCAR is a separate player (while in CR a single player
controls all cops). Furthermore, \emph{payoffs} are quite different from those
of CR. A complete description will be given in Section \ref{sec02}; the gist
of the matter (and the SCAR\ novelty) is that \emph{in SCAR\ the capturing
cops receive a higher reward than the remaining, non-capturing cops.} As a
result, one cop's win is another cop's \emph{partial} loss (as well as the
robber's complete loss).

In other words, while in SCAR (as in CR)\ the robber will try to maximize
capture time, each cop has a motive to minimize capture time and an additional
motive for the capture to be effected by himself; depending on some game
parameters, situations will arise in which a cop will enforce a longer capture
time to ensure that he (rather than another cop)\ captures the robber. Hence
cop cooperation cannot be taken for granted (the cops are \emph{selfish}); in
this respect, SCAR differs essentially from classic CR with a \emph{team} of
$N-1$ cops chasing a single robber.

In short, SCAR is an $N$-\emph{player} (with $N\geq3)$ \emph{pursuit evasion
game on graphs, where the interests of each of the }$N$\emph{ players are in
(partial or total) conflict with those of the remaining players}. To the best
of our knowledge such games have not been previously studied.

As will become clear in the sequel, SCAR belongs to the extensively studied
family of \emph{stochastic games}\footnote{Actually all elements of SCAR are
\emph{deterministic}; the term \textquotedblleft stochastic
games\textquotedblright\ denotes a general game family which contains, as a
special case, games deterministically evolving in time.}. For two-player
stochastic games see \cite{filar1996} and for the $N$-player case see
\cite{mertens2002,vieille2002}.

SCAR can also be seen as a \emph{pursuit / evasion game played on graphs}. The
prototypical game of this family is, of course, the classic CR game introduced
in \cite{nowakowski1983,quilliot1978}; for an extensive recent overview of the
subject see the book \cite{bonato2011}. While the connection between graph
pursuit games and game theory is a natural one, relatively little has been
published on it \cite{kehagias2013,kehagias2014,kehagias2016,kehagias2016a}.
In particular, we are aware of only one previous publication (by ourselves) on
graph pursuit games involving selfish pursuers \cite{kehagias2016}.

Graph pursuit games are also related to several other research areas:
reachability games \cite{berwanger,mazala2002}, recursive games
\cite{everett1957,vieille2002}, combinatorial games (see
\cite{fraenkel1996,nowakowski1998} and especially \cite{bonato2017}) and
\emph{differential} pursuit games \cite{isaacs1965}. It is worth noting that
the idea of selfish pursuers has been occasionally (but not
extensively)\ explored in studies of differential pursuit games
\cite{foley1974,talebi2017,tynianski2017}.

The rest of the paper is organized as follows. In Section \ref{sec02} we
present the necessary preliminaries (rules, notation etc.)\ for the analysis
of the \emph{three}-player (two cops, one robber)\ SCAR. In Section
\ref{sec03} we briefly present a game theoretic formulation of a slightly
modified version of the classic CR\ game; this formulation will be useful in
the analysis of SCAR presented in later sections. In Section \ref{sec04} we
prove that three-player SCAR\ admits \emph{Nash equilibria} in deterministic
strategies and at least one of these is an equilibrium in \emph{positional}
strategies; we also prove several additional properties, regarding the
connection of SCAR capturability to the classic \emph{cop number}. In Section
\ref{sec05} we extend our results to $N$-player SCAR (with $N\geq2$), and we
also define the \emph{selfish cop number} of a graph and study its connection
to the classic cop number. We conclude, in Section \ref{sec06}, by presenting
variants and extensions of SCAR which can be the subject of future work.

\section{Preliminaries\label{sec02}}

We denote the SCAR game played by $N$ players on $G$ by $\Gamma_{N}\left(
G|s_{0},\gamma,\varepsilon\right)  $; $s_{0}$ is the \emph{initial position}
and $\gamma,\varepsilon$ are \emph{game parameters} which will be discussed in
later sections. We will sometimes simplify the notation to $\Gamma_{N}\left(
G|s_{0}\right)  $ and / or $\Gamma_{N}\left(  G\right)  $.

The main task of this section is a rigorous definition of the
\emph{three-player} SCAR\ game $\Gamma_{3}\left(  G\right)  $ (the
generalization to $\Gamma_{N}\left(  G\right)  $, the $N$-player case, will
appear in Section \ref{sec05}).

\subsection{Basics\label{sec0201}}

\textquotedblleft Iff\textquotedblright\ means \textquotedblleft if and only
if\textquotedblright. The cardinality of set $A$ is denoted by $\left\vert
A\right\vert $; the set of elements of $A$ which are not elements of $B$ is
denoted by $A\backslash B$. We use the following sets of integers:%
\[
\mathbb{N}=\left\{  1,2,3,...\right\}  ,\qquad\mathbb{N}_{0}=\left\{
0,1,2,3,...\right\}  .
\]
Given a graph $G=\left(  V,E\right)  $, for any $x\in V$, $N\left(  x\right)
$ is the \emph{neighborhood} of $x$: $N\left(  x\right)  =\left\{  y:\left\{
x,y\right\}  \in E\right\}  $; $N\left[  x\right]  $ is the \emph{closed
neighborhood} of $x$: $N\left[  x\right]  =N\left(  x\right)  \cup\left\{
x\right\}  $.

\subsection{States and Rules of $\Gamma_{3}\left(  G\right)  $\label{sec0202}}

$\Gamma_{3}\left(  G\right)  $ is played on a undirected, finite, simple
connected graph $G=\left(  V,E\right)  $. The \emph{first player} is the cop
$C_{1}$, the \emph{second player} is the cop $C_{2}$ and the \emph{third
player} is the robber $R$. Thus the \emph{player set} is $I=\left\{
C_{1},C_{2},R\right\}  $ or, for simplicity, $I=\left\{  1,2,3\right\}  $.

The game is played in \emph{turns}, numbered by the time index $t\in
\mathbb{N}_{0}$. At the zero-th turn the initial positions of the players are
given. At every subsequent turn, a single player moves. Any player can have
the first move and they play in \textquotedblleft cyclical\textquotedblright%
\ order $...\rightarrow C_{1}\rightarrow C_{2}\rightarrow R\rightarrow...$ .
The game ends if the robber is captured, that is, if at the end of a turn he
is located in the same vertex as one or more cops. Otherwise it continues indefinitely.

A \emph{game position} or \emph{game state} has the form $s=\left(
x^{1},x^{2},x^{3},p\right)  $ where $x^{n}\in V$ is the position (vertex) of
the $n$-th player and $p\in\left\{  1,2,3\right\}  $ is the number of the
player who has the next move. The set of \emph{nonterminal} states is
\[
S^{\prime}=V\times V\times V\times\left\{  1,2,3\right\}  .
\]
We will also need a \emph{terminal state} $\tau$. Hence the full state set is
$S=S^{\prime}\cup\left\{  \tau\right\}  $.

A partition of the state set can be effected as follows. Define (for each
$n\in I$) the set $S^{n}$ of states in which the $n$-th player has the next
move:\footnote{Formally speaking and given that SCAR is a stochastic game (as
noted before) at every turn all players make a move. There is however at every
turn a single player who has a non-trivial set of moves to choose among these
and this is what for reasons of brevity we mean in the current paper by the
expression "the player who has the next move".}%
\[
S^{n}=\left\{  s:s=\left(  x^{1},x^{2},x^{3},n\right)  \in S\right\}  .
\]
Then the full state set can be partitioned as follows:%
\begin{equation}
S=S^{\prime}\cup\left\{  \tau\right\}  =S^{1}\cup S^{2}\cup S^{3}\cup\left\{
\tau\right\}  . \label{eq013b}%
\end{equation}

An alternative partition of the state set is effected as follows. We define
\emph{capture state} sets:%

\begin{align*}
S_{C}^{1}  &  =\left\{  s:s=(x^{1},x^{2},x^{3},n)\text{ with }x^{1}%
=x^{3}\text{, }x^{2}\neq x^{3}\right\}  \text{, where }R\text{ is captured by
}C_{1}\text{;}\\
S_{C}^{2}  &  =\left\{  s:s=(x^{1},x^{2},x^{2},n)\text{ with }x^{1}\neq
x^{3}\text{, }x^{2}=x^{3}\right\}  \text{, where }R\text{ is captured by
}C_{2}\text{;}\\
S_{C}^{_{12}}  &  =\left\{  s:s=(x^{1},x^{2},x^{2},n)\text{ with }x^{1}%
=x^{3}\text{, }x^{2}=x^{3}\right\}  \text{, where }R\text{ is captured by both
}C_{1}\text{ and }C_{2}\text{.}%
\end{align*}
Now define:%
\begin{align*}
S_{C}  &  =S_{C}^{_{1}}\cup S_{C}^{_{2}}\cup S_{C}^{_{12}}\text{, the set of
all capture states;}\\
S_{NC}  &  =S^{\prime}\backslash S_{C}\text{, the set of all non-capture,
non-terminal states.}%
\end{align*}
Then the state set can be partitioned as follows:
\begin{equation}
S=S^{\prime}\cup\left\{  \tau\right\}  =S_{NC}\cup S_{C}\cup\left\{
\tau\right\}  . \label{eq013a}%
\end{equation}

We define $A^{n}\left(  s\right)  $, the $n$-th player's \emph{action set}
when the game state is $s=\left(  x^{1},x^{2},x^{3},m\right)  $, by%
\[
A^{n}\left(  s\right)  =\left\{
\begin{array}
[c]{ll}%
N\left[  x^{n}\right]  & \text{for }s\in S^{n}\cap S_{NC},\\
\left\{  x^{n}\right\}  & \text{for }s\in S^{m}\cap S_{NC}\text{ with }n\neq
m,\\
\left\{  \lambda\right\}  & \text{for }s\in S_{C},\text{ where }\lambda\text{
is the null move}\\
\left\{  \lambda\right\}  & \text{for }s=\tau.
\end{array}
\right.
\]
The players' action sets have the following implications on state-to-state transitions:

\begin{enumerate}
\item when the $n$-th player has the move at a non-capture state, he can stay
at his current vertex or move to any neighboring vertex, thus producing the
next state of the game;

\item when another player has the move at a non-capture state, the $n$-th
player can only stay in his current vertex (trivial move);

\item when the game is in a capture state, every player has only the null move
and the game moves to the terminal state;

\item when the game is in the terminal state, every player has only the null
move and the game moves to (actually stays in)\ the terminal state.
\end{enumerate}

\noindent These state-to-state transitions are formalized by the
\emph{transition function }$\mathbf{T}\left(  s,a\right)  $ which denotes the
game state resulting when the game is at a position $s\in S$ and each player
$n\in I$ makes a move $a^{n}\in A^{n}(s)$ resulting to the actions profile
$a=(a^{1},a^{2},a^{3})$. The behavior of $\mathbf{T}\left(  s,a\right)  $ is
illustrated by some examples as follows:
\begin{align*}
\text{for }  &  s=\left(  x^{1},x^{2},x^{3},1\right)  \in S^{1}\cap S_{NC} &
:\;  &  \mathbf{T}\left(  s,(a^{1},a^{2},a^{3})\right)  =\left(  a^{1}%
,x^{2},x^{3},2\right)  ,\\
\text{for }  &  s=\left(  x^{1},x^{2},x^{3},2\right)  \in S^{2}\cap S_{C} &
:\;  &  \mathbf{T}\left(  s,(\lambda,\lambda,\lambda)\right)  =\tau,\text{ }\\
\text{for }  &  s=\tau\text{ } & :\;  &  \mathbf{T}\left(  \tau,(\lambda
,\lambda,\lambda)\right)  =\tau.
\end{align*}
Often we use the following simplified notation: if at $t$ the game state is
$s_{t}=\left(  x_{t}^{1},x_{t}^{2},x_{t}^{3},n\right)  \in S^{n}$ and at $t+1$
the $n$-th player's action is $a_{t+1}^{n}$, we write
\[
s_{t+1}=\mathbf{T}\left(  s_{t},a_{t+1}^{n}\right)  .
\]

\subsection{Histories and Strategies\label{sec0203}}

A \emph{game history} is a sequence $h=\mathbf{(}s_{0},s_{1},s_{2},...)$,
where $s_{t}$ is the state of the game at the $t$-th turn ($t\in\mathbb{N}%
_{0}\mathbb{=}\left\{  0,1,2,3,...\right\}  $). We define the following
history sets:%
\begin{align*}
\text{histories of length }k  &  :H_{k}=\{h=(s_{0},s_{1},s_{2},...,s_{k-1}%
)\},\\
\text{histories of finite length}  &  :H_{\ast}=\cup_{k=1}^{\infty}H_{k},\\
\text{histories of infinite length}  &  :H_{\infty}=\{h=(s_{0},s_{1}%
,...,s_{t},...)\}.
\end{align*}
Histories of infinite length can be further partitioned as $H_{\infty}%
=H_{C}\cup H_{NC}$ where
\begin{align*}
\text{histories where capture occurs}  &  :H_{C}=\{h=\left(  s_{0}%
,s_{1},...\right)  \in H_{\infty}:\exists s_{t}\in S_{C}\}\text{,}\\
\text{histories where the robber evades indefinitely}  &  :H_{NC}=\{h=\left(
s_{0},s_{1},...\right)  \in H_{\infty}:\nexists s_{t}\in S_{C}\}.
\end{align*}
Furthermore, we define the \emph{capture time of history }$h\in H_{\infty}$ as%
\[
T_{C}(h)=\left\{
\begin{array}
[c]{cc}%
\min\left\{  t:s_{t}\in S_{C}\right\}  & \text{if }h\in H_{C}\\
\infty & \text{if }h\in H_{NC}\text{.}%
\end{array}
\right.
\]
We will often use the simpler notation $T_{C}$, when the history is clear from
the context. Now consider the following cases.

\begin{enumerate}
\item If $T_{C}=0$ then the initial state is a capture state and $s_{t}=\tau$
for every $t\in\mathbb{N}=\left\{  1,2,...\right\}  $.

\item If $0<T_{C}<\infty$ then:

\begin{enumerate}
\item at the $0$-th turn the game starts at some preassigned state $s_{0}\in
S_{NC}$;

\item at the $t$-th turn (for $t\in\left\{  1,2,...,T_{C}-1\right\}  $), the
game moves to some state $s_{t}\in S_{NC}$;

\item at the $T_{C}$-th turn the game moves to some capture state $s_{T_{C}%
}\in S_{C}$;

\item at the $\left(  T_{C}+1\right)  $-th turn the game moves to the terminal
state and stays there for all subsequent turns (for every $t>T_{C}$,
$s_{t}=\tau$ and the game effectively ends at time $T_{C}$).\ 
\end{enumerate}

\item If $T_{C}=\infty$ then $s_{t}\in S_{NC}$ for every $t\in\mathbb{N}_{0}$.
\end{enumerate}

\noindent The case $s_{0}=\tau$ is uninteresting and hence excluded from consideration.

In Game Theory a \emph{strategy} is a rule which prescribes (perhaps
probabilistically) a player's next move in every instance he might have to
move. A \emph{pure, or deterministic }strategy is a function $\sigma
^{n}:H_{\ast}\rightarrow A^{n}$ which assigns a move in the player's action
set to each finite-length history. That is,%
\[
\forall h=(s_{0},s_{1},s_{2},...,s_{t})\in H_{\ast}\text{ }\exists a^{n}\in
A^{n}(s_{t}):\sigma^{n}\left(  h\right)  =a^{n}\text{.}%
\]
It is well known that there is no loss of generality in assuming that each
player at the start of the game selects a \emph{deterministic} strategy
$\sigma^{n}$\emph{ }which determines all his subsequent moves (this is so
because $\Gamma_{3}\left(  G\right)  $\ is a game of \emph{perfect
information}, i.e., at each turn only one player moves and he knows all the
preceding moves). We will only consider \emph{legal} strategies, that is
strategies which never produce moves outside the player's closed neighborhood.

We call a strategy $\sigma^{n}$ \emph{positional }(or \emph{Markovian}
\emph{stationary}) if the next move depends only on the current state of the
game (but not on previous states or current time). That is,%
\[
\forall h=(s_{0},s_{1},s_{2},...,s_{t})\in H_{\ast}:\sigma^{n}\left(
h\right)  =\sigma^{n}\left(  s_{t}\right)  \text{.}%
\]

A \emph{strategy profile} is a triple $\sigma=\left(  \sigma^{1},\sigma
^{2},\sigma^{3}\right)  $. We define $\sigma^{-n}=\left(  \sigma^{m}\right)
_{m\in I\backslash\left\{  n\right\}  }$; for instance, $\sigma^{-1}=\left(
\sigma^{2},\sigma^{3}\right)  $. If the strategies $\sigma^{1},\sigma
^{2},\sigma^{3}$ are deterministic, we call $\left(  \sigma^{1},\sigma
^{2},\sigma^{3}\right)  $ a \emph{deterministic profile}. A \emph{positional}
\emph{profile} is defined analogously. If the deterministic profile $\left(
\sigma^{1},\sigma^{2},\sigma^{3}\right)  $ applied to the game $\Gamma
_{3}\left(  G|s_{0},\gamma,\varepsilon\right)  $ results (resp. does not
result)\ in a capture, we call $\left(  \sigma^{1},\sigma^{2},\sigma
^{3}\right)  $ a \emph{capturing} (resp. \emph{non-capturing}) profile in
$\Gamma_{3}\left(  G|s_{0},\gamma,\varepsilon\right)  $.

\subsection{Payoffs\label{sec0204}}

To complete the description of $\Gamma_{3}\left(  G\right)  $, we must define
payoff functions for all players. Each player will try to maximize his payoff;
the payoffs will encapsulate the following facts.

\begin{enumerate}
\item The longer the capture time, the less the cops gain and the less the
robber loses.

\item The capturing cop gains at least as much as the non-capturing one.
\end{enumerate}

Each player's \emph{payoff function} depends, in general, on the entire game
history; but, since a history is fully determined by the initial position
$s_{0}=\left(  x^{1},x^{2},x^{3},p\right)  $ and the strategy profile
$\sigma=\left(  \sigma^{1},\sigma^{2},\sigma^{3}\right)  $, we will write the
$n$-th player's payoff in any one of the equivalent forms $Q^{n}\left(
s_{0},s_{1},s_{2},...\right)  $, $Q^{n}\left(  s_{0},\sigma\right)  $ and
$Q^{n}\left(  s_{0},\sigma^{1},\sigma^{2},\sigma^{3}\right)  $.

Fix a constant $\varepsilon\in\left[  0,\frac{1}{2}\right]  $ and define
\emph{turn payoffs} as follows. For $n\in\left\{  1,2\right\}  $, $C_{n}$'s
payoff is%
\begin{equation}
q^{n}\left(  s\right)  =\left\{
\begin{array}
[c]{rl}%
1-\varepsilon & \text{if }s\in S_{C}^{n},\\
\varepsilon & \text{if }s\in S_{C}^{m}\text{ with }n\neq m,\\
\frac{1}{2} & \text{if }s\in S_{C}^{12},\text{ }\\
0 & \text{else.}%
\end{array}
\right.  \label{eq020402}%
\end{equation}
$R$'s turn payoff is
\begin{equation}
q^{3}\left(  s\right)  =\left\{
\begin{array}
[c]{rl}%
-1 & \text{if }s\in S_{C},\\
0 & \text{else.}%
\end{array}
\right.  \label{eq020401}%
\end{equation}
Next, we fix a \emph{discounting factor} $\gamma\in(0,1)$ and, for
$n\in\left\{  1,2,3\right\}  $, we define the $n$-th player's \emph{total
payoff} function by%
\begin{equation}
Q^{n}\left(  s_{0},s_{1},s_{2},...\right)  =\sum_{t=0}^{\infty}\gamma^{t}%
q^{n}\left(  s_{t}\right)  , \label{eq020403}%
\end{equation}
In the rest of the paper we will assume, unless explicitly stated otherwise,
that
\[
\left(  \gamma,\varepsilon\right)  \in\Omega^{3}=\left(  0,1\right)
\times\left[  0,\frac{1}{2}\right]  .
\]

To understand the consequences of (\ref{eq020402})-(\ref{eq020403}), let us
first consider the case where (i)$\ T_{C}<\infty$ (finite capture time), (ii)
$\varepsilon<\frac{1}{2}$ and (iii) capture is effected by a single cop. Then
the players receive the following payoffs:

\begin{enumerate}
\item the capturing cop receives $\left(  1-\varepsilon\right)  \gamma^{T_{C}%
}$;

\item the non-capturing cop receives $\varepsilon\gamma^{T_{C}}$;

\item the robber receives $-\gamma^{T_{C}}$ (i.e., loses $\gamma^{T_{C}}$).
\end{enumerate}

\noindent We see that the total cops' reward is equal to the robber's loss,
but it is not divided equally between the two cops: since $\varepsilon
<\frac{1}{2}$, the capturing cop receives more than the other one (unless both
cops simultaneously capture the robber). Furthermore, since $\gamma\in\left(
0,1\right)  $, the robber's loss is a decreasing function of capture time
$T_{C}$ and he will play so as to maximize $T_{C}$. Conversely, the cops have
a motive to minimize $T_{C}$. But, since $\left(  1-\varepsilon\right)
\gamma^{T_{C}}>\varepsilon\gamma^{T_{C}}$, there is an additional motive for
each cop to be the capturing one; there will exist combinations of
$\gamma,\varepsilon$ and $T_{C}$ for which a cop may choose to delay the
robber capture in order to ensure that it is effected by himself (an example
will be given in Section \ref{sec0403}).

Let us also consider briefly several additional scenaria which are obtained
for particular values of $T_{C},\gamma,\varepsilon$.

\begin{enumerate}
\item If the robber can avoid capture \emph{ad infinitum}, i.e., if
$T_{C}=\infty$, then all players receive zero payoff. Clearly this is the best
outcome for $R$.

\item Since a single player moves at each turn, it is only possible to have a
\textquotedblleft double capture\textquotedblright\ if $R$, on his turn, moves
into a vertex which is occupied by both $C_{1}$ and $C_{2}$. While the rules
of SCAR do not exclude this possibility, clearly it can only result if $R$
plays suboptimally. In this case each cop will receive equal payoff of
$\gamma^{T_{C}}/2$.

\item When $\varepsilon=\frac{1}{2}$ (and for any $\gamma\in\left(
0,1\right)  $) each cop receives the same payoff whether he captures $R$ or
not; hence one might expect the two cops to collaborate to effect capture in
the shortest possible time as in classic CR\ played by two cops against one
robber; however this is not always the case, as we shall see in Section
\ref{sec04}.
\end{enumerate}

The definition of payoffs completes the description of $\Gamma_{3}\left(
G|s,\gamma,\varepsilon\right)  $; it is easily recognized that it is a
\emph{discounted stochastic game} \cite{filar1996}: (i) it consists of a
sequence of one-shot games, where the game played at any time depends on the
previous game played and the actions of the players; (ii)\ the total payoff is
the time-discounted sum of the one-shot games payoffs.

\subsection{Equilibria\label{sec0205}}

In the classic CR\ game, a basic question is the existence of \emph{winning}
and/or \emph{(time)\ optimal} strategies. As we will see later, SCAR\ does not
necessarily possess optimal strategies; but one can still look for
\emph{equilibrium strategy profiles}, in which no player has a motive to
\emph{unilaterally} change his strategy. The prevalent definition of
equilibrium is the one due to Nash \cite{nash1950}, which we now present in
the general context of $N$-player stochastic games; the application to
three-player SCAR\ (and $N$-player SCAR, as we will see in Section
\ref{sec05}) is immediate.

Consider an $N$-player perfect-information stochastic game starting at state
$s_{0}$. When the players use the (deterministic) strategy profile
$\sigma=\left(  \sigma^{1},\sigma^{2},...,\sigma^{N}\right)  $ they receive
(total) payoffs $Q^{1}\left(  s_{0},\sigma\right)  $, $Q^{2}\left(
s_{0},\sigma\right)  $, ..., $Q^{N}\left(  s_{0},\sigma\right)  $. We say that
$\sigma_{\ast}=\left(  \sigma_{\ast}^{1},\sigma_{\ast}^{2},...,\sigma_{\ast
}^{N}\right)  $ is a \emph{Nash equilibrium} (NE)\ iff%
\begin{equation}
\forall n,\forall\sigma^{n}:Q^{n}\left(  s_{0},\sigma_{\ast}\right)  \geq
Q^{n}\left(  s_{0},\sigma^{n},\sigma_{\ast}^{-n}\right)  . \label{eq03001}%
\end{equation}
What (\ref{eq03001}) says is that, when the rest of the players stick to their
equilibrium strategies, no player can improve his payoff by \emph{unilaterally
}changing his own; for example, if players $2,3,...,N$ play $\sigma_{\ast
}^{-1}=\left(  \sigma_{\ast}^{2},...,\sigma_{\ast}^{N}\right)  $, then the
first player cannot increase his payoff by switching from $\sigma_{\ast}^{1}$
to some other $\sigma^{1}$.\footnote{The definition of NE can be extended to
games of \emph{non}-perfect information, provided the $\sigma^{n}$'s are
understood as \emph{probabilistic} strategies and the $Q^{n}$'s as
\emph{expected} payoffs. We will not need these generalizations in the current
paper.} The following points must be emphasized.

\begin{enumerate}
\item A game may possess no NE, or exactly one, or more than one.

\item A NE is a \emph{strategy} profile; different NE may yield the same
\emph{payoffs} to the players.

\item Different NE may yield different payoffs. The fact that $\sigma_{\ast}$
is a NE does not imply that the corresponding payoff is the best a player can
achieve; if \emph{more than one }players change their strategies, they may
achieve better payoffs than the ones implied by a NE. In other words, a NE is
not necessarily an optimal solution.
\end{enumerate}

\section{Modified CR from a Game Theoretic Point of View\label{sec03}}

Before embarking on the study of SCAR, it will be useful to present a game
theoretic formulation of the following, modified CR game.

\begin{enumerate}
\item The game is played \ by two players:\ the \emph{cop player} controls
$N-1$ cop \emph{tokens} (with $N\geq2$) and the \emph{robber player} controls
a single robber token.

\item States, movement rules, histories and capture time are defined in
exactly the same way as those of SCAR.

\item The same is true for strategies except for the fact that the cop
player's strategy is of the form $\left(  \sigma^{1},...,\sigma^{N-1}\right)
$, i.e., it contains one strategy for each of his tokens.

\item The cop (resp. robber)\ player's payoff is $\gamma^{T_{C}}$ (resp.
$-\gamma^{T_{C}}$); if $T_{C}=\infty$, both players receive zero payoff.
\end{enumerate}

\noindent It can be easily seen that this is a \emph{two-player, zero-sum,
discounted }stochastic game. It only differs from the classic CR game, played
with $N-1$ cops and one robber, in the following. \ 

\begin{enumerate}
\item In modified CR, the cop (resp. robber)\ player tries to maximize (resp.
minimize)\ $\gamma^{T_{C}}$; this is obviously equivalent to classic CR, where
the cop (resp. robber)\ player tries to minimize (resp. maximize)\ $T_{C}$.

\item In modified CR, time is counted in turns, while in classic CR\ it is
counted in \emph{rounds}, where each round consists of one move for each
player. This is roughly equivalent to a rescaling of time by the factor $1/N$.

\item In classic CR, the players \emph{select} their initial positions, while
in modified CR the initial position is \emph{predetermined}. It is easy to
recover this aspect of classic CR\ by adding to modified CR a
\textquotedblleft placement turn\textquotedblright\ for each player; this
change would have no major impact on the essential features of the game such
as the existence of value and optimal strategies.
\end{enumerate}

Using standard results \cite[Section 4.3]{filar1996} we see that modified CR
has a \emph{value}, which in fact is proportional to the logarithm of capture
time, and both players have\emph{ optimal positional }strategies.

As already mentioned, we can assume that the cop (resp. robber)\ player tries
to minimize (resp. maximize) the capture time. Let $T_{N}\left(
G|s_{0}\right)  $ be the capture time when the cop player has $N-1$ tokens and
both players play optimally (note the dependence on the initial position
$s_{0}$). Hence, if $T_{N}\left(  G|s_{0}\right)  $ is finite (resp.
infinite)\ then the cop's (resp. robber's)\ optimal strategies are winning
(for the respective player).

We denote the maximum value of optimal capture time over all starting
positions by%
\[
T_{N}\left(  G\right)  =\max_{s_{0}}T_{N}\left(  G|s_{0}\right)  .
\]
Assuming the game is played with $N-1$ cops and one robber, it is easily seen that:

\begin{enumerate}
\item if $T_{N}\left(  G\right)  <\infty$, then the cop player has an
(optimal) winning strategy for \emph{every} starting position;

\item for every starting position, assuming subsequent optimal play by the cop
player (but not necessarily by the robber player), the capture time is less
than or equal to $T_{N}\left(  G\right)  $.
\end{enumerate}

The \emph{cop number} of a graph $G$ is denoted by $c\left(  G\right)  $ and
defined to be the smallest number of cop tokens which guarantees finite
capture time (i.e., one less than the smallest $N$ for which $T_{N}\left(
G\right)  <\infty$). We call $G$ \emph{cop-win} if any of the following
equivalent conditions holds:

\begin{enumerate}
\item capture time is finite for CR on $G$ with one optimally played cop token;

\item $T_{2}\left(  G\right)  $ $<\infty$;

\item $c\left(  G\right)  =1$.
\end{enumerate}

In light of the above remarks, it is clear that all essential aspects of the
classic CR are captured by \ the modified CR. \emph{In the rest of the paper,
the term \textquotedblleft CR\ game\textquotedblright\ will denote the
modified game} (unless we specifically use the term \textquotedblleft classic
CR\textquotedblright).

Finally note that, when $N=3$ (two cops vs. one robber) the modified two-cops
CR\ is \emph{path-equivalent} to $\Gamma_{3}\left(  G\right)  $, by which we
mean the following. Take any strategies $\sigma^{1},\sigma^{2},\sigma^{3}$
and, starting from the same position $s_{0}$, apply them to:

\begin{enumerate}
\item $\Gamma_{3}\left(  G\right)  $, with $\sigma^{n}$ being the strategy of
the $n$-th player;

\item the modified two-cops CR, with $\sigma^{1}$\ (resp. $\sigma^{2}$)\ being
the strategy the cop player uses for his first (resp. second) token, and
$\sigma^{3}$ being the strategy the robber player uses.
\end{enumerate}

\noindent Then it is clear that the same infinite history will be produced in
both games.

\section{Three-Player SCAR\label{sec04}}

In this section we study $\Gamma_{3}\left(  G\right)  $ and prove that it
always has both positional and non-positional NE; we also study the connection
between classic cop number and existence of capturing NE.

\subsection{Existence of a Positional NE\label{sec0401}}

First we prove the existence of at least one positional NE\ in
\emph{deterministic} strategies for $\Gamma_{3}\left(  G\right)  $.

\begin{theorem}
\label{prp0401}For every graph $G$ and for every $s_{0}\in S$, $\left(
\gamma,\varepsilon\right)  \in\Omega^{3}$ the game $\Gamma_{3}\left(
G|s_{0},\gamma,\varepsilon\right)  $ has a deterministic positional NE. More
specifically, there exists a deterministic positional profile $\sigma_{\ast
}=\left(  \sigma_{\ast}^{1},\sigma_{\ast}^{2},\sigma_{\ast}^{3}\right)  $ such
that%
\begin{equation}
\forall n,\forall s_{0},\forall\sigma^{n}:Q^{n}\left(  s_{0},\sigma_{\ast}%
^{n},\sigma_{\ast}^{-n}\right)  \geq Q^{n}\left(  s_{0},\sigma^{n}%
,\sigma_{\ast}^{-n}\right)  . \label{eq02011}%
\end{equation}
For every $s$ and $n$ let $u^{n}\left(  s\right)  =Q^{n}\left(  s,\sigma
_{\ast}\right)  $. Then the \ following equations are satisfied%
\begin{align}
\forall n,\forall s  &  \in S^{n}:\sigma_{\ast}^{n}\left(  s\right)  =\arg
\max_{a^{n}\in A^{n}\left(  s\right)  }\left[  q^{n}\left(  s\right)  +\gamma
u^{n}\left(  \mathbf{T}\left(  s,a^{n}\right)  \right)  \right]
,\label{eq02012}\\
\forall n,m,\forall s  &  \in S^{n}:u^{m}\left(  s\right)  =q^{m}\left(
s\right)  +\gamma u^{m}\left(  \mathbf{T}\left(  s,\sigma_{\ast}^{n}\left(
s\right)  \right)  \right)  . \label{eq02013}%
\end{align}

\end{theorem}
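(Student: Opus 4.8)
The plan is to build the positional equilibrium by backward-induction on a finite ``value iteration'' whose fixed point yields the strategies $\sigma_\ast^n$ and the value vectors $u^n$. The key structural observation is that, because a single player moves per turn and the turn payoffs $q^n$ are nonzero only on the capture set $S_C$ (after which the game is absorbed at $\tau$ with zero payoff forever), the total payoff $Q^n(s_0,\sigma)$ equals $\gamma^{T_C}q^n(s_{T_C})$ for a capturing profile and $0$ for a non-capturing one. So the only thing that matters is \emph{which} capture state is reached and \emph{when}. This lets me reduce the existence of a positional NE to a finite combinatorial problem on the state graph $S$.

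The main construction proceeds as follows. First I would note that $\Gamma_3(G)$ is a finite perfect-information discounted stochastic game that is, moreover, \emph{deterministic} and \emph{absorbing} (once $S_C$ is hit the game is over). I would define, on the set of states from which the cop player can force capture in the path-equivalent modified two-cops CR game of Section~\ref{sec03}, a rank function $\rho(s)$ = the optimal capture time $T_3(G|s)$ from $s$ (finite exactly on this set, by the value existence for modified CR). Then I construct $\sigma_\ast$ by induction on $\rho$: on capture states and $\tau$ the only legal move is $\lambda$; on a non-capture state $s\in S^n$ with finite rank, having already defined $\sigma_\ast$ (hence $u^1,u^2,u^3$) on all states of strictly smaller rank reachable in one move, I let the mover $n$ pick $\sigma_\ast^n(s)$ to be a maximizer of $q^n(s)+\gamma u^n(\mathbf{T}(s,a^n))$ over $a^n\in A^n(s)$, breaking ties so as to keep $\rho$ decreasing (this is possible because a cop mover always has a move that strictly decreases $\rho$, and among those he picks the best for himself; the robber mover's moves all weakly increase capture time along the modified-CR-optimal play, and I choose his move consistently with the CR-optimal robber strategy while again maximizing his own $u^3$ among equally good options). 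On states of infinite rank — those from which the robber evades forever in modified CR — I set all $u^n(s)=0$ and let $\sigma_\ast^3$ follow a positional evading strategy (which exists by optimality in modified CR) and let the cops play arbitrarily; then no history starting there ever reaches $S_C$, so every player already gets his best possible payoff $0$ and cannot do better. Equations~\eqref{eq02012}--\eqref{eq02013} hold by construction, and \eqref{eq02011} is the NE condition: a unilateral deviation by player $n$ faces the fixed positional strategies of the others, so from each visited state the usual one-step optimality inequality $Q^n(s,\sigma^n,\sigma_\ast^{-n})\le q^n(s)+\gamma u^n(\mathbf{T}(s,\cdot))\le u^n(s)$ propagates by induction along the (now deterministic) deviation history, using $\gamma<1$ to handle the infinite-horizon tail.

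The main obstacle — and the point needing the most care — is the \emph{robber's} incentive at non-capture states of finite rank, and the possibility that a selfish cop prefers to \emph{delay} capture. For the robber, I must verify that the move dictated by the modified-CR-optimal strategy is also a best response against the cops' fixed $\sigma_\ast$; this is not automatic, because the cops here are not playing the CR-optimal team strategy but their own selfish best responses, so I need to argue that $u^3(s)=-\gamma^{\rho(s)}$ on the finite-rank set with $\sigma_\ast$, i.e.\ that the selfish cops still force capture at exactly the CR-optimal time from every such state, and that the robber cannot push capture time higher. The delicate subcase is exactly the one flagged in the paper's discussion: a cop might, at some $s\in S^1$ or $S^2$, choose a longer route to guarantee being the capturer. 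I would handle this by defining $u^n$ self-consistently through the value-iteration fixed point rather than presupposing $\rho$-monotone play for the cops, and showing the iteration converges in finitely many steps because the state space is finite and $\gamma<1$ gives a contraction on the bounded value space. Once the fixed point is in hand, \eqref{eq02012}--\eqref{eq02013} are definitional and \eqref{eq02011} follows from the standard ``one-shot deviation principle'' for discounted games, which applies since $\gamma<1$ makes payoffs continuous at infinity; I would state this principle explicitly and cite \cite{filar1996} for it.
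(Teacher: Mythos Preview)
Your approach is genuinely different from the paper's and, as written, has a real gap. The paper does not construct anything: it invokes Fink's theorem \cite{fink1963} (every $N$-player discounted stochastic game has a stationary NE in behavioral strategies), then observes that because $\Gamma_3(G)$ has perfect information and deterministic transitions, Fink's optimality equations collapse --- at each $s\in S^n$ only player $n$ has a nontrivial action set and the transition is deterministic, so the mixture over actions in Fink's equations is dominated by a point mass and the probabilistic NE is automatically deterministic. Equations (\ref{eq02012})--(\ref{eq02013}) are then simply Fink's equations specialized to this setting.

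Your rank-based induction breaks for exactly the reason you yourself flag: NE play need not follow the CR-optimal schedule (Example~\ref{prp0406} in the paper is precisely this phenomenon), so $\rho(s)=T_3(G|s)$ is not monotone along NE play and cannot serve as an induction parameter. Your fallback to value iteration also fails as stated. The operator that, at $s\in S^n$, first lets the mover pick $\sigma(s)\in\arg\max_a[q^n(s)+\gamma u^n(\mathbf T(s,a))]$ and then sets $\tilde u^m(s)=q^m(s)+\gamma u^m(\mathbf T(s,\sigma(s)))$ for all $m$, is \emph{not} a $\gamma$-contraction on the product value space. The update of $u^m$ for $m\neq n$ depends on $\sigma(s)$, which depends on $u^n$ through an $\arg\max$; an arbitrarily small perturbation of $u^n$ can flip $\sigma(s)$ from $a$ to $b$ and move $\tilde u^m(s)$ by $\gamma\,|u^m(\mathbf T(s,a))-u^m(\mathbf T(s,b))|$, which is not controlled by the perturbation size. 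This is the standard obstruction to convergence of best-response dynamics in $N$-player games, and it means neither the contraction claim nor the finite-step convergence claim is correct; you have not established existence of the fixed point $(u^1,u^2,u^3,\sigma_\ast)$ on which the rest of your argument (the one-shot deviation principle, which is fine once the fixed point is in hand) depends. The shortest repair is the paper's: cite Fink for existence, then specialize to perfect information to get determinism.
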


\begin{proof}
Fink has proved in \cite{fink1963} that every $N$-player discounted stochastic
game has a positional NE\ in \emph{probabilistic} positional strategies.
Fink's results apply to the general game, with \emph{simultaneous} moves by
all players and probabilistic strategies and state transitions; he proves that
the following equations\footnote{We have adapted Fink's notation to our own,
so as to fit the $\Gamma_{3}\left(  G\right)  $ context.} must be satisfied at
equilibrium for all $m\in\left\{  1,2,3\right\}  $ and $s\in S$:
\begin{equation}
\mathfrak{u}^{m}\left(  s\right)  =\max_{\pi^{m}\left(  s\right)  }\sum
_{a^{1}\in A^{1}\left(  s\right)  }\sum_{a^{2}\in A^{2}\left(  s\right)  }%
\sum_{a^{3}\in A^{3}\left(  s\right)  }\pi^{1}\left(  a^{1}|s\right)  \pi
^{2}\left(  a^{2}|s\right)  \pi^{3}\left(  a^{3}|s\right)  \left[
q^{m}\left(  s\right)  +\gamma\sum_{s^{\prime}}\Pr\left(  s^{\prime}%
|s,a^{1},a^{2},a^{3}\right)  \mathfrak{u}^{m}\left(  s^{\prime}\right)
\right]  , \label{eq02015}%
\end{equation}
where

\begin{enumerate}
\item $\mathfrak{u}^{m}\left(  s\right)  $ is the expected value of
$u^{m}\left(  s\right)  $;

\item $\pi^{m}(a^{j}|s)$ is the probability that, given the current state is
$s$, the $m$-th player plays $a^{j}$;

\item $\pi^{m}(s)=\left(  \pi^{m}(a^{m}|s)\right)  _{a^{m}\in A^{m}\left(
s\right)  }$ is the vector of all probabilities (i.e., for all available actions);

\item $\Pr\left(  s^{\prime}|s,a^{1},a^{2},a^{3}\right)  $ is the probability
that the next state is $s^{\prime}$, given the\ current state is $s$ and the
players actions $a^{1},a^{2},a^{3}$.
\end{enumerate}

\noindent Now choose any $n$ and any $s\in S^{n}$. For all $m\neq n$, the
$m$-th player has a single move, i.e., we have $A^{m}\left(  s\right)
=\left\{  a^{m}\right\}  $, and so $\pi^{m}(a^{m}|s)=1$. Also, since
transitions are deterministic,%
\[
\sum_{s^{\prime}}\Pr\left(  s^{\prime}|s,a^{1},a^{2},a^{3}\right)
\mathfrak{u}^{n}\left(  s^{\prime}\right)  =\mathfrak{u}^{n}\left(
\mathbf{T}\left(  s,a^{n}\right)  \right)  .
\]
Hence, for $m=n$, (\ref{eq02015})\ becomes
\begin{equation}
\mathfrak{u}^{n}\left(  s\right)  =\max_{\pi^{n}\left(  s\right)  }\sum
_{a^{n}\in A^{n}\left(  s\right)  }\pi^{n}\left(  a^{n}|s\right)  \left[
q^{n}\left(  s\right)  +\gamma\mathfrak{u}^{n}\left(  \mathbf{T}\left(
s,a^{n}\right)  \right)  \right]  . \label{eq02016}%
\end{equation}
Furthermore let us define $\sigma_{\ast}^{n}\left(  s\right)  $ (for the
specific $s$ and $n$) by
\begin{equation}
\sigma_{\ast}^{n}\left(  s\right)  =\arg\max_{a^{n}\in A^{n}\left(  s\right)
}\left[  q^{n}\left(  s\right)  +\gamma\mathfrak{u}^{n}\left(  \mathbf{T}%
\left(  s,a^{n}\right)  \right)  \right]  . \label{eq02017}%
\end{equation}
If more than one state satisfy (\ref{eq02017}), we set $\sigma_{\ast}%
^{n}\left(  s\right)  $ to one of these states arbitrarily. Then, to maximize
the sum in (\ref{eq02016}) the $n$-th player must set $\pi^{n}\left(
\sigma_{\ast}^{n}\left(  s\right)  |s\right)  =1$ and $\pi^{n}\left(
a^{n}|s\right)  =0$ for all $a^{n}\neq\sigma_{\ast}^{n}\left(  s\right)  $.
Since this is true for all states and players (i.e., every player can, without
loss, use deterministic strategies) we also have $\mathfrak{u}^{n}\left(
s\right)  =u^{n}\left(  s\right)  $. Hence (\ref{eq02016}) becomes
\begin{equation}
u^{n}\left(  s\right)  =\max_{a^{n}\in A^{n}\left(  s\right)  }\left[
q^{n}\left(  s\right)  +\gamma u^{n}\left(  \mathbf{T}\left(  s,a^{n}\right)
\right)  \right]  =q^{n}\left(  s\right)  +\gamma u^{n}\left(  \mathbf{T}%
\left(  s,\sigma_{\ast}^{n}\left(  s\right)  \right)  \right)  .
\label{eq02018}%
\end{equation}
For $m\neq n$, the $m$-th player has no choice of action (i.e., $\sigma_{\ast
}^{m}\left(  s\right)  $ is the unique element of $A^{m}\left(  s\right)  $)
and (\ref{eq02016}) becomes
\begin{equation}
u^{m}\left(  s\right)  =q^{m}\left(  s\right)  +\gamma u^{m}\left(
\mathbf{T}\left(  s,\sigma_{\ast}^{n}\left(  s\right)  \right)  \right)  .
\label{eq02019}%
\end{equation}
We recognize that (\ref{eq02017})-(\ref{eq02019}) are (\ref{eq02012}%
)-(\ref{eq02013}). Also, (\ref{eq02017}) defines $\sigma_{\ast}^{n}\left(
s\right)  $ for every $n$ and $s$ and so we have obtained the required
deterministic positional strategies $\sigma_{\ast}=\left(  \sigma_{\ast}%
^{1},\sigma_{\ast}^{2},\sigma_{\ast}^{3}\right)  $.
\end{proof}

Note that the initial state $s_{0}$ plays no special role in the system
(\ref{eq02012})-(\ref{eq02013}). In other words, using the notation $u\left(
s\right)  =\left(  u^{1}\left(  s\right)  ,u^{2}\left(  s\right)
,u^{3}\left(  s\right)  \right)  $ and $\mathbf{u}=\left(  u\left(  s\right)
\right)  _{s\in S}$ (with the $G$ dependence suppressed) we see that
$\mathbf{u}$ and $\sigma_{\ast}$ are the same for every starting position
$s_{0}$ (i.e., for every $\Gamma\left(  G|s_{0}\right)  $). Also note that,
because of the structure of the payoffs, if $\Gamma\left(  G|s_{0}\right)  $
at some time $t_{1}$ reaches state $s_{1}$, the \textquotedblleft
remainder\textquotedblright\ game which is played from $t_{1}$ onward is
equivalent (modulo a payoff rescaling) to $\Gamma\left(  G|s_{1}\right)  $.
From these observations follows that, if the players use $\sigma_{\ast}$ in
$\Gamma\left(  G|s_{0}\right)  $ and at some time $t_{1}>0$ the game reaches
$s_{1}$, then $\sigma_{\ast}$ is an positional NE for both $\Gamma\left(
G|s_{0}\right)  $ and $\Gamma\left(  G|s_{1}\right)  $; the payoffs to the
players are $u\left(  s_{0}\right)  $ in the former and $u\left(
s_{1}\right)  $ in the latter.

Let us also note that Theorem \ref{prp0401} in fact holds for any
$\varepsilon\in\left[  0,1\right]  $; we have confined attention to the case
$\varepsilon\in\left[  0,\frac{1}{2}\right]  $ to represent the intuition that
the capturing cop's reward should be at least as large as that of the
non-capturing one's.

\subsection{Existence of Non-positional NE\label{sec0402}}

Next we prove that $\Gamma_{3}\left(  G|s_{0}\right)  $ also has deterministic
NE which are not positional. To this end we follow an approach which has
previously been used for several other $N$-player games of perfect information
\cite{boros2009,chatterjee2003,thuijsman1997}, namely the use of \emph{threat
strategies}.

We start by introducing, for $n\in\left\{  1,2,3\right\}  $, the auxiliary
games $\Gamma_{3}^{n}\left(  G|s_{0}\right)  $; these are two-player,
zero-sum, perfect-information games with movement sequence, states, action
sets, capturing conditions etc. being the same as in $\Gamma_{3}\left(
G|s_{0}\right)  $. However, in $\Gamma_{3}^{n}\left(  G|s_{0}\right)  $ player
$P_{n}$ controls token $n\ $and has payoff $Q^{n}$; and player $P_{-n}$
controls tokens $\left\{  1,2,3\right\}  \backslash\left\{  n\right\}  $ and
has payoff $-Q^{n}$. More specifically, the following hold.

\begin{enumerate}
\item $\Gamma_{3}^{3}\left(  G|s_{0}\right)  $ (played on $G$ with initial
state $s_{0}$) is the game where $P_{3}$, controlling $R$, plays against
$P_{-3}$, controlling $C_{1}$ and $C_{2}$; $P_{-3}$ has reward (and $P_{3}$
has penalty) equal to%
\begin{align*}
\gamma^{T_{C}}  &  :\text{ when either }C_{1}\text{ or }C_{2}\text{ captures
}R,\\
0  &  :\text{ when }R\text{ is not captured.}%
\end{align*}
It is easily seen that $\Gamma_{3}^{3}\left(  G|s_{0}\right)  $ is the
two-cops, one-robber modified CR\ game.

\item $\Gamma_{3}^{1}\left(  G|s_{0}\right)  $ (played on $G$ with initial
state $s_{0}$) is the game in which $P_{1}$, controlling $C_{1}$, plays
against $P_{-1}$, controlling $R$ and a \textquotedblleft
robber-friendly\textquotedblright\ $C_{2}$; $P_{1}$ receives reward (and
$P_{-1}$ receives penalty) equal to%
\begin{align*}
\left(  1-\varepsilon\right)  \gamma^{T_{C}}  &  :\text{when }C_{1}\text{
captures }R,\\
\varepsilon\gamma^{T_{C}}  &  :\text{when }C_{2}\text{ captures }R,\\
0  &  :\text{when }R\text{ is not captured.}%
\end{align*}

\item $\Gamma_{3}^{2}\left(  G|s_{0}\right)  $ is similar to $\Gamma_{3}%
^{1}\left(  G|s_{0}\right)  $, with the roles of $C_{1}$ and $C_{2}$ interchanged.
\end{enumerate}

\noindent It can be seen that in $\Gamma_{3}^{1}\left(  G|s\right)  $ an
optimal action plan for $P_{-1}$ is

\begin{enumerate}
\item when $c(G)=1$: $C_{2}$ and $R$ meet in the longest possible time but
before $R$ is caught by $C_{1}$;

\item when $c\left(  G\right)  >1$ and $C_{1}$ cannot alone capture $R$ (when
the game starts at $s_{0}$): $C_{2}$ always avoids $R$ and $R$ always avoids
both $C_{1}$ and $C_{2}$.
\end{enumerate}

Using the terminology and results of \cite{filar1996} we see that: for every
$n\in\left\{  1,2,3\right\}  $ and $s_{0}\in S$, $\Gamma_{3}^{n}%
(G|s_{0},\gamma,\varepsilon)$ is a two-player, zero-sum discounted stochastic
game with perfect information. Hence standard results \cite[Section
4.3]{filar1996} give the following.

\begin{lemma}
\label{prp0403}For each $n\in\left\{  1,2,3\right\}  $, $s_{0}\in S$ and
$\left(  \gamma,\varepsilon\right)  \in\Omega^{3}$, the game $\Gamma_{3}%
^{n}(G|s_{0},\gamma,\varepsilon)$ has a value and the players have optimal
deterministic positional strategies.
\end{lemma}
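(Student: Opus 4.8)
The plan is to invoke the theory of two-player zero-sum discounted stochastic games as developed in Shapley and surveyed in \cite[Section~4.3]{filar1996}. The key observation is that each auxiliary game $\Gamma_3^n(G|s_0,\gamma,\varepsilon)$ fits exactly into that framework: it has a finite state set $S$, finite action sets $A^m(s)$ at every state, deterministic transitions (hence a special case of the stochastic transition kernels allowed by the general theory), and the total payoff is a $\gamma$-discounted sum of bounded per-turn payoffs. Crucially, in $\Gamma_3^n$ the per-turn payoff is zero-sum: $P_n$ receives $q^n(s)$ and $P_{-n}$ receives $-q^n(s)$ (the coalition controlling the remaining two tokens being treated as a single player). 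So the main task is just to verify these structural hypotheses and then cite the standard existence result.

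First I would fix $n\in\{1,2,3\}$ and the parameters, and describe $\Gamma_3^n$ formally as a two-player zero-sum discounted stochastic game: the ``maximizer'' is $P_n$, the ``minimizer'' is $P_{-n}$ (who simply moves the other two tokens, one of which is dummy-moving on any given turn as in $\Gamma_3$). Second I would check that all the Shapley hypotheses hold — finiteness of $S$, finiteness of each $A^m(s)$, boundedness of $q^n$ (indeed $|q^n(s)|\le 1$), and $\gamma\in(0,1)$ — and note that the transition function $\mathbf T$ is a degenerate (point-mass) stochastic kernel, so the deterministic setting is a genuine special case. Third I would invoke the standard theorem (Shapley's value-iteration / contraction argument, as in \cite[Section~4.3]{filar1996}): the game has a value $v^n(s)$ for every starting state $s$, characterized as the unique fixed point of the associated dynamic-programming (Shapley) operator, and both players possess optimal \emph{positional} strategies, which here can be taken deterministic because at each state only one player has a nontrivial choice (so a pure maximizer/minimizer of the relevant one-step lookahead always exists). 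Fourth I would note that perfect information is automatic here — at each turn exactly one token moves and the mover knows the full history — which is why optimal strategies can be taken deterministic rather than merely behavioral; this matches the perfect-information remark already made in the excerpt for $\Gamma_3$ itself.

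The only point requiring a little care — and the closest thing to an obstacle — is the bookkeeping that turns the three-token, one-mover-per-turn structure of $\Gamma_3^n$ into a literal two-player alternating (or simultaneous) stochastic game in Shapley's sense: one must be explicit that ``$P_{-n}$ controls tokens $\{1,2,3\}\setminus\{n\}$'' means $P_{-n}$'s action at a state $s$ is a pair of moves, one for each of its tokens, but (by the rules of $\Gamma_3$) at least one of those is forced to be the null/trivial move depending on whose turn it is, so effectively only one party ever faces a real decision at any state. Once this identification is spelled out, the action sets remain finite and the transition remains well defined, and the cited results apply verbatim. I would therefore keep the proof short: state the identification, check the four hypotheses in a sentence each, and cite \cite[Section~4.3]{filar1996} for the conclusion that $\Gamma_3^n(G|s_0,\gamma,\varepsilon)$ has a value and both players have optimal deterministic positional strategies, for every $n$, every $s_0$, and every $(\gamma,\varepsilon)\in\Omega^3$.
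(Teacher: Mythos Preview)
Your proposal is correct and takes essentially the same approach as the paper: the paper simply observes that each $\Gamma_3^n(G|s_0,\gamma,\varepsilon)$ is a two-player, zero-sum, discounted stochastic game with perfect information and then cites \cite[Section~4.3]{filar1996} for the existence of a value and optimal deterministic positional strategies. Your write-up is in fact more detailed than the paper's (which gives no proof beyond the citation), and your careful bookkeeping about how $P_{-n}$'s two-token control collapses to a single nontrivial move per turn is a helpful clarification.
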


Returning to $\Gamma_{3}\left(  G|s\right)  $, the threat strategies are as
follows. The $n$-th player plays the strategy which is optimal for $P_{n}$ in
$\Gamma_{3}^{n}\left(  G|s\right)  $, as long as the other players do the
same. If at some point player $m$ deviates\footnote{We say that a player
\textquotedblleft deviates\textquotedblright\ from a strategy if he plays a
move different from the one prescribed by this strategy; since the game has
perfect information, this deviation will be immediately detected by the other
players.} from the above, then the $n$-th player (with $n\in\left\{
1,2,3\right\}  \backslash\left\{  m\right\}  $)\ adopts the strategy which
$P_{-m}$ uses for the $n$-th token in $\Gamma_{3}^{m}\left(  G|s\right)  $. In
other words, the threat strategy $\overline{\sigma}^{n}$ for the $n$-th player
is \textquotedblleft composed\textquotedblright\ by strategies $\phi_{m}^{n}$
as follows:
\begin{equation}
\overline{\sigma}^{n}=\left\{
\begin{array}
[c]{ll}%
\phi_{n}^{n} & \text{as long as every player }m\in\left\{  1,2,3\right\}
\backslash n\text{ follows }\phi_{m}^{m}\text{; }\\
\phi_{m}^{n} & \text{as soon as some player }m\in\left\{  1,2,3\right\}
\backslash n\ \text{ \textquotedblleft deviates\textquotedblright\ from }%
\phi_{m}^{m}\text{,}%
\end{array}
\right.  \label{eqthrstrat01}%
\end{equation}
where, in the game $\Gamma_{3}^{n}(G|s_{0},\gamma,\varepsilon)$ :

\begin{enumerate}
\item $\phi_{n}^{n}$ is an optimal strategy of $P_{n}$ against $P_{-n}$;\ 

\item $\phi_{n}^{m}$ (for $m\neq n$)\ is an optimal strategy used (for the
$m$-th token) by $P_{-n}$ against $P_{n}$.
\end{enumerate}

\noindent Since the $\phi_{n}^{m}$'s are positional they do not depend on the
starting state $s_{0}$; in fact the same $\phi_{n}^{m}$ is optimal for every
$s_{0}$ and corresponding game $\Gamma_{3}^{n}\left(  G|s_{0}\right)  $. We
now show that, for every $s_{0}$, $\overline{\sigma}=(\overline{\sigma}%
^{1},\overline{\sigma}^{2},\overline{\sigma}^{3})$ is a NE of $\Gamma
_{3}\left(  G|s_{0}\right)  $.

\begin{theorem}
\label{prp0404}For every graph $G$, $\left(  \gamma,\varepsilon\right)
\in\Omega^{3}$ and $s_{0}\in S$ in the game $\Gamma_{3}\left(  G|s_{0}%
,\gamma,\varepsilon\right)  $ we have%
\begin{equation}
\forall n\in\left\{  1,2,3\right\}  ,\forall\sigma^{n}:Q^{n}\left(
s_{0},\overline{\sigma}^{1},\overline{\sigma}^{2},\overline{\sigma}%
^{3}\right)  \geq Q^{n}(s_{0},\sigma^{n},\overline{\sigma}^{-n}) \label{eq001}%
\end{equation}
where $\overline{\sigma}^{n}$ (for $n\in\left\{  1,2,...,N\right\}  $) is a
strategy of the form defined in (\ref{eqthrstrat01}).
\end{theorem}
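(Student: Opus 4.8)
The plan is to verify the Nash-equilibrium inequality \eqref{eq001} one player at a time, exploiting the fact that each threat strategy $\overline{\sigma}^n$ is built from the optimal strategies of the auxiliary zero-sum games $\Gamma_3^n(G|s_0)$, whose values and optimal positional strategies are guaranteed by Lemma \ref{prp0403}. Fix a player $n$ and suppose player $n$ unilaterally switches from $\overline{\sigma}^n$ to some $\sigma^n$, while players $m\neq n$ retain $\overline{\sigma}^m$. There are two cases. If player $n$ never deviates in an observable way (i.e.\ $\sigma^n$ coincides with $\phi_n^n$ on the realized history), then the realized play is exactly the one produced by $\overline{\sigma}$, and the payoff to $n$ is unchanged, so the inequality holds with equality. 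If player $n$ does deviate at some first turn $t_1$, then by the construction \eqref{eqthrstrat01} every other player $m$ switches at turn $t_1{+}1$ (or as soon as the move becomes theirs) to $\phi_n^m$, the optimal strategy $P_{-n}$ uses for the $m$-th token in $\Gamma_3^n(G|s_0)$. From turn $t_1$ onward, then, player $n$ is effectively playing $\Gamma_3^n$ against the coalition $P_{-n}$ that uses its optimal (positional) strategy.

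The key computation is then a comparison of three quantities: (a) $Q^n(s_0,\overline{\sigma})$, the payoff along the undeviated path, which equals the value $\mathrm{val}\,\Gamma_3^n(G|s_0)$ because when all players follow $\phi_m^m$ the play is the optimal play of $\Gamma_3^n$ from $s_0$; (b) the payoff $n$ receives after deviating, which is bounded above by $\mathrm{val}\,\Gamma_3^n(G|s_{t_1})$ discounted back to time $0$, since from $s_{t_1}$ the coalition plays its optimal strategy in $\Gamma_3^n$ and $n$ can do no better than the value of that game; and (c) a ``no-profitable-deviation-before-$t_1$'' estimate showing that the discounted value reached at $s_{t_1}$ along player $n$'s deviating line is at most the value $\mathrm{val}\,\Gamma_3^n(G|s_0)$ he was already getting. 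Here I would use the Bellman/shift structure noted right after Theorem \ref{prp0401}: the ``remainder game'' from any state $s_{t_1}$ is, modulo the payoff rescaling by $\gamma^{t_1}$, just $\Gamma_3^n(G|s_{t_1})$, so that the value function of $\Gamma_3^n$ satisfies the dynamic-programming recursion. Because player $n$'s own moves up to $t_1$ are among the moves available in $\Gamma_3^n$ as well (the auxiliary game has the same action sets), any path he steers toward $s_{t_1}$ can only reach a state whose $\Gamma_3^n$-value, discounted to time $0$, is $\le\mathrm{val}\,\Gamma_3^n(G|s_0)$ — otherwise $\phi_n^n$ would not have been optimal in $\Gamma_3^n(G|s_0)$ to begin with. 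Chaining these bounds gives $Q^n(s_0,\sigma^n,\overline{\sigma}^{-n})\le\mathrm{val}\,\Gamma_3^n(G|s_0)=Q^n(s_0,\overline{\sigma})$, which is \eqref{eq001}.

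The one subtlety to handle carefully is the \emph{timing of the punishment}. Between player $n$'s deviating move and the turn at which a given opponent $m$ next has a non-trivial move, $m$ is forced to make only the trivial null/stay move anyway, so the delay in switching from $\phi_m^m$ to $\phi_n^m$ costs the coalition nothing: the composite opponent strategy still realizes, from the deviation point on, a legitimate play of $P_{-n}$ in $\Gamma_3^n$. I would make this explicit by noting that $\Gamma_3(G)$ and $\Gamma_3^n(G)$ share the identical turn order and transition function $\mathbf{T}$, so the history after $t_1$ is a genuine $\Gamma_3^n$-history and the optimality guarantee of $\phi_n^m$ (against \emph{any} strategy of $P_n$, from \emph{any} state, by Lemma \ref{prp0403} and positionality) applies verbatim.

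The main obstacle I anticipate is step (c) — cleanly arguing that player $n$ cannot profit from the moves he makes \emph{before} triggering the punishment. The clean way is to observe that $\phi_n^n$ is optimal for $P_n$ in $\Gamma_3^n(G|s_0)$ and, by positionality together with the shift-invariance remark after Theorem \ref{prp0401}, optimal for $P_n$ in $\Gamma_3^n(G|s)$ for every reachable $s$; hence the $\Gamma_3^n$-value function $v^n(s):=\mathrm{val}\,\Gamma_3^n(G|s)$ satisfies, for every $s\in S^n$ and every $a^n\in A^n(s)$, the inequality $q^n(s)+\gamma\,v^n(\mathbf{T}(s,a^n))\le v^n(s)$ when the opponents are committed to their $\Gamma_3^n$-optimal play. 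Applying this inequality along player $n$'s own moves on the deviating path, and the plain definition of the value (best response is at most the value) at every state where the opponents already play $\phi_n^{-n}$, and summing the telescoping discounted turn-payoffs, yields the bound. I would present this as a short lemma — ``against the coalition's $\Gamma_3^n$-optimal strategy, no strategy of player $n$ earns more than $v^n(s_0)$'' — which is just the definition of the game value restated, and then conclude immediately.
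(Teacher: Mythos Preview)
Your overall architecture is right --- split into ``no observable deviation'' versus ``first deviation at time $t_1$'', and use the zero-sum games $\Gamma_3^n$ to bound the post-deviation continuation --- and this is exactly the skeleton of the paper's proof. But your execution contains a real error that breaks the chain of inequalities.

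The mistake is in claim (a): you assert that $Q^n(s_0,\overline{\sigma})=\mathrm{val}\,\Gamma_3^n(G|s_0)$ because ``when all players follow $\phi_m^m$ the play is the optimal play of $\Gamma_3^n$.'' This is false. Along the equilibrium path each player $m$ uses $\phi_m^m$, which is $P_m$'s optimal strategy in $\Gamma_3^m$; it is \emph{not} the strategy $\phi_n^m$ that the coalition $P_{-n}$ would use optimally in $\Gamma_3^n$. So against $(\phi_m^m)_{m\neq n}$ --- a generally \emph{suboptimal} coalition in $\Gamma_3^n$ --- player $n$ playing his own optimal $\phi_n^n$ obtains $Q^n(s_0,\overline{\sigma})\ge v^n(s_0)$, with strict inequality possible. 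Your claim (c) then inherits the same reversal: along the pre-deviation path the opponents are playing suboptimally in $\Gamma_3^n$, so the Bellman relation at their turns gives $v^n(s_t)\le\gamma\,v^n(s_{t+1})$, i.e.\ the discounted value $\gamma^t v^n(s_t)$ is \emph{nondecreasing}, not bounded above by $v^n(s_0)$. With both (a) and (c) pointing the wrong way, your chain $Q^n(s_0,\sigma^n,\overline{\sigma}^{-n})\le\gamma^{t_1}v^n(s_{t_1})\le v^n(s_0)=Q^n(s_0,\overline{\sigma})$ collapses.

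The fix --- and this is what the paper does --- is to anchor the comparison at the last common state $s^\ast=s_{t_1-1}$ rather than routing everything through $v^n(s_0)$. The two plays share the same prefix up to $s^\ast$, so it suffices to compare continuations from there. On the undeviated branch the opponents keep playing $(\phi_m^m)_{m\neq n}$, which is just \emph{some} strategy for $P_{-n}$ in $\Gamma_3^n(G|s^\ast)$, so player $n$ (still using $\phi_n^n$) earns at least the value: $Q^n(s^\ast,\phi_1^1,\phi_2^2,\phi_3^3)\ge v^n(s^\ast)$. On the deviated branch the opponents switch to $(\phi_n^m)_{m\neq n}$, the \emph{optimal} coalition strategy, so whatever induced strategy $\rho^n$ player $n$ uses from $s^\ast$ yields at most the value: $Q^n(s^\ast,\rho^n,\phi_n^{-n})\le v^n(s^\ast)$. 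Sandwiching through $v^n(s^\ast)$ and adding back the common prefix gives the inequality. Your ``timing of punishment'' paragraph is fine and needed; only the global-value bookkeeping needs to be replaced by this local comparison at $s^\ast$.
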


\begin{proof}
Recall that we can write payoffs in any of the equivalent forms:$\ Q^{n}%
\left(  s_{0},\sigma\right)  $, $Q^{n}\left(  s_{0},s_{1},s_{2},...\right)  $,
$Q^{n}\left(  h\right)  $ (where $h=\left(  s_{0},s_{1},s_{2,}....\right)  $).

We choose some initial state $s$ and fix it for the rest of the proof. Now let
us prove (\ref{eq001}) for the case $n=1$. In other words, we need to show
that%
\begin{equation}
\forall\sigma^{1}:Q^{1}(s,\overline{\sigma}^{1},\overline{\sigma}%
^{2},\overline{\sigma}^{3})\geq Q^{1}(s,\sigma^{1},\overline{\sigma}%
^{2},\overline{\sigma}^{3}).\label{eq002a}%
\end{equation}
We take any $\sigma^{1}$ and let%
\begin{align*}
\widehat{h} &  =(\widehat{s}_{0},\widehat{s}_{1},\widehat{s}_{2},...)\text{ be
the history produced by }(\overline{\sigma}^{1},\overline{\sigma}%
^{2},\overline{\sigma}^{3})\ \text{and initial state }\widehat{s}_{0}=s\text{,
}\\
\widetilde{h} &  =(\widetilde{s}_{0},\widetilde{s}_{1},\widetilde{s}%
_{2},...)\text{ be the history produced by }(\sigma^{1},\overline{\sigma}%
^{2},\overline{\sigma}^{3})\text{ and initial state }\widetilde{s}%
_{0}=s=\widehat{s}_{0}\text{.}%
\end{align*}
We also define $T_{1}$ as the earliest time in which $(\sigma^{1}%
,\overline{\sigma}^{2},\overline{\sigma}^{3})$ produce different states:
\[
T_{1}=\min\left\{  t:\widetilde{s}_{t}\neq\widehat{s}_{t}\right\}  .
\]
If $T_{1}=\infty$, then $\widetilde{h}=\widehat{h}$ and (\ref{eq002a}) holds
with equality:%
\begin{equation}
Q^{1}(s,\overline{\sigma}^{1},\overline{\sigma}^{2},\overline{\sigma}%
^{3})=Q^{1}\mathbf{(}\widehat{h})=Q^{1}\mathbf{(}\widetilde{h})=Q^{1}%
(\sigma^{1},\overline{\sigma}^{2},\overline{\sigma}^{3}).\label{eq002b}%
\end{equation}
If $T_{1}<\infty$, then at $t=T_{1}$ player 1 deviated from $\phi_{1}^{1}$,
the first difference in states appeared and it was detected by players 2 and
3, who switched to $\phi_{1}^{2}$ and $\phi_{1}^{3}$, respectively. We have%
\begin{align}
Q^{1}(s,\overline{\sigma}^{1},\overline{\sigma}^{2},\overline{\sigma}^{3}) &
=Q^{1}\mathbf{(}\widehat{h})=\sum_{t=0}^{T_{1}-2}\gamma^{t}q^{1}\left(
\widehat{s}_{t}\right)  +\sum_{t=T_{1}-1}^{\infty}\gamma^{t}q^{1}\left(
\widehat{s}_{t}\right)  ,\label{eq003a}\\
Q^{1}(s,\sigma^{1},\overline{\sigma}^{2},\overline{\sigma}^{3}) &
=Q^{1}\mathbf{(}\widetilde{h})=\sum_{t=0}^{T_{1}-2}\gamma^{t}q^{1}\left(
\widetilde{s}_{t}\right)  +\sum_{t=T_{1}-1}^{\infty}\gamma^{t}q^{1}\left(
\widetilde{s}_{t}\right)  .\label{eq004a}%
\end{align}
Since $\widetilde{s}_{t}=\widehat{s}_{t}$ for every $t<T_{1}$, it suffices to
compare the second sums of (\ref{eq003a}) and (\ref{eq004a}). In what follows
we let $s^{\ast}=\widehat{s}_{T_{1}-1}=\widetilde{s}_{T_{1}-1}$. 

\begin{enumerate}
\item Consider first $\widehat{h}=(\widehat{s}_{0},\widehat{s}_{1},\widehat
{s}_{2},...)$. It is produced by $\overline{\sigma}=(\overline{\sigma}%
^{1},\overline{\sigma}^{2},\overline{\sigma}^{3})$ which means that the entire
$\widehat{h}$ is actually produced by $\left(  \phi_{1}^{1},\phi_{2}^{2}%
,\phi_{3}^{3}\right)  $. Since  every $\phi_{m}^{m}\ $is positional, the
history $(\widehat{s}_{0},\widehat{s}_{1},...,\widehat{s}_{T_{1}-2})$ does not
influence the moves produced at times $T_{1},T_{1}+1,...$. Hence we have%
\begin{equation}
\sum_{t=T_{1}-1}^{\infty}\gamma^{t}q^{1}\left(  \widehat{s}_{t}\right)
=\gamma^{T_{1}-1}\sum_{t=0}^{\infty}\gamma^{t}q^{1}\left(  \widehat{s}%
_{T_{1}-1+t}\right)  =\gamma^{T_{1}-1}Q^{1}(s^{\ast},\phi_{1}^{1},\phi_{2}%
^{2},\phi_{3}^{3}).\label{eq005a}%
\end{equation}
In other words, the sum in (\ref{eq005a}) is proportional to the payoff of
player 1\ in $\Gamma_{3}\left(  G|s^{\ast}\right)  $ when each player
$n\in\{1,2,3\}$ uses strategy $\phi_{n}^{n}$. But $Q^{1}(s^{\ast},\phi_{1}%
^{1},\phi_{2}^{2},\phi_{3}^{3})$ is also the payoff of $P_{1}$ in $\Gamma
_{3}^{1}\left(  G|s^{\ast}\right)  $ (which starts at $s^{\ast}$) with $P_{1}$
playing $\phi_{1}^{1}$ and $P_{-1}$ playing $(\phi_{2}^{2},\phi_{3}^{3})$.
However, in $\Gamma_{3}^{1}\left(  G|s^{\ast}\right)  $ the optimal strategy
of $P_{-1}$ is $\left(  \phi_{1}^{2},\phi_{1}^{3}\right)  $; hence we have the
following%
\begin{equation}
\gamma^{T_{1}-1}Q^{1}(s^{\ast},\phi_{1}^{1},\phi_{2}^{2},\phi_{3}^{3}%
)\geq\gamma^{T_{1}-1}Q^{1}(s^{\ast},\phi_{1}^{1},\phi_{1}^{2},\phi_{1}%
^{3}).\label{eq005b}%
\end{equation}

\item Next consider $\widetilde{h}=(\widetilde{s}_{0},\widetilde{s}%
_{1},\widetilde{s}_{2},...)$. It is produced by $(\sigma^{1},\overline{\sigma
}^{2},\overline{\sigma}^{3})$ and, since $\sigma^{1}$ is not necessarily
positional, $\widetilde{s}_{T_{1}},\widetilde{s}_{T_{1}+1},\widetilde
{s}_{T_{1}+2}...$ \ could depend on $(\widetilde{s}_{0},\widetilde{s}%
_{1},...,\widetilde{s}_{T_{1}-2})$. However, we can introduce the strategy
$\rho^{1}$ \emph{induced} by $\sigma^{1}$ on the game starting at $s^{\ast}$,
which will produce the same history $(\widetilde{s}_{T_{1}},\widetilde
{s}_{T_{1}+1},\widetilde{s}_{T_{1}+2},...)$ as $\sigma^{1}$.\footnote{We
define $\rho^{1}$ such that, when combined with $\widetilde{s}_{T_{1}-1}%
,\phi_{1}^{2},\phi_{1}^{3}$, will produce the same history $(\widetilde
{s}_{T_{1}},\widetilde{s}_{T_{1}+1},\widetilde{s}_{T_{1}+2},...)$ as
$\sigma^{1}$. Note that $\rho^{1}$ will in general depend (in an indirect way)
on $(\widetilde{s}_{0},\widetilde{s}_{1},...,\widetilde{s}_{T_{1}-2})$.} Then,
from the optimality of $\phi_{1}^{1}$ as a response to $\left(  \phi_{1}%
^{2},\phi_{1}^{3}\right)  $ in $\Gamma_{3}^{1}\left(  G|s^{\ast}\right)  $, we
have%
\begin{equation}
\sum_{t=T_{1}-1}^{\infty}q^{1}\left(  \widetilde{s}_{t}\right)  =\gamma
^{T_{1}-1}Q^{1}(s^{\ast},\rho^{1},\phi_{1}^{2},\phi_{1}^{3})\leq\gamma
^{T_{1}-1}Q^{1}(s^{\ast},\phi_{1}^{1},\phi_{1}^{2},\phi_{1}^{3}%
).\label{eq006a}%
\end{equation}

\end{enumerate}

\noindent Combining (\ref{eq003a})-(\ref{eq006a}) we have:%
\begin{align*}
Q^{1}(s,\sigma^{1},\overline{\sigma}^{2},\overline{\sigma}^{3})  &
=\sum_{t=0}^{T_{1}-2}\gamma^{t}q^{1}\left(  \widetilde{s}_{t}\right)
+\gamma^{T_{1}-1}Q^{1}(s^{\ast},\rho^{1},\phi_{1}^{2},\phi_{1}^{3})\leq
\sum_{t=0}^{T_{1}-2}\gamma^{t}q^{1}\left(  \widetilde{s}_{t}\right)
+\gamma^{T_{1}-1}Q^{1}(s^{\ast},\phi_{1}^{1},\phi_{1}^{2},\phi_{1}^{3})\\
&  \leq\sum_{t=0}^{T_{1}-2}\gamma^{t}q^{1}\left(  \widehat{s}_{t}\right)
+\gamma^{T_{1}-1}Q^{1}(s^{\ast},\phi_{1}^{1},\phi_{2}^{2},\phi_{3}^{3}%
)=Q^{1}(s,\overline{\sigma}^{1},\overline{\sigma}^{2},\overline{\sigma}^{3})
\end{align*}
and we have proved (\ref{eq002a}), which is (\ref{eq001})\ for $n=1$. The
proof for the cases $n=2$ and $n=3$ are similar and hence omitted.
\end{proof}

The strategies $\left(  \overline{\sigma}^{1},\overline{\sigma}^{2}%
,\overline{\sigma}^{3}\right)  $ are \emph{not}\ positional. In particular,
the action of a player at time $t$ may be influenced by the action (deviation)
performed by another player at time $t-2$. Hence, $\left(  \overline{\sigma
}^{1},\overline{\sigma}^{2},\overline{\sigma}^{3}\right)  $ is a
\emph{non}-positional NE.

Just like Theorem \ref{prp0401}, Theorem \ref{prp0404} actually holds for any
$\varepsilon\in\mathbb{[}0,1]$.

\subsection{Cop Number, Capturing and Non-capturing NE\label{sec0403}}

In this section we examine the connection of $c\left(  G\right)  $ to the
existence of capturing and non-capturing NE in $\Gamma_{3}\left(
G|s_{0},\gamma,\varepsilon\right)  $.

\begin{theorem}
\label{prp0405}For any $G$ with $c\left(  G\right)  =1$ the following holds:%
\[
\forall\left(  \gamma,\varepsilon\right)  \in\Omega^{3},\forall s_{0}\in
S:\text{every NE of }\Gamma_{3}\left(  G|s_{0},\gamma,\varepsilon\right)
\text{ is capturing. }%
\]

\end{theorem}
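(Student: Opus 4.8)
The plan is a proof by contradiction. Suppose $\sigma_{\ast}=(\sigma_{\ast}^{1},\sigma_{\ast}^{2},\sigma_{\ast}^{3})$ is a \emph{non-capturing} NE of $\Gamma_{3}(G|s_{0},\gamma,\varepsilon)$. Then the history $\sigma_{\ast}$ produces belongs to $H_{NC}$, so $q^{n}(s_{t})=0$ for all $t$ and all $n$, and hence $Q^{n}(s_{0},\sigma_{\ast})=0$ for every player $n$; in particular $Q^{1}(s_{0},\sigma_{\ast})=0$. I will exhibit a unilateral deviation of cop $C_{1}$ that yields him a strictly positive payoff, contradicting (\ref{eq03001}) for $n=1$.

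The deviation: since $c(G)=1$, i.e. $T_{2}(G)<\infty$, there is a one-cop cop-win strategy on $G$; let $C_{1}$ play (a copy of) it inside $\Gamma_{3}$, simply ignoring $C_{2}$, and call the resulting legal strategy $\sigma^{1}$. The crucial step is a lemma: against \emph{any} fixed strategies $\sigma^{2},\sigma^{3}$ of $C_{2}$ and $R$, the play produced by $(\sigma^{1},\sigma^{2},\sigma^{3})$ reaches a capture state within a number of $\Gamma_{3}$-turns bounded by a constant $T^{\ast}=T^{\ast}(G)$. The idea is that inserting $C_{2}$'s (for $C_{1}$ irrelevant) moves and the trivial moves of the inactive players into the cyclic move order gives the robber no extra mobility compared with one-cop modified CR, so the one-cop bound $T_{2}(G)$ transfers --- up to the bounded factor coming from the three-fold turn order --- to a bound $T^{\ast}(G)$ on $\Gamma_{3}$-turns, \emph{unless} the robber is captured even sooner (e.g. by being driven onto $C_{2}$, or by $C_{2}$ intervening), in which case some capture occurs all the earlier. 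Granting the lemma, $Q^{1}(s_{0},\sigma^{1},\sigma_{\ast}^{2},\sigma_{\ast}^{3})=\gamma^{T_{C}}q^{1}(s_{T_{C}})$ with $T_{C}\le T^{\ast}$ and $q^{1}(s_{T_{C}})\in\{1-\varepsilon,\varepsilon,\tfrac12\}$; as $\varepsilon\le\tfrac12$ we have $q^{1}(s_{T_{C}})\ge\varepsilon$, hence $Q^{1}(s_{0},\sigma^{1},\sigma_{\ast}^{2},\sigma_{\ast}^{3})\ge\varepsilon\gamma^{T^{\ast}}$. For $\varepsilon>0$ this is strictly positive, so $C_{1}$'s deviation strictly improves on $Q^{1}(s_{0},\sigma_{\ast})=0$ --- the desired contradiction. (Equivalently, via Section~\ref{sec0402}: the above shows the value $v_{1}(s_{0})$ of $\Gamma_{3}^{1}(G|s_{0},\gamma,\varepsilon)$ --- which exists by Lemma~\ref{prp0403} --- satisfies $v_{1}(s_{0})\ge\varepsilon\gamma^{T^{\ast}}$, while in any NE $Q^{1}(s_{0},\sigma_{\ast})=\max_{\sigma^{1}}Q^{1}(s_{0},\sigma^{1},\sigma_{\ast}^{-1})\ge v_{1}(s_{0})$; by the $C_{1}\leftrightarrow C_{2}$ symmetry one could use $v_{2}(s_{0})$ instead.)

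I expect the real obstacle to be the boundary value $\varepsilon=0$. There the bound $\ge\varepsilon\gamma^{T^{\ast}}$ degenerates to $\ge0$ and yields nothing, because a cop who forces \emph{some} capture gains nothing if that capture happens to be a solo capture by the \emph{other} cop. The refinement needed is to arrange that the deviating cop actually \emph{participates} in the capture he forces --- for instance by shadowing the robber closely enough that, the instant the partner cop is about to effect a solo capture at a vertex $v$, the deviating cop can step onto $v$ and so produce a double capture, which pays $\tfrac12>0$ --- or else to show, by a more position-dependent argument, that at least one of the two cops can force a capture of his own. A secondary and logically prior technical point, which I would settle first, is the faithful translation of ``$G$ is cop-win'' (a statement about \emph{one-cop} modified CR) into the bounded-time capture guarantee inside the three-player game $\Gamma_{3}$, whose move order is a fixed cycle through three players and in which a second cop --- here on the robber's side --- is present throughout.
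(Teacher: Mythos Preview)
Your approach is essentially identical to the paper's: assume a non-capturing NE, so all payoffs are zero; let $C_{1}$ deviate to an optimal one-cop CR strategy (lifted to $\Gamma_{3}$ by ignoring $C_{2}$'s coordinate); since $c(G)=1$, capture occurs in time bounded by $T_{2}(G)$; the resulting payoff to $C_{1}$ is at least $\min\bigl((1-\varepsilon)\gamma^{T_{1}},\,\varepsilon\gamma^{T_{2}},\,\tfrac{1}{2}\gamma^{T_{12}}\bigr)$, which is asserted to be strictly positive, contradicting the NE property. Your ``secondary technical point'' (the faithful transfer of the one-cop winning strategy into $\Gamma_{3}$) is handled by the paper exactly as you suggest, via a footnote defining $\widehat{\sigma}^{1}(x^{1},x^{2},x^{3},n)=\widetilde{\sigma}^{1}(x^{1},x^{3},n)$ from the one-cop optimal strategy $\widetilde{\sigma}^{1}$; the paper does not spell out the time-rescaling any more carefully than you do.

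Your worry about the boundary value $\varepsilon=0$ is well-founded and in fact exposes a gap \emph{in the paper's own proof}: the paper asserts the displayed minimum is strictly positive, but when $\varepsilon=0$ the term $\varepsilon\gamma^{T_{2}}$ vanishes and the strict inequality is unjustified --- precisely the scenario you describe, where $C_{2}$ effects a solo capture and $C_{1}$ receives nothing. The paper does not address this case. So your proposal is at least as complete as the paper's argument and is more scrupulous in flagging the residual difficulty; your suggested refinement (arranging that the deviating cop participates in the capture) goes beyond what the paper provides, though it would still need work to be made rigorous.
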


\begin{proof}
Suppose $c\left(  G\right)  =1$, take any $s_{0}\in S_{NC}$ (the case
$s_{0}\in S_{C}$ is trivial) and let $\left(  \sigma^{1},\sigma^{2},\sigma
^{3}\right)  $ be a NE of $\Gamma\left(  G|s_{0}\right)  $. Suppose it is a
non-capturing NE; then we have%
\begin{equation}
\forall\rho^{1}:0=Q^{1}(s_{0},\sigma^{1},\sigma^{2},\sigma^{3})\geq
Q^{1}(s_{0},\rho^{1},\sigma^{2},\sigma^{3})\text{.} \label{eq030301a}%
\end{equation}
Now take $\widehat{\sigma}^{1}$ to be an optimal cop strategy \emph{in CR with
one cop}\footnote{More precisely, define $\widehat{\sigma}^{1}$ by%
\[
\forall x^{1},x^{2},x^{3},n:\widehat{\sigma}^{1}\left(  x^{1},x^{2}%
,x^{3},n\right)  =\widetilde{\sigma}^{1}\left(  x^{1},x^{3},n\right)
\]
where $\widetilde{\sigma}^{1}$ is an optimal cop strategy in one-cop CR. We
will repeatedly use, without further comment,\ this method to produce SCAR
strategies from strategies in CR.}. In this game, since $c\left(  G\right)
=1$, $C_{1}$ will capture $R$ in some finite time, which will depend on $R$'s
strategy but will be bounded above by the $T_{2}\left(  G\right)  $ defined in
Section \ref{sec03}. When $C_{1}$ uses $\widehat{\sigma}^{1}$ in $\Gamma
_{3}\left(  G|s_{0}\right)  $, $C_{2}$ may influence the game by capturing $R$
no later than $C_{1}$. Hence we have the following possibilities.

\begin{enumerate}
\item $C_{1}$ captures $R$ at some time $T_{1}$.

\item $C_{2}$ captures $R$ before $C_{1}$, i.e., at some time $T_{2}<T_{1}$.

\item $C_{1}$ and $C_{2}$ capture $R$ simultaneously at some time $T_{12}$.
\end{enumerate}

\noindent At any rate, we will have $\max\left(  T_{1},T_{2},T_{12}\right)
$\noindent$\leq T_{2}\left(  G\right)  <\infty$. Hence $C_{1}$ will receive
payoff $Q^{1}(s_{0},\widehat{\sigma}^{1},\sigma^{2},\sigma^{3})$ which will
satisfy
\begin{equation}
Q^{1}(s_{0},\widehat{\sigma}^{1},\sigma^{2},\sigma^{3})\geq\min\left(
\gamma^{T_{1}}\left(  1-\varepsilon\right)  ,\gamma^{T_{2}}\varepsilon
,\gamma^{T_{12}}\frac{1}{2}\right)  >0. \label{eq030302a}%
\end{equation}
But (\ref{eq030302a}) contradicts (\ref{eq030301a}); hence $\left(  \sigma
^{1},\sigma^{2},\sigma^{3}\right)  $ must be a \emph{capturing} NE of
$\Gamma_{3}\left(  G|s_{0}\right)  $. We conclude that every NE of $\Gamma
_{3}\left(  G|s_{0}\right)  $ is capturing.
\end{proof}

It might be assumed that $\Gamma_{3}\left(  G\right)  $ is equivalent to two
one-cop CR\ games played on the same graph and, consequently, each cop should
use an optimal CR\ strategy. For example, it might be assumed that on graphs
$G$ with $c(G)=1$, if $\widehat{\sigma}^{1}$, $\widehat{\sigma}^{2}$,
$\widehat{\sigma}^{3}$ are time optimal cop and robber strategies in one-cop
CR, then $\widehat{\sigma}=\left(  \widehat{\sigma}^{1},\widehat{\sigma}%
^{2},\widehat{\sigma}^{3}\right)  $ is a NE of $\Gamma_{3}\left(  G\right)  $.
This is not true; in certain cases a cop may want to delay capture to ensure
that it is effected by him, as seen in the following example.

\begin{example}
\label{prp0406}\normalfont Suppose $\Gamma_{3}\left(  G\right)  $ is played on
the graph $G$ of Fig. \ref{fig0301} with the initial positions indicated;
$C_{1}$ has the first move. \ Further, take $\varepsilon<\frac{1}{2}%
$.\begin{figure}[ptbh]
\begin{center}
\begin{tikzpicture}
\SetGraphUnit{2}
\Vertex[x= 1,y= 0]{1}
\Vertex[x= 2,y= 0]{2}
\Vertex[x= 3,y= 0]{3}
\Vertex[x= 4,y= 0]{4}
\Vertex[x= 5,y= 0]{5}
\Vertex[x= 6,y= 0]{6}
\Vertex[x= 7,y= 0]{7}
\Vertex[x= 5,y= 1]{8}
\Vertex[x= 5,y= 2]{9}
\node(A) [label=$C_1$] at (6,-1) {};
\node(B) [label=$C_2$] at (1,-1) {};
\node(C) [label=$R  $] at (4,-1) {};
\Edge(1)(2)
\Edge(2)(3)
\Edge(3)(4)
\Edge(4)(5)
\Edge(5)(6)
\Edge(6)(7)
\Edge(5)(8)
\Edge(8)(9)
\SetVertexNoLabel
\end{tikzpicture}
\end{center}
\par
\label{fig0301}\caption{An example where minimizing capture time does not
yield a NE.}%
\end{figure}
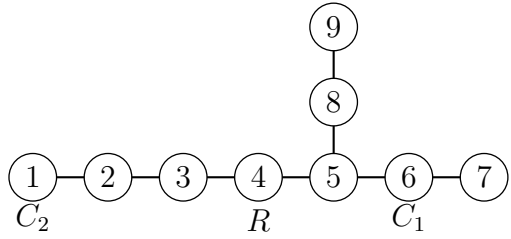\ Let $\widehat{\sigma}^{n}$ ($n\in\left\{  1,2\right\}  $) be an
(one-cop)\ CR\ optimal strategy of the $n$-th cop; in this case it consists in
each cop moving towards the robber at every turn. Now suppose that for these
$\left(  \widehat{\sigma}^{1},\widehat{\sigma}^{2}\right)  $ there exists a NE
$\left(  \widehat{\sigma}^{1},\widehat{\sigma}^{2},\widehat{\sigma}%
^{3}\right)  $; it is easily seen that $\widehat{\sigma}^{3}$ must be an
optimal robber strategy in two-cop CR. If the players used $\left(
\widehat{\sigma}^{1},\widehat{\sigma}^{2},\widehat{\sigma}^{3}\right)  $ the
game would evolve as follows:%
\[%
\begin{tabular}
[c]{|l|l|l|l|l|l|l|}\hline
\textbf{Turn} & 0 & 1 & 2 & 3 & 4 & 5\\\hline
$C_{1}$\textbf{ vertex} & 6 & 5 & 5 & 5 & 4 & 4\\\hline
$C_{2}$\textbf{ vertex} & 1 & 1 & 2 & 2 & 2 & 3\\\hline
$R$\textbf{ vertex} & 4 & 4 & 4 & 3 & 3 & 3\\\hline
\end{tabular}
\ \ \ \ \ \ \ \
\]
Note that the robber will move so as to be captured by $C_{2}$, because this
increases capture time by 1. So the payoffs are
\begin{align*}
Q^{1}\left(  s_{0},\widehat{\sigma}^{1},\widehat{\sigma}^{2},\widehat{\sigma
}^{3}\right)   &  =\gamma^{5}\varepsilon,\\
Q^{2}\left(  s_{0},\widehat{\sigma}^{1},\widehat{\sigma}^{2},\widehat{\sigma
}^{3}\right)   &  =\gamma^{5}\left(  1-\varepsilon\right)  ,\\
Q^{3}\left(  s_{0},\widehat{\sigma}^{1},\widehat{\sigma}^{2},\widehat{\sigma
}^{3}\right)   &  =-\gamma^{5}.
\end{align*}
Now, suppose $C_{2},R$ stick to their strategies, while $C_{1}$ uses the
following strategy $\widetilde{\sigma}^{1}$: on his first move he retreats to
vertex 7 and afterwards moves directly towards the robber. This frees a path
for $R$ towards 9, which increases the capture time. The game evolves as
follows.%
\[%
\begin{tabular}
[c]{|l|l|l|l|l|l|l|l|l|l|l|l|l|l|l|}\hline
\textbf{Turn} & 0 & 1 & 2 & 3 & 4 & 5 & 6 & 7 & 8 & 9 & 10 & 11 & 12 &
13\\\hline
$C_{1}$\textbf{ vertex} & 6 & 7 & 7 & 7 & 6 & 6 & 6 & 5 & 5 & 5 & 8 & 8 & 8 &
9\\\hline
$C_{2}$\textbf{ vertex} & 1 & 1 & 2 & 2 & 2 & 3 & 3 & 3 & 4 & 4 & 4 & 5 & 5 &
5\\\hline
$R$\textbf{ vertex} & 4 & 4 & 4 & 5 & 5 & 5 & 8 & 8 & 8 & 9 & 9 & 9 & 9 &
9\\\hline
\end{tabular}
\ \ \ \ \ \ \ \
\]
So the payoffs are%
\begin{align*}
Q^{1}\left(  s_{0},\widetilde{\sigma}^{1},\widehat{\sigma}^{2},\widehat
{\sigma}^{3}\right)   &  =\gamma^{13}\left(  1-\varepsilon\right)  ,\\
Q^{2}\left(  s_{0},\widetilde{\sigma}^{1},\widehat{\sigma}^{2},\widehat
{\sigma}^{3}\right)   &  =\gamma^{13}\varepsilon,\\
Q^{3}\left(  s_{0},\widetilde{\sigma}^{1},\widehat{\sigma}^{2},\widehat
{\sigma}^{3}\right)   &  =-\gamma^{13}.
\end{align*}
It is easy to see that%
\[
\gamma>\left(  \frac{\varepsilon}{1-\varepsilon}\right)  ^{1/8}\Rightarrow
\left(  1-\varepsilon\right)  \gamma^{13}>\varepsilon\gamma^{5}\Rightarrow
Q^{1}\left(  s_{0},\widetilde{\sigma}^{1},\widehat{\sigma}^{2},\widehat
{\sigma}^{3}\right)  >Q^{1}\left(  s_{0},\widehat{\sigma}^{1},\widehat{\sigma
}^{2},\widehat{\sigma}^{3}\right)  .
\]
Since $C_{1}$ can unilaterally improve his payoff, $\left(  \widehat{\sigma
}^{1},\widehat{\sigma}^{2},\widehat{\sigma}^{3}\right)  $ cannot be a NE\ of
$\Gamma_{3}\left(  G|s_{0},\gamma,\varepsilon\right)  $.
\end{example}

\FloatBarrier

We now move to graphs with cop number greater than one.

\begin{theorem}
\label{prp0407}For any $G$ with $c\left(  G\right)  =2$ the following holds:%
\[
\forall\left(  \gamma,\varepsilon\right)  \in\Omega^{3},\forall s_{0}\in
S:\text{there exists a capturing NE of }\Gamma_{3}\left(  G|s_{0}%
,\gamma,\varepsilon\right)  .
\]

\end{theorem}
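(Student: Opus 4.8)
The plan is to reuse the threat-strategy construction of Section~\ref{sec0402}, but anchored on a \emph{cooperative, capturing} base play rather than on the payoff-maximizing base play $(\phi_{1}^{1},\phi_{2}^{2},\phi_{3}^{3})$ of Theorem~\ref{prp0404} (which, when $c(G)>1$, need not capture). Since $c(G)=2$, the auxiliary game $\Gamma_{3}^{3}(G|s_{0})$ --- which is exactly two-cop modified CR --- has finite value and, by Lemma~\ref{prp0403}, deterministic positional optimal strategies; fix an optimal cop pair $(\psi^{1},\psi^{2})$ and an optimal robber strategy $\psi^{3}$. Playing $(\psi^{1},\psi^{2},\psi^{3})$ from $s_{0}$ captures $R$ after $T_{3}(G|s_{0})\le T_{3}(G)<\infty$ turns, and by exactly one cop (the robber, playing optimally, never walks into both cops at once, and he is caught because a \emph{cop} steps onto him); after relabelling we may assume the capturing cop is $C_{1}$. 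Write $t_{i}(s)$ for the least number of turns in which $C_{i}$, the other cop sitting idle, can force a capture from $s$ (and $t_{i}(s)=\infty$ if he cannot); note $t_{i}(s)\ge T_{3}(G|s)$ always, since the team can mimic ``$C_{i}$ chases, the other cop idle''. The one nonroutine ingredient is the following statement \textbf{(A)}: \emph{among the optimal $\psi$'s there is one along whose play path the non-capturing cop $C_{2}$ always satisfies $t_{2}(s^{\ast})>T_{3}(G|s^{\ast})$} --- with a little slack for the turn-parity constant appearing below --- \emph{i.e.\ the play never reaches a state at which $C_{2}$, acting alone, could force a capture as fast as the team does.}

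Granting \textbf{(A)}, define $\sigma_{\ast}$ by the recipe \eqref{eqthrstrat01} with base play $(\psi^{1},\psi^{2},\psi^{3})$: each player follows his part of the base play, and the instant a player $m$ deviates the other two switch to the strategies $P_{-m}$ uses in $\Gamma_{3}^{m}(G|\cdot)$. This is well defined because $\psi^{1},\psi^{2},\psi^{3}$ are positional (a deviation is detected at once) and, as in Section~\ref{sec0402}, the $\phi$'s are positional and independent of the pre-deviation state. The base play captures, so $\sigma_{\ast}$ is a capturing profile; it remains to check it is a NE. For the robber this is the $n=3$ instance of the proof of Theorem~\ref{prp0404}: against the positional team $(\psi^{1},\psi^{2})$, optimal from \emph{every} state, he can neither escape nor postpone capture beyond $T_{3}(G|\cdot)$, and his deviation continuation (the cops keeping $\psi^{1},\psi^{2}$, which \emph{is} $P_{-3}$'s optimal play in $\Gamma_{3}^{3}$) only hurts him. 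For a cop I would run the prefix-plus-tail decomposition \eqref{eq003a}--\eqref{eq006a} verbatim: if cop $C_{i}$ deviates first at a path-state $s^{\ast}$, his best continuation payoff is the value $v_{i}(s^{\ast})$ of $\Gamma_{3}^{i}(G|s^{\ast})$, so it suffices to show $v_{i}(s^{\ast})\le Q^{i}(s^{\ast},\psi^{1},\psi^{2},\psi^{3})$ for every such $s^{\ast}$.

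Here the arithmetic is short. Against the response ``the other cop stays idle, $R$ evades $C_{i}$ optimally'', player $C_{i}$ can at best capture $R$ himself at turn $t_{i}(s^{\ast})$, so $v_{i}(s^{\ast})\le(1-\varepsilon)\gamma^{\,t_{i}(s^{\ast})}\le(1-\varepsilon)\gamma^{\,T_{3}(G|s^{\ast})}$. For $C_{1}$ this already matches $Q^{1}(s^{\ast},\psi)=(1-\varepsilon)\gamma^{\,T_{3}(G|s^{\ast})}$, so $C_{1}$ cannot profit from deviating. For $C_{2}$ we have the smaller quantity $Q^{2}(s^{\ast},\psi)=\varepsilon\gamma^{\,T_{3}(G|s^{\ast})}$, so the above bound is not enough and a sharper estimate of $v_{2}(s^{\ast})$ is needed. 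Since $\varepsilon\le\frac{1}{2}$, in $\Gamma_{3}^{2}(G|s^{\ast})$ the adversary $P_{-2}$ (controlling $C_{1}$ and $R$) strictly prefers to \emph{cede} the capture to $C_{1}$ rather than let $C_{2}$ make it, and to do so as late as possible: $P_{-2}$ has $R$ evade $C_{2}$ optimally and $C_{1}$ shadow $R$ and pounce just before $C_{2}$ could force a capture himself, i.e.\ one turn (by parity, possibly two) before turn $t_{2}(s^{\ast})$; or $P_{-2}$ forces perpetual evasion if $t_{2}(s^{\ast})=\infty$. Hence $v_{2}(s^{\ast})\le\varepsilon\gamma^{\,t_{2}(s^{\ast})-1}$, and by \textbf{(A)} this is $\le\varepsilon\gamma^{\,T_{3}(G|s^{\ast})}=Q^{2}(s^{\ast},\psi)$. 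Feeding both inequalities into \eqref{eq003a}--\eqref{eq006a} shows $\sigma_{\ast}$ is a NE, hence a capturing NE, as required.

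The hard part will be statement \textbf{(A)} (together with getting the parity constant in the $C_{2}$ estimate right). My intended argument: start from any optimal $\psi$; whenever the play it induces reaches a state $s^{\ast}$ at which some cop $C_{j}$ can already corner $R$ in the team-optimal time $T_{3}(G|s^{\ast})$, redesignate $C_{j}$ as the capturer and continue from $s^{\ast}$ with $C_{j}$'s one-cop cornering strategy (the other cop then provably needing strictly more turns on his own). This is where $c(G)=2$ is used essentially --- it forbids a single cop cornering $R$ from every position, which is what makes it possible to steer $R$, during the cooperative phase, into a region cornerable by only one of the two cops. The delicate point is to make these local redesignations consistent along one fixed play path --- the capturer must be a single cop for the whole path --- and to verify that the resulting one- or two-turn slack $t_{2}(s^{\ast})-T_{3}(G|s^{\ast})$ is not wiped out by the discount factor; I would expect to handle this by choosing $\psi$ to minimize, lexicographically along the path, the solo-capture-time of the non-capturing cop.
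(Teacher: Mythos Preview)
Your overall plan---threat strategies with a capturing base play $(\psi^{1},\psi^{2},\psi^{3})$ taken from optimal two-cop CR---is exactly the paper's construction. Where you diverge is in verifying the NE property, and that is where the gap lies.

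You reduce the non-capturing cop's case to statement \textbf{(A)} together with the estimate $v_{2}(s^{\ast})\le\varepsilon\gamma^{\,t_{2}(s^{\ast})-1}$. Neither is established. The estimate presumes that $P_{-2}$ can always have $C_{1}$ ``pounce just before $C_{2}$ could force a capture''; but nothing guarantees $C_{1}$ is anywhere near $R$ at that moment, so this upper bound on $v_{2}$ is unsupported. Statement \textbf{(A)} itself is delicate: your redesignation scheme can oscillate (at some path-state $C_{1}$ corners faster solo, at a later one $C_{2}$ does), and nothing in your sketch forces a single consistent capturer along the whole path, let alone produces the one-or-two-turn slack you need against the parity constant. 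You yourself flag this as ``the hard part'', and it is not clear it is even true as stated.

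The paper sidesteps all of this with a dichotomy you are missing. First take the \emph{canonical} threat NE $\overline{\sigma}$ of Theorem~\ref{prp0404}, with base play $(\phi_{1}^{1},\phi_{2}^{2},\phi_{3}^{3})$. If $\overline{\sigma}$ happens to be capturing, you are done. If it is non-capturing, then $C_{1}$ playing his optimal $\phi_{1}^{1}$ against $(\phi_{2}^{2},\phi_{3}^{3})$ receives payoff $0$; since $(\phi_{1}^{2},\phi_{1}^{3})$ is the $P_{-1}$-optimal response, $C_{1}$ also receives $0$ against it, so the value of $\Gamma_{3}^{1}(G|s_{0})$ is $0$. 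Cop payoffs are nonnegative, hence \emph{every} strategy---in particular $\psi^{1}$---is optimal for $P_{1}$ in $\Gamma_{3}^{1}(G|s_{0})$; the same argument gives $\psi^{2}$ optimal in $\Gamma_{3}^{2}(G|s_{0})$; and $\psi^{3}$ is optimal in $\Gamma_{3}^{3}(G|s_{0})$ by construction. Thus $(\psi^{1},\psi^{2},\psi^{3})$ qualifies as a base play in Theorem~\ref{prp0404}, and the resulting threat profile is a capturing NE. No statement \textbf{(A)}, no parity bookkeeping, no pointwise comparison of $v_{2}(s^{\ast})$ with $Q^{2}(s^{\ast},\psi)$ is needed: the non-capturing assumption on $\overline{\sigma}$ does the work for free.
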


\begin{proof}
Take any $G$ with $c\left(  G\right)  =2$, any $\left(  \gamma,\varepsilon
\right)  \in\Omega^{3}$ and any $s_{0}\in S_{NC}$ (the case $s_{0}\in S_{C}$
is trivial) and fix them for the rest of the proof. Now take any threat
strategy profile $\overline{\sigma}=\left(  \overline{\sigma}^{1}%
,\overline{\sigma}^{2},\overline{\sigma}^{3}\right)  $; according to Theorem
\ref{prp0404}, $\overline{\sigma}$ is a NE and it can be either capturing or
non-capturing. If it is capturing we are done; let us then suppose that
$\overline{\sigma}$ is non-capturing. Recall that, for all $n\in\left\{
1,2,3\right\}  $:
\[
\overline{\sigma}^{n}=\left\{
\begin{array}
[c]{ll}%
\phi_{n}^{n} & \text{as long as every player }m\in\left\{  1,2,3\right\}
\backslash n\text{ follows }\phi_{m}^{m}\text{; }\\
\phi_{m}^{n} & \text{as soon as some player }m\in\left\{  1,2,3\right\}
\backslash n\ \text{ deviates from }\phi_{m}^{m}\text{,}%
\end{array}
\right.
\]
where in the game $\Gamma_{3}^{n}\left(  G|s_{0},\gamma,\varepsilon\right)  $:

\begin{enumerate}
\item $\phi_{n}^{n}$ is an optimal strategy of $P_{n}$ against $P_{-n}$;\ 

\item $\phi_{n}^{m}$ (for $m\neq n$) is an optimal strategy used (for the
$m$-th token) by $P_{-n}$ against $P_{n}$.
\end{enumerate}

As mentioned, when $\overline{\sigma}$ is used in $\Gamma_{3}\left(
G|s_{0},\gamma,\varepsilon\right)  $ the $n$-th player (for $n\in\left\{
1,2,3\right\}  $)\ will follow strategy $\phi_{n}^{n}$ for the entire game. We
will now construct a new profile $\widetilde{\sigma}=\left(  \widetilde
{\sigma}^{1},\widetilde{\sigma}^{2},\widetilde{\sigma}^{3}\right)  $\ which
\emph{will} be a capturing NE of $\Gamma_{3}\left(  G|s_{0},\gamma
,\varepsilon\right)  $. To this end we first select an optimal strategy
profile $\widehat{\sigma}=\left(  \widehat{\sigma}^{1},\widehat{\sigma}%
^{2},\widehat{\sigma}^{3}\right)  $\ in the two-cops CR; since $c\left(
G\right)  =2$, $\widehat{\sigma}$ will be capturing for every $s_{0}$. Then,
for each $n\in\left\{  1,2,3\right\}  $, we let
\[
\widetilde{\sigma}^{n}=\left\{
\begin{array}
[c]{ll}%
\widehat{\sigma}^{n} & \text{as long as every player }m\in\left\{
1,2,3\right\}  \backslash n\text{ follows }\widehat{\sigma}^{m}\text{; }\\
\phi_{m}^{n} & \text{as soon as some player }m\in\left\{  1,2,3\right\}
\backslash n\ \text{ deviates from }\widehat{\sigma}^{m}\text{.}%
\end{array}
\right.
\]
The $\phi_{m}^{n}$'s above are the same as in $\overline{\sigma}$. Hence, to
show that $\widetilde{\sigma}$\ is \ of the form prescribed by Theorem
\ref{prp0404}, we have to show that each $\widehat{\sigma}^{n}$ is optimal in
the corresponding $\Gamma_{3}^{n}\left(  G|s_{0},\gamma,\varepsilon\right)  $.
In the following arguments we will repeatedly use the (easily verified) fact
that $\Gamma_{3}\left(  G|s_{0},\gamma,\varepsilon\right)  $ is
path-equivalent to $\Gamma_{3}^{n}\left(  G|s_{0},\gamma,\varepsilon\right)  $
for every $n\in\left\{  1,2,3\right\}  $.

\begin{enumerate}
\item First consider $C_{1}$ playing $\phi_{1}^{1}$ in $\Gamma_{3}\left(
G|s_{0},\gamma,\varepsilon\right)  $; since by assumption $\overline{\sigma}$
is non-capturing in $\Gamma_{3}\left(  G|s_{0},\gamma,\varepsilon\right)  $,
it is also non-capturing in $\Gamma_{3}^{1}\left(  G|s_{0},\gamma
,\varepsilon\right)  $. Hence $C_{1}$ playing $\phi_{1}^{1}$ against $\phi
_{2}^{2}$ and $\phi_{3}^{3}$ will receive a payoff of zero, in both
$\Gamma_{3}\left(  G|s_{0},\gamma,\varepsilon\right)  $ and $\Gamma_{3}%
^{1}\left(  G|s_{0},\gamma,\varepsilon\right)  $. But then $C_{1}$ playing
$\phi_{1}^{1}$ against $\phi_{1}^{2}$ and $\phi_{1}^{3}$ (which are optimal in
$\Gamma_{3}^{1}\left(  G|s_{0},\gamma,\varepsilon\right)  $)\ will also
receive a payoff of zero. It follows that $C_{1}$'s optimal payoff in
$\Gamma_{3}^{1}\left(  G|s_{0},\gamma,\varepsilon\right)  $ is zero and hence
\emph{any} strategy is optimal for him in $\Gamma_{3}^{1}\left(
G|s_{0},\gamma,\varepsilon\right)  $ and so is, in particular, $\widehat
{\sigma}^{1}$.

\item By a similar argument, any strategy, and in particular $\widehat{\sigma
}^{2}$, will be optimal for $C_{2}$ in $\Gamma_{3}^{2}\left(  G|s_{0}%
,\gamma,\varepsilon\right)  $.

\item Finally, the game $\Gamma_{3}^{3}\left(  G|s_{0},\gamma,\varepsilon
\right)  $ is the two-cop CR, and hence $\widehat{\sigma}^{3}$ will be optimal
for $R$ in $\Gamma_{3}^{3}\left(  G|s_{0},\gamma,\varepsilon\right)  $.
\end{enumerate}

Using the above observations we see that, according to Theorem \ref{prp0404},
$\widetilde{\sigma}$\ is a NE (in threat strategies) of $\Gamma_{3}\left(
G|s_{0},\gamma,\varepsilon\right)  $. Furthermore, playing $\widetilde{\sigma
}$\ at equilibrium is equivalent to playing $\widehat{\sigma}$, a capturing
profile in both $\Gamma_{3}^{3}\left(  G|s_{0},\gamma,\varepsilon\right)  $
(i.e., two-cop CR)\ and $\Gamma_{3}\left(  G|s_{0},\gamma,\varepsilon\right)
$. Hence $\widetilde{\sigma}$\ \ is a capturing NE of $\Gamma_{3}\left(
G|s_{0},\gamma,\varepsilon\right)  $.
\end{proof}

\begin{remark}
\label{prp0408}\normalfont Note that the NE\ of the above theorem is not positional.
\end{remark}

The next theorem holds on a restricted set of $\left(  \gamma,\varepsilon
\right)  $ values:
\[
\widetilde{\Omega}^{3}=\left\{  \left(  \gamma,\varepsilon\right)  :\gamma
\in\left(  0,1\right)  ,\varepsilon\in\left[  0,\frac{1}{2}\right]
,\gamma<\frac{\varepsilon}{1-\varepsilon}\right\}  .
\]

\begin{theorem}
\label{prp0409}For any $G$ with $c\left(  G\right)  =2$, let $\widehat{\sigma
}=\left(  \widehat{\sigma}^{1},\widehat{\sigma}^{2},\widehat{\sigma}%
^{3}\right)  $ be an optimal strategy profile in the two-cop CR game. Then the
following holds:%
\[
\forall\left(  \gamma,\varepsilon\right)  \in\widetilde{\Omega}^{3},\forall
s_{0}\in S:\widehat{\sigma}\text{ is a capturing NE of }\Gamma_{3}\left(
G|s_{0},\gamma,\varepsilon\right)  .
\]

\end{theorem}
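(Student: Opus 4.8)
The plan is to verify the Nash inequality (\ref{eq03001}) for the profile $\widehat{\sigma}$ directly, one player at a time; the whole argument is powered by the hypothesis $\gamma<\frac{\varepsilon}{1-\varepsilon}$, equivalently $\gamma\left(1-\varepsilon\right)<\varepsilon$, which says that for a cop even a one--turn delay of the capture is costlier than the entire bonus of being the capturing cop. Write $T^{\ast}=T_{3}(G|s_{0})$ for the optimal capture time of the two--cop CR game $\Gamma_{3}^{3}(G|s_{0})$; since $c(G)=2$ we have $T^{\ast}<\infty$. Two preliminary facts. First, by the path--equivalence of $\Gamma_{3}(G|s_{0})$ with $\Gamma_{3}^{3}(G|s_{0})$ and Lemma \ref{prp0403}, when all three players use $\widehat{\sigma}$ the capture occurs at \emph{exactly} time $T^{\ast}$ (the optimal cop pair $\widehat{\sigma}^{1},\widehat{\sigma}^{2}$ forces capture by time $T^{\ast}$, the optimal robber $\widehat{\sigma}^{3}$ prevents it before $T^{\ast}$); in particular $\widehat{\sigma}$ is a capturing profile. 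Secondly, we may take $\widehat{\sigma}$ positional, and then $\widehat{\sigma}^{3}$, being optimal from every state, never moves the robber onto a cop: from any non--capture state at which the robber moves, ``stay put'' avoids capture for at least one further turn, whereas stepping onto a cop ends the game at the very next turn, so the latter move is never value--optimal. Hence the capture produced by $\widehat{\sigma}$ is always effected by a \emph{cop} stepping onto the robber; in particular $s_{T^{\ast}}\in S_{C}^{1}\cup S_{C}^{2}$ (never $S_{C}^{12}$), and the player who moves at turn $T^{\ast}$ is exactly the capturing cop.

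The robber cannot profitably deviate: if $R$ replaces $\widehat{\sigma}^{3}$ by any $\sigma^{3}$ while $C_{1},C_{2}$ keep $\widehat{\sigma}^{1},\widehat{\sigma}^{2}$, then, $\widehat{\sigma}^{1},\widehat{\sigma}^{2}$ being an optimal cop pair, the resulting capture time $T''$ satisfies $T''\le T^{\ast}$, so $Q^{3}=-\gamma^{T''}\le-\gamma^{T^{\ast}}$, which is $R$'s payoff under $\widehat{\sigma}$. Thus (\ref{eq03001}) holds for $n=3$.

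For the cops it suffices to treat $C_{1}$, the case $n=2$ being symmetric; I would split according to who captures under $\widehat{\sigma}$. If $C_{1}$ captures under $\widehat{\sigma}$, his payoff there is $(1-\varepsilon)\gamma^{T^{\ast}}$, the largest payoff compatible with capture time $T^{\ast}$; and because $\widehat{\sigma}^{3}$ is a maximin robber strategy, any deviation $\sigma^{1}$ of $C_{1}$ against $\widehat{\sigma}^{2},\widehat{\sigma}^{3}$ produces a capture time $T'\ge T^{\ast}$, hence payoff at most $(1-\varepsilon)\gamma^{T'}\le(1-\varepsilon)\gamma^{T^{\ast}}$, so $C_{1}$ cannot improve. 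Now assume $C_{2}$ captures under $\widehat{\sigma}$; then $C_{1}$'s payoff is $\varepsilon\gamma^{T^{\ast}}$ and, by the first paragraph, $C_{2}$ is the player who moves at turn $T^{\ast}$. Let $C_{1}$ deviate to $\sigma^{1}$, and let $T'\ (\ge T^{\ast})$ be the new capture time. If $T'=\infty$ then $C_{1}$ gets $0\le\varepsilon\gamma^{T^{\ast}}$. If $T'\ge T^{\ast}+1$ then $C_{1}$ gets at most $(1-\varepsilon)\gamma^{T'}\le(1-\varepsilon)\gamma^{T^{\ast}+1}=\gamma(1-\varepsilon)\gamma^{T^{\ast}}<\varepsilon\gamma^{T^{\ast}}$, the last inequality being the hypothesis $\gamma(1-\varepsilon)<\varepsilon$. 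Finally if $T'=T^{\ast}$: the order of play is a fixed cycle depending only on $s_{0}$ and $t$, so $C_{2}$ still moves at turn $T^{\ast}$ in the deviated run while neither $C_{1}$ nor $R$ moves then; a capture at $T^{\ast}$ in which $C_{1}$ occupies the robber's vertex would thus force $C_{1}$ and $R$ to coincide already at turn $T^{\ast}-1$, i.e.\ a capture at $T^{\ast}-1$, contradicting $T'=T^{\ast}$; hence $s_{T^{\ast}}\in S_{C}^{2}$ and $C_{1}$ again receives $\varepsilon\gamma^{T^{\ast}}$. In every case $C_{1}$'s deviation payoff is $\le\varepsilon\gamma^{T^{\ast}}$, his equilibrium payoff, so (\ref{eq03001}) holds for $n=1$, and by symmetry for $n=2$.

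Since $s_{0}$ is arbitrary (the case $s_{0}\in S_{C}$ being immediate, as only null moves exist there) this yields (\ref{eq03001}) for every starting state, and with $\widehat{\sigma}$ capturing the theorem follows. I expect the only genuinely delicate point to be the sub--case $T'=T^{\ast}$ of a cop's deviation: one must rule out a cop stealing, or sharing, a capture that still happens at time $T^{\ast}$, and this is exactly where both the rigidity of the turn schedule and the observation that an optimal robber never walks into a cop are needed. Everything else is routine arithmetic around the single inequality $\gamma(1-\varepsilon)<\varepsilon$.
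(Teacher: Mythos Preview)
Your proof is correct and follows essentially the same approach as the paper's: verify the Nash inequality player by player, using optimality of $\widehat{\sigma}$ in two-cop CR to bound the capture time under any unilateral deviation, and invoke $\gamma(1-\varepsilon)<\varepsilon$ to show the non-capturing cop cannot profit from a later self-capture. Your handling of the sub-case $T'=T^{\ast}$ is in fact more careful than the paper's---the paper simply asserts that if the non-capturing cop deviates and becomes the capturer then $T_{2}\ge T_{1}+1$, with only the brief parenthetical ``if $C_{2}$ could capture before $C_{1}$ this would be achieved by $(\widehat{\sigma}^{1},\widehat{\sigma}^{2},\widehat{\sigma}^{3})$'', whereas you give an explicit argument from the fixed turn schedule and the observation that an optimal robber never steps onto a cop.
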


\begin{proof}
Take any $G$ with $c\left(  G\right)  =2$, any $\left(  \gamma,\varepsilon
\right)  \in\widetilde{\Omega}^{3}$ and any $\left(  \widehat{\sigma}%
^{1},\widehat{\sigma}^{2},\widehat{\sigma}^{3}\right)  $ which is optimal in
the two-cop CR game; we fix these for the rest of the proof. Obviously
$\widehat{\sigma}$ is a capturing profile, since it is optimal in CR and
$c\left(  G\right)  =2$. So we need to show that it is also a NE of
$\Gamma_{3}\left(  G|s_{0},\gamma,\varepsilon\right)  $. This will obviously
be true when $s_{0}\in S_{C}$, so let us consider any $s_{0}\in S_{NC}$. Let
$T_{1}$ be the capture time corresponding to $\left(  s_{0},\widehat{\sigma
}^{1},\widehat{\sigma}^{2},\widehat{\sigma}^{3}\right)  $. This is the same in
both CR and $\Gamma_{3}\left(  G|s_{0},\gamma,\varepsilon\right)  $, since the
two games are path-equivalent.

Assume for the time being that the capturing cop is $C_{1}$; then the payoffs
are%
\begin{align*}
Q^{1}\left(  s_{0},\widehat{\sigma}^{1},\widehat{\sigma}^{2},\widehat{\sigma
}^{3}\right)   &  =\left(  1-\varepsilon\right)  \gamma^{T_{1}},\\
Q^{2}\left(  s_{0},\widehat{\sigma}^{1},\widehat{\sigma}^{2},\widehat{\sigma
}^{3}\right)   &  =\varepsilon\gamma^{T_{1}},\\
Q^{3}\left(  s_{0},\widehat{\sigma}^{1},\widehat{\sigma}^{2},\widehat{\sigma
}^{3}\right)   &  =-\gamma^{T_{1}}.
\end{align*}
We will show that no player can improve his payoff by unilaterally changing
his strategy.

\begin{enumerate}
\item Suppose $R$ uses some strategy $\sigma^{3}$ and the capture time of
$\left(  s_{0},\widehat{\sigma}^{1},\widehat{\sigma}^{2},\sigma^{3}\right)  $
is $T_{2}$. By the optimality (in CR) of $\widehat{\sigma}^{1},\widehat
{\sigma}^{2},\widehat{\sigma}^{3}$, we have $T_{2}\leq T_{1}$ and so
\[
Q^{3}\left(  s_{0},\widehat{\sigma}^{1},\widehat{\sigma}^{2},\sigma
^{3}\right)  =-\gamma^{T_{2}}\leq-\gamma^{T_{1}}=Q^{3}\left(  s_{0}%
,\widehat{\sigma}^{1},\widehat{\sigma}^{2},\widehat{\sigma}^{3}\right)  .
\]
So $R$ has no motive to deviate from $\widehat{\sigma}^{3}$.

\item Similarly, suppose $C_{1}$ uses some strategy $\sigma^{1}$ and the
capture time of $\left(  s_{0},\sigma^{1},\widehat{\sigma}^{2},\widehat
{\sigma}^{3}\right)  $ is $T_{2}$; if $T_{2}=\infty$ we have no capture;
otherwise capture can be effected by either $C_{1}$ or $C_{2}$. At any rate,
by the optimality of $\left(  \widehat{\sigma}^{1},\widehat{\sigma}%
^{2},\widehat{\sigma}^{3}\right)  $, we have $T_{2}\geq T_{1}$ and the maximum
possible payoff to $C_{1}$ is $\left(  1-\varepsilon\right)  \gamma^{T_{2}}$.
Since
\[
Q^{1}\left(  s_{0},\sigma^{1},\widehat{\sigma}^{2},\widehat{\sigma}%
^{3}\right)  \leq\left(  1-\varepsilon\right)  \gamma^{T_{2}}\leq\left(
1-\varepsilon\right)  \gamma^{T_{1}}=Q^{1}\left(  s_{0},\widehat{\sigma}%
^{1},\widehat{\sigma}^{2},\widehat{\sigma}^{3}\right)  ,
\]
$C_{1}$ has no motive to deviate from $\widehat{\sigma}^{1}$.

\item Finally, suppose $C_{2}$ uses some strategy $\sigma^{2}$ and the capture
time of $\left(  s_{0},\widehat{\sigma}^{1},\sigma^{2},\widehat{\sigma}%
^{3}\right)  $ is $T_{2}$. If $T_{2}=\infty$ we have no capture; otherwise
capture can be effected by either $C_{1}$ or $C_{2}$. If we have no capture
then%
\[
Q^{2}\left(  s_{0},\widehat{\sigma}^{1},\sigma^{2},\widehat{\sigma}%
^{3}\right)  =0<\varepsilon\gamma^{T_{1}}=Q^{2}\left(  s_{0},\widehat{\sigma
}^{1},\widehat{\sigma}^{2},\widehat{\sigma}^{3}\right)  .
\]
If capture is effected by $C_{1}$, we have $T_{2}\geq T_{1}$ and
\[
Q^{2}\left(  s_{0},\widehat{\sigma}^{1},\sigma^{2},\widehat{\sigma}%
^{3}\right)  =\varepsilon\gamma^{T_{2}}\leq\varepsilon\gamma^{T1}=Q^{2}\left(
s_{0},\widehat{\sigma}^{1},\widehat{\sigma}^{2},\widehat{\sigma}^{3}\right)
.
\]
Finally, if capture is effected by $C_{2}$, we have $T_{2}\geq T_{1}+1$ (if
$C_{2}$ could capture before $C_{1}$ this would be achieved by $\left(
s_{0},\widehat{\sigma}^{1},\widehat{\sigma}^{2},\widehat{\sigma}^{3}\right)
$) and, since $\left(  \gamma,\varepsilon\right)  \in\widetilde{\Omega^{3}}$
implies $\gamma<\frac{\varepsilon}{1-\varepsilon}$, we have%
\[
Q^{2}\left(  s_{0},\widehat{\sigma}^{1},\sigma^{2},\widehat{\sigma}%
^{3}\right)  =\left(  1-\varepsilon\right)  \gamma^{T_{2}}\leq\left(
1-\varepsilon\right)  \gamma^{T_{1}+1}<\varepsilon\gamma^{T_{1}}.
\]
In every case, $C_{2}$ has no motive to deviate from $\widehat{\sigma}^{2}$.
\end{enumerate}

Having assumed that the starting position $s_{0}$ and the strategy
profile$\left(  \widehat{\sigma}^{1},\widehat{\sigma}^{2},\widehat{\sigma}%
^{3}\right)  $ result in a capture by $C_{1}$, we have shown that no player
has a motive to change his strategy. By an analogous argument, the same holds
when $\left(  \widehat{\sigma}^{1},\widehat{\sigma}^{2},\widehat{\sigma}%
^{3}\right)  $ results in a capture by $C_{2}$ . As already mentioned,
$\left(  \widehat{\sigma}^{1},\widehat{\sigma}^{2},\widehat{\sigma}%
^{3}\right)  $ is a capturing profile, hence \emph{some} cop will capture the
robber, and no player has a motive to unilaterally change his strategy.
Consequently $\left(  \widehat{\sigma}^{1},\widehat{\sigma}^{2},\widehat
{\sigma}^{3}\right)  $ is a capturing NE\ of $\Gamma_{3}\left(  G|s_{0}%
,\gamma,\varepsilon\right)  $.
\end{proof}

\begin{remark}
\normalfont Note that in the above Theorem the NE $\left(  \widehat{\sigma
}^{1},\widehat{\sigma}^{2},\widehat{\sigma}^{3}\right)  $\ is positional.
\end{remark}

We also have the following.

\begin{theorem}
\label{prp0410}For any $G$ with $c\left(  G\right)  \geq2$, the following
holds:%
\[
\forall\left(  \gamma,\varepsilon\right)  \in\Omega^{3},\exists s_{0}\in
S:\text{there exists a non-capturing NE of }\Gamma_{3}\left(  G|s_{0}%
,\gamma,\varepsilon\right)  .
\]

\end{theorem}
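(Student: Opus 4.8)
The plan is to exhibit the required non-capturing NE as a \emph{threat-strategy profile} in the sense of Theorem \ref{prp0404}, for a carefully chosen initial state. Recall that any profile $\overline{\sigma}=(\overline{\sigma}^1,\overline{\sigma}^2,\overline{\sigma}^3)$ assembled (as in \eqref{eqthrstrat01}) from positional strategies $\phi_n^n$ that are optimal for $P_n$ in the auxiliary game $\Gamma_3^n(G|\cdot)$ is automatically a NE of $\Gamma_3(G|s_0)$; so the whole task is to pick $s_0$ and the constituent $\phi_n^n$'s so that the \emph{honest} play $(\phi_1^1,\phi_2^2,\phi_3^3)$ produces no capture.

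First I would choose $s_0$. Since $c(G)\ge 2$, the graph is not cop-win, so one-cop CR has a robber-win configuration; I would pick a vertex $a^*$ from which a single cop cannot capture a robber placed at some vertex $r^*\ne a^*$, and moreover pick $a^*$ ``peripheral'' enough that the component $W$ of $G-a^*$ containing $r^*$ lies entirely inside the set $R(a^*)$ of robber positions from which a lone cop at $a^*$ never captures. I would then set $s_0=(a^*,a^*,r^*,p)$, so both cops start stacked at $a^*$ and $s_0\in S_{NC}$.

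Next I would read off the threat components. At any state $s=(a^*,a^*,w,\cdot)$ with $w\in W$, the cop $C_1$ in $\Gamma_3^1(G|s)$ (facing a robber-friendly $C_2$ and the robber at $w\in W\subseteq R(a^*)$) has value $0$, and since $Q^1\ge 0$ identically, \emph{every} positional strategy is optimal for $C_1$ there---in particular ``$C_1$ stays at $a^*$''; symmetrically ``$C_2$ stays at $a^*$'' is optimal for $C_2$ in $\Gamma_3^2(G|s)$. I take $\phi_1^1,\phi_2^2$ to be these, and take $\phi_3^3$ to be an optimal positional robber strategy in $\Gamma_3^3(G|s_0)$---which is exactly two-cop CR---so that $\overline{\sigma}$ is a NE by Theorem \ref{prp0404}. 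To see it is non-capturing: along the honest play the two cops never leave $a^*$, so a capture could only happen if $\phi_3^3$ moved the robber onto $a^*$; but at a non-capture state $s=(a^*,a^*,w,3)$ the move ``stay at $w$'' is legal and leads to a non-capture state whose robber-value is $\ge-\gamma>-1$, whereas a move onto $a^*$ leads to a capture state whose robber-value is exactly $-1$, so the value-maximizing $\phi_3^3$ never chooses it. Hence no capture ever occurs, and $\overline{\sigma}$ is a non-capturing NE of $\Gamma_3(G|s_0,\gamma,\varepsilon)$ for the arbitrary $(\gamma,\varepsilon)\in\Omega^3$, which is the claim.

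The step I expect to be the main obstacle is the ``peripheral enough'' clause: for $\overline{\sigma}$ to be a genuine NE we need ``stay put'' to remain optimal for each cop in the auxiliary games at \emph{every} state reached in the honest play, i.e.\ the robber's $\phi_3^3$-trajectory against the two stacked cops must never leave $R(a^*)$. For $c(G)\ge3$ this evaporates: one instead takes $s_0$ from which even two cooperating cops cannot capture, so $\phi_3^3$ already evades \emph{every} pair of cop strategies and no stacking is needed. For $c(G)=2$ one must genuinely use that the stacked cops occupy (hence block) $a^*$, confining the robber to $W$, together with the domination structure of the optimal robber strategy---from a safe vertex $\phi_3^3$ never steps onto a vertex from which a lone cop at $a^*$ could corner the robber, such a step being strictly dominated. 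Pinning this down, and confirming that a legitimate $a^*$ with $W\subseteq R(a^*)$ always exists when $c(G)\ge 2$, is the technical heart of the argument.
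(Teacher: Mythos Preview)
Your route through Theorem~\ref{prp0404} is genuinely different from the paper's, and the obstacle you identify is real and not dispatched by either of your proposed fixes. The paper does \emph{not} use the threat-strategy framework here. It builds the NE directly: from $s_0=(x,x,y,1)$ with $y$ a robber-win vertex against a lone cop at $x$, both cops play the positional rule ``if co-located with the other cop, stay; otherwise move along a shortest path toward him,'' and the robber plays the history-dependent rule ``stay put while both cops do; once some cop first moves, switch to an optimal one-cop-CR evasion keyed to that cop's moves.'' The NE property is then verified by a bare deviation check: a deviating cop is immediately shadowed by the other, so the pair acts as a single cop, which the robber (now in one-cop mode) evades since $c(G)\ge 2$. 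Crucially the robber's equilibrium strategy is \emph{not} required to be optimal in $\Gamma_3^3$, which is precisely the constraint that creates your difficulty.

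On your side, the component condition $W\subseteq R(a^*)$ already fails on $C_4$: whatever $a^*$ you pick, $G-a^*$ is a path containing both neighbours of $a^*$, and those are never in $R(a^*)$. Your ``strictly dominated'' fallback is also false as stated: attach a single pendant to $C_4$ at the vertex $c$ opposite $a^*$; then at $(a^*,a^*,c,3)$ the robber move onto the pendant (unsafe for one-cop CR) is \emph{tied} with staying at $c$ in two-cop CR, not strictly worse, so an optimal $\phi_3^3$ may well leave $R(a^*)$. One could hope to exploit tie-breaking, but the residual claim---that for every $G$ with $c(G)=2$ there exist $a^*,r^*$ and an optimal positional $\phi_3^3$ whose trajectory from $(a^*,a^*,r^*)$ against stationary cops never exits $R(a^*)$---is a non-trivial structural fact about two-cop CR that you have not established and that does not follow from anything in the paper. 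The paper's construction sidesteps the whole issue by letting the robber simply sit at $y$ on the equilibrium path and handling cop deviations via shadowing rather than via the $\Gamma_3^n$ punishments.
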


\begin{proof}
Choose an $s_{0}=\left(  x,x,y,1\right)  $ of the following form: $x$ can be
any vertex of $G$ and $y$ is such that, when the one-cop CR\ is started from
$s_{0}^{\prime}=\left(  x,y,1\right)  $, the robber can avoid capture (such an
$s_{0}$ will always exist, since $c\left(  G\right)  \geq2$). The strategies
are chosen as follows.

\begin{enumerate}
\item $R$'s strategy $\widehat{\sigma}^{3}$ is the following:\ 

\begin{enumerate}
\item as long as $C_{1},C_{2}$ stay in place $R$ also stays in place;

\item if at some time $C_{1}$ (resp. $C_{2}$) is the first cop to move, $R$
starts playing an optimal one-cop CR strategy which corresponds to the $C_{1}$
(resp. $C_{2}$)\ moves.
\end{enumerate}

\item $C_{1}$'s strategy $\widetilde{\sigma}^{1}\ $is defined as follows:

\begin{enumerate}
\item if $C_{1}$ and $C_{2}$ are in the same vertex, $C_{1}$ stays in place;

\item if $C_{1}$ and $C_{2}$ are in different vertices, $C_{1}$ moves in a
shortest path towards $C_{2}$.
\end{enumerate}

\item \noindent$C_{2}$'s strategy $\widetilde{\sigma}^{2}\ $is the same as
$\widetilde{\sigma}^{1}$, with the roles of $C_{1}$ and $C_{2}$ interchanged.
\end{enumerate}

\noindent We now show that $\left(  \widetilde{\sigma}^{1},\widetilde{\sigma
}^{2},\widehat{\sigma}^{3}\right)  $ is a non-capturing NE of $\Gamma
_{3}\left(  G|s_{0},\gamma,\varepsilon\right)  $. First, since $C_{1}$ and
$C_{2}$ start at the same vertex $x$, by $\widetilde{\sigma}^{1}%
,\widetilde{\sigma}^{2}$ they will never move towards $y$; hence, under
$\left(  \widetilde{\sigma}^{1},\widetilde{\sigma}^{2},\widehat{\sigma}%
^{3}\right)  $, $R$ is not captured.

\begin{enumerate}
\item Hence $Q^{3}\left(  s_{0},\widetilde{\sigma}^{1},\widetilde{\sigma}%
^{2},\widehat{\sigma}^{3}\right)  =0$ and, clearly, $R$ cannot improve his
payoff, i.e.,%
\begin{equation}
\forall\sigma^{3}:0=Q^{3}\left(  s_{0},\widetilde{\sigma}^{1},\widetilde
{\sigma}^{2},\widehat{\sigma}^{3}\right)  \geq Q^{3}\left(  s_{0}%
,\widetilde{\sigma}^{1},\widetilde{\sigma}^{2},\sigma^{3}\right)  .
\label{eq030303}%
\end{equation}

\item Now suppose $C_{1}$ uses some $\sigma^{1}\neq\widetilde{\sigma}^{1}$ by
which, at the start of the game, he moves to some $x^{\prime}$ neighbor of
$x$. However, immediately afterwards $C_{2}$ moves by $\widetilde{\sigma}^{2}$
to the same $x^{\prime}$. In other words, $C_{1}$ and $C_{2}$ essentially move
as one cop and, since $c\left(  G\right)  \geq2$ and $R$ plays optimally,
capture will never occur. Hence
\begin{equation}
\forall\sigma^{1}:0=Q^{1}\left(  s_{0},\widetilde{\sigma}^{1},\widetilde
{\sigma}^{2},\widehat{\sigma}^{3}\right)  \geq Q^{1}\left(  s_{0},\sigma
^{1},\widetilde{\sigma}^{2},\widehat{\sigma}^{3}\right)  =0. \label{eq030304}%
\end{equation}

\item The case of $C_{2}$ is similar, but attention must be paid to some
details. Suppose $C_{2}$ uses some $\sigma^{2}\neq\widetilde{\sigma}^{2}$ by
which his frst nontrivial move is to some $x^{\prime}$ neighbor of $x$. After
him moves $R$ and, since he plays optimally, he will never move into a
\textquotedblleft vulnerable\textquotedblright\ position; in particular he
will not move into $x\in N\left[  x^{\prime}\right]  $, since then $C_{2}$
could capture him in CR; hence $R$ will never run into $C_{1}$; neither will
he be captured by $C_{2}$, since he plays optimally. So%
\begin{equation}
\forall\sigma^{2}:0=Q^{2}\left(  s_{0},\widehat{\sigma}^{1},\widehat{\sigma
}^{2},\widehat{\sigma}^{3}\right)  \geq Q^{2}\left(  s_{0},\widehat{\sigma
}^{1},\sigma^{2},\widehat{\sigma}^{3}\right)  =0. \label{eq030305}%
\end{equation}

\end{enumerate}

Combining (\ref{eq030303})-(\ref{eq030305}) we see that $\widehat{\sigma
}=\left(  \widehat{\sigma}^{1},\widehat{\sigma}^{2},\widehat{\sigma}%
^{3}\right)  $ is a non-capturing NE of $\Gamma_{3}\left(  G|s_{0}\right)  $.
\end{proof}

The above result is rather surprising when $G\ $has $c\left(  G\right)
=2$:\ while \emph{in CR }played on $G$ two optimally playing (and cooperating)
cops always capture the robber, \emph{in SCAR}\ played on the same graph there
exist non-capturing NE (even when $\varepsilon=\frac{1}{2}$, the cops'
interests coincide and they have the motive to  cooperate fully).

On the other hand, the result is \emph{not} suprprising when applied to $G$'s
with $c\left(  G\right)  \geq3$. In fact, in this case Theorem \ref{prp0410}
can be strengthened significantly: there will always exist some state with
\emph{only} non-capturing NE\footnote{However, we still have initial positions
with capturing NE; e.g., when all players start at the same vertex.}.

\begin{theorem}
\label{prp0411}For any $G$ with $c\left(  G\right)  \geq3$ the following
holds:%
\[
\forall\left(  \gamma,\varepsilon\right)  \in\Omega^{3},\exists s_{0}\in
S:\text{every NE of }\Gamma_{3}\left(  G|s_{0},\gamma,\varepsilon\right)
\text{ is non-capturing.}%
\]

\end{theorem}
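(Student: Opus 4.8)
The plan is to choose the starting position $s_0$ from which even two \emph{fully cooperating} cops cannot catch the robber, and then to observe that from such a position the robber, acting on his own, can guarantee eternal evasion no matter what the two cop players do. Since the robber can thus secure payoff $0$ from $s_0$, the Nash condition forces his payoff to be $0$ at \emph{every} NE, and that is incompatible with capture.

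First I would fix $s_0$. Since $c\left(G\right)\geq3$, two cop tokens do not guarantee a finite capture time in (modified) CR, i.e.\ $T_3\left(G\right)=\infty$; as $T_3\left(G\right)=\max_{s_0}T_3\left(G|s_0\right)$, there is a state $s_0$ --- necessarily in $S_{NC}$ and of the form $\left(x^1,x^2,x^3,p\right)$ --- with $T_3\left(G|s_0\right)=\infty$. Fix this $s_0$ for the remainder of the argument.

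Second, I would extract the robber's ``universal'' evasion strategy. The two-cop CR game started at $s_0$ is precisely $\Gamma_3^3\left(G|s_0,\gamma,\varepsilon\right)$; by Lemma \ref{prp0403} (equivalently, by the standard CR theory recalled in Section \ref{sec03}) it possesses optimal strategies, and since $T_3\left(G|s_0\right)=\infty$ the robber player's optimal strategy $\widehat{\sigma}^3$ is \emph{winning} --- it guarantees that the robber is never captured, whatever strategies $\sigma^1,\sigma^2$ the two cop tokens employ. Since $\Gamma_3^3\left(G|s_0,\gamma,\varepsilon\right)$ and $\Gamma_3\left(G|s_0,\gamma,\varepsilon\right)$ have identical states, moves and histories, and differ only in their payoff functions, $\widehat{\sigma}^3$ is equally a legal robber strategy in SCAR and produces no capture against any pair $\left(\sigma^1,\sigma^2\right)$; hence
\[
\forall\sigma^1,\forall\sigma^2:\quad Q^3\left(s_0,\sigma^1,\sigma^2,\widehat{\sigma}^3\right)=0 .
\]

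Finally I would close the argument. Let $\sigma_{\ast}=\left(\sigma_{\ast}^1,\sigma_{\ast}^2,\sigma_{\ast}^3\right)$ be an arbitrary NE of $\Gamma_3\left(G|s_0,\gamma,\varepsilon\right)$. Applying the NE condition (\ref{eq03001}) to the robber with the deviation $\widehat{\sigma}^3$ yields $Q^3\left(s_0,\sigma_{\ast}\right)\geq Q^3\left(s_0,\sigma_{\ast}^1,\sigma_{\ast}^2,\widehat{\sigma}^3\right)=0$. On the other hand, by (\ref{eq020401}) and (\ref{eq020403}) the robber's total payoff never exceeds $0$, and it equals $0$ precisely when capture never occurs (otherwise it is $-\gamma^{T_C}<0$, since $\gamma\in\left(0,1\right)$). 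Hence $Q^3\left(s_0,\sigma_{\ast}\right)=0$ and $\sigma_{\ast}$ is non-capturing; as $\sigma_{\ast}$ was arbitrary, every NE of $\Gamma_3\left(G|s_0,\gamma,\varepsilon\right)$ is non-capturing. The only step requiring genuine care is the second one --- upgrading ``optimal play yields $T_3\left(G|s_0\right)=\infty$'' to ``one fixed robber strategy defeats \emph{all} cop strategy pairs, even in SCAR'' --- but this is immediate from the existence of an optimal (winning) strategy for the robber in the zero-sum game $\Gamma_3^3$ furnished by Lemma \ref{prp0403}; there is no combinatorial obstacle, and, exactly as for Theorems \ref{prp0401} and \ref{prp0404}, the statement in fact remains valid for every $\varepsilon\in\left[0,1\right]$.
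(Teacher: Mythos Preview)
Your proof is correct and follows essentially the same approach as the paper: choose $s_{0}$ from which the robber can evade two cooperating cops, invoke the robber's winning strategy $\widehat{\sigma}^{3}$ in two-cop CR, and conclude that at any NE the robber's payoff must be $0$ (hence non-capturing) because he could otherwise unilaterally deviate to $\widehat{\sigma}^{3}$. The paper additionally remarks that every profile $\left(\sigma^{1},\sigma^{2},\widehat{\sigma}^{3}\right)$ is itself a non-capturing NE, but this is not needed for the stated theorem.
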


\begin{proof}
Choose an $s_{0}=\left(  x,y,z,1\right)  $ such that in the two-cop
CR\ started from $s_{0}$ the robber can avoid capture; this can always be
achieved, since $c\left(  G\right)  \geq3$, provided $R$ uses an optimal (in
two-cop CR) strategy $\widehat{\sigma}^{3}$. Also take any cop strategies
$\sigma^{1}$, $\sigma^{2}$. Then the profile $\left(  \sigma^{1},\sigma
^{2},\widehat{\sigma}^{3}\right)  $ will not result in capture, in either
two-cop CR or in $\Gamma_{3}\left(  G|s_{0}\right)  $. The $\Gamma_{3}\left(
G|s_{0}\right)  $ payoffs will be%
\[
\forall\sigma^{1},\sigma^{2},\quad\forall n\in\left\{  1,2,3\right\}
:Q^{n}\left(  s_{0},\sigma^{1},\sigma^{2},\widehat{\sigma}^{3}\right)  =0.
\]
Clearly, no player can improve his payoff by unilaterally changing his
strategy. Hence, for every $\sigma^{1},\sigma^{2}$, $\left(  \sigma^{1}%
,\sigma^{2},\widehat{\sigma}^{3}\right)  $ is a non-capturing NE in
$\Gamma_{3}\left(  G|s_{0}\right)  $. On the other hand, take \emph{any} NE
$\left(  \sigma^{1},\sigma^{2},\sigma^{3}\right)  $ of $\Gamma_{3}\left(
G|s_{0}\right)  $; then we must have
\[
Q^{3}\left(  s_{0},\sigma^{1},\sigma^{2},\sigma^{3}\right)  =0
\]
because otherwise $R$ could use $\widehat{\sigma}^{3}$ and unilaterally
improve his payoff. Hence \emph{every} NE $\left(  \sigma^{1},\sigma
^{2},\sigma^{3}\right)  $ of $\Gamma_{3}\left(  G|s_{0}\right)  $ is non-capturing.
\end{proof}

The following corollary illuminates the connection of capturing and
non-capturing NE of $\Gamma_{3}\left(  G|s_{0},\gamma,\varepsilon\right)  $ to
the classic cop number. The first part of the corollary is obtained from
Theorem \ref{prp0410}; the second from Theorems \ref{prp0405} and
\ref{prp0407}.

\begin{corollary}
\label{prp0412}Given a graph $G$:

\begin{enumerate}
\item suppose that for all $\left(  \gamma,\varepsilon\right)  \in\Omega^{3}$
and $s_{0}\in S$, every NE of $\Gamma_{3}\left(  G|s_{0},\gamma,\varepsilon
\right)  $ is capturing; then $c\left(  G\right)  =1$.

\item suppose that for all $\left(  \gamma,\varepsilon\right)  \in\Omega^{3}$
there exists some $s_{0}\in S$ such that every NE of $\Gamma_{3}\left(
G|s_{0},\gamma,\varepsilon\right)  $ is non-capturing; then $c\left(
G\right)  \geq3$.
\end{enumerate}
\end{corollary}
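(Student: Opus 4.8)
The plan is to derive both parts by contraposition, treating Theorems \ref{prp0405}, \ref{prp0407}, \ref{prp0410} (and \ref{prp0401}) as black boxes, and using that $c(G)\ge 1$ for every connected graph, so that $c(G)\neq 1$ is the same as $c(G)\ge 2$.

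For the first part I would assume $c(G)\ge 2$ and apply Theorem \ref{prp0410}: for every $\left(\gamma,\varepsilon\right)\in\Omega^{3}$ there is an $s_{0}\in S$ such that $\Gamma_{3}\left(G|s_{0},\gamma,\varepsilon\right)$ has a non-capturing NE. Fixing one such triple $\left(\gamma,\varepsilon,s_{0}\right)$ already exhibits a NE that is not capturing, contradicting the hypothesis that every NE of every $\Gamma_{3}\left(G|s_{0},\gamma,\varepsilon\right)$ is capturing. Hence $c(G)=1$.

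For the second part I would assume $c(G)\le 2$, fix an arbitrary $\left(\gamma,\varepsilon\right)\in\Omega^{3}$, and split into two cases. If $c(G)=1$, Theorem \ref{prp0405} says that for every $s_{0}$ \emph{every} NE of $\Gamma_{3}\left(G|s_{0},\gamma,\varepsilon\right)$ is capturing; combining this with Theorem \ref{prp0401}, which guarantees at least one (positional) NE for each $s_{0}$, we conclude that each $s_{0}$ admits a capturing NE. If $c(G)=2$, Theorem \ref{prp0407} directly gives, for each $s_{0}$, a capturing NE of $\Gamma_{3}\left(G|s_{0},\gamma,\varepsilon\right)$. In either case, for this fixed $\left(\gamma,\varepsilon\right)$ there is no $s_{0}$ all of whose NE are non-capturing, contradicting the hypothesis of part~2. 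Therefore $c(G)\ge 3$.

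The statement is essentially a repackaging of the earlier theorems, so I do not expect a real obstacle. The two points that need care are: (i) negating the quantifier structure of each hypothesis correctly --- in part~1 a single bad triple $\left(\gamma,\varepsilon,s_{0}\right)$ suffices, whereas in part~2 one must produce, for a well-chosen $\left(\gamma,\varepsilon\right)$, a capturing NE at \emph{every} $s_{0}$; and (ii) in the $c(G)=1$ subcase of part~2, remembering to invoke Theorem \ref{prp0401}, so that ``every NE is capturing'' is not vacuous (otherwise ``every NE is non-capturing'' could also hold vacuously and the argument would collapse).
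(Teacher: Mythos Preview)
Your proposal is correct and follows essentially the same route as the paper, which simply notes that part~1 comes from Theorem~\ref{prp0410} and part~2 from Theorems~\ref{prp0405} and~\ref{prp0407}. Your explicit invocation of Theorem~\ref{prp0401} to rule out the vacuous case in the $c(G)=1$ subcase is a nice bit of care that the paper leaves implicit.
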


Finally, combining Theorem \ref{prp0405} \ and the first part of Corollary
\ref{prp0412} we get the following.

\begin{corollary}
\label{prp0414}$G$ is cop-win iff : for all $\left(  \gamma,\varepsilon
\right)  \in\Omega^{3}$ and $s_{0}\in S$, every NE\ of $\Gamma_{3}\left(
G|s_{0},\gamma,\varepsilon\right)  $ is capturing.
\end{corollary}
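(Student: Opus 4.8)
The plan is to obtain Corollary \ref{prp0414} as an immediate packaging of Theorem \ref{prp0405} together with the first part of Corollary \ref{prp0412}; no new argument is needed beyond recalling the equivalence established in Section \ref{sec03} that $G$ is cop-win if and only if $c(G)=1$.

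For the ``only if'' direction I would assume $G$ is cop-win, i.e. $c(G)=1$, and apply Theorem \ref{prp0405} directly: it yields that for every $(\gamma,\varepsilon)\in\Omega^3$ and every $s_0\in S$, every NE of $\Gamma_3(G|s_0,\gamma,\varepsilon)$ is capturing, which is precisely the right-hand side of the claimed equivalence. For the ``if'' direction I would invoke Corollary \ref{prp0412}(1) exactly as stated: if for all $(\gamma,\varepsilon)\in\Omega^3$ and all $s_0\in S$ every NE of $\Gamma_3(G|s_0,\gamma,\varepsilon)$ is capturing, then $c(G)=1$, i.e. $G$ is cop-win. Combining the two implications gives the biconditional.

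There is no hard step here. The only thing to check is that the universal quantifiers over $(\gamma,\varepsilon)\in\Omega^3$ and $s_0\in S$ on the right-hand side of the equivalence are exactly those appearing in the hypothesis of Theorem \ref{prp0405} and in the hypothesis of Corollary \ref{prp0412}(1); since they coincide verbatim, the two cited results plug in with no weakening or strengthening, and all the substantive content (the one-cop chasing argument behind Theorem \ref{prp0405}, and the $c(G)\ge 2$ evasion construction of Theorem \ref{prp0410} that underlies Corollary \ref{prp0412}(1)) has already been done.
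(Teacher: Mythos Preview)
Your proposal is correct and matches the paper's own approach exactly: the corollary is obtained by combining Theorem \ref{prp0405} (for the ``only if'' direction) with the first part of Corollary \ref{prp0412} (for the ``if'' direction). Nothing further is needed.
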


\section{$N$-Player SCAR\label{sec05}}

\subsection{Preliminaries\label{sec0501}}

The generalization of $\Gamma_{3}\left(  G\right)  $ to $\Gamma_{N}\left(
G\right)  $, i.e., the $N$-player SCAR game is straightforward. For any
$N\geq2$, $\Gamma_{N}\left(  G\right)  $ is played by $N-1$ cops (denoted by
$C_{1},...,C_{N-1}$) and a robber (denoted by $R$)\ who move along the edges
of $G$. \footnote{Note that the case $N=2$, i.e., one cop vs. one robber, is
also included in the formulation.} The game starts from a prescribed initial
position $s_{0}$ and is played in turns, with a single player moving at every
turn; the moving sequence is $...\rightarrow C_{1}\rightarrow C_{2}%
\rightarrow...\rightarrow C_{N-1}\rightarrow R\rightarrow...$. The following
briefly presented quantities are straightforward generalizations of those
defined in Section \ref{sec02}.

The player set is $I=\left\{  1,2,...,N\right\}  $ or $I=\left\{  C_{1}%
,C_{2},...,C_{N-1},R\right\}  $. A game position or state has the form
$s=\left(  x^{1},x^{2},...,x^{N},p\right)  $, where $x^{n}$ denotes the
position of the $n$-th player and $p$ denotes the player who has the next
move. For $n\in\left\{  1,2,...,N\right\}  $, set
\[
S^{n}=\left\{  s=\left(  x^{1},...,x^{N},n\right)  :(x^{1},...,x^{N})\in
V^{N}\text{ and }n\in I\right\}
\]
is the set of states where player $n$ has the next move. The set $S$ of all
states of the game is%
\[
S=S^{1}\cup S^{2}\cup...\cup S^{N}\cup\left\{  \tau\right\}  \text{,}%
\]
where $\tau$ is as before the terminal state and the set $S^{\prime
}=S\backslash\left\{  \tau\right\}  $ is the set of non-terminal states. We
also define the set $S_{C}$ of capture states and the set $S_{NC}$ of
non-capture states as follows:%
\begin{align*}
S_{C}  &  :=\{s=(x^{1},x^{2},...,x^{N},n)\in S^{\prime}:\exists i\in
\{1,2,...,N-1\}:x^{i}=x^{N}\},\\
S_{NC}  &  :=\{s=(x^{1},x^{2},...,x^{N},n)\in S^{\prime}:\forall
i\in\{1,2,...,N-1\}\text{ }x^{i}\neq x^{N}\text{ }\}\text{.}%
\end{align*}
An alternative partition of $S$ is therefore%
\[
S=S_{C}\cup S_{NC}\cup\left\{  \tau\right\}  \text{.}%
\]
Moreover, and since we can have simultaneous captures by any subset
$\{n_{1},n_{2},...\}$ of $\{1,2,...,$ $N-1\}$, we define sets $S_{C}^{n_{1}},$
$S_{C}^{n_{1}n_{2}},...,$ $S_{C}^{12...N-1}$ analogously to the sets
$S_{C}^{1}$, $S_{C}^{2}$, $S_{C}^{12}$; the union of all these sets is of
course $S_{C}$.

Action sets $A^{n}(s)$ and the transition function $\mathbf{T}\left(
s,a\right)  $ are defined in a similar fashion as in Section \ref{sec02}. In
case at time $t$ the state is $s_{t}=\left(  x_{t}^{1},x_{t}^{2},...,x_{t}%
^{N},n\right)  \in S^{n}$ and at time $t+1$ the move by player $n$ is
$a_{t+1}^{n}\in A^{n}(s_{t})$, then we use again the shorthand
\[
s_{t+1}=\mathbf{T}\left(  s_{t},a_{t+1}^{n}\right)  \text{.}%
\]

Capture time $T_{C}$, histories and strategies are also defined analogously to
Section \ref{sec02}. The same is true for payoffs. Specifically, at every
non-capture state $s_{t}\in S_{NC}\cup\left\{  \tau\right\}  $, the immediate
reward to each player $q^{n}(s_{t})$ is zero; at every state $s_{t}\in
S_{C}^{n_{1}...n_{N_{1}}}$ (i.e., when capture is effected by $N_{1}$ cops)
the robber's loss is $\varepsilon$ and this is distributed between the $N-1$
cops as follows.

\begin{enumerate}
\item When $1\leq N_{1}\leq N-2$: each capturing (resp. non-capturing) cop
receives an immediate reward of $\frac{1-\varepsilon}{N_{1}}$ (resp.
$\frac{\varepsilon}{N-N_{1}-1}$).

\item When $N_{1}=N-1$: all cops are capturing and each receives an immediate
reward of $\frac{1}{N-1}$.
\end{enumerate}

\noindent The \emph{total} payoff of player $n$ is $Q^{n}\left(  s_{0}%
,s_{1},s_{2},...\right)  =\sum_{t=0}^{\infty}\gamma^{t}q^{n}\left(
s_{t}\right)  $. The $\left(  \gamma,\varepsilon\right)  $ sets now are
\begin{align*}
\Omega^{N}  &  =\left\{  \left(  \gamma,\varepsilon\right)  :\gamma\in\left(
0,1\right)  ,\varepsilon\in\left[  0,\frac{1}{N-1}\right]  \right\}  =\left(
0,1\right)  \times\left[  0,\frac{1}{N-1}\right]  ,\\
\widetilde{\Omega}^{N}  &  =\left\{  \left(  \gamma,\varepsilon\right)
:\gamma\in\left(  0,1\right)  ,\varepsilon\in\left[  0,\frac{1}{N-1}\right]
,\gamma<\frac{\varepsilon}{1-\varepsilon}\right\}  .
\end{align*}
The choice $\varepsilon\in\left[  0,\frac{1}{N-1}\right]  $ ensures
satisfaction of the intuitive requirement that capturing cops should get at
least as much as non-capturing ones:
\begin{align*}
\left.  \varepsilon\leq\frac{1}{N-1}\right.   &  \Rightarrow\left(  \forall
N_{1}\in\left\{  1,2,...,N-2\right\}  :\varepsilon\leq\frac{N-1-\left(
N-2\right)  }{N-1}\leq\frac{N-1-N_{1}}{N-1}\right) \\
&  \Rightarrow\left(  \forall N_{1}\in\left\{  1,2,...,N-2\right\}
:\varepsilon\left(  N-1\right)  \leq N-1-N_{1}\right) \\
&  \Rightarrow\left(  \forall N_{1}\in\left\{  1,2,...,N-2\right\}
:\varepsilon N_{1}\leq\left(  1-\varepsilon\right)  \left(  N-1-N_{1}\right)
\right) \\
&  \Rightarrow\left(  \forall N_{1}\in\left\{  1,2,...,N-2\right\}
:\frac{\varepsilon}{N-1-N_{1}}\leq\frac{1-\varepsilon}{N_{1}}\right)  .
\end{align*}
In addition, again agreeing with our intuition, each capturing cop's reward is
a decreasing function of $N_{1}$. Indeed, when $1\leq N_{1}\leq N-2$, their
reward is $\frac{1-\varepsilon}{N_{1}}$ which is decreasing in $N_{1}$, with
minimum value achieved at $N_{1}=N-2$ and equal to $\frac{1-\varepsilon}{N-2}%
$; and when $N_{1}=N-1$ (all cops are capturing)\ we have:
\[
\varepsilon\leq\frac{1}{N-1}\Rightarrow\frac{1-\varepsilon}{N-2}\geq
\frac{1-\frac{1}{N-1}}{N-2}=\frac{1}{N-1}.
\]
In short, the fewer capturing cops we have, the more is each of them rewarded.

It is however worth noting that under Nash equilibrium we will always have a
single capturing cop: the only way to have multiple capturing cops is if the
robber moves into a cop-occupied vertex. This he will never do in NE (where he
wants to maximize capture time) since he can always postpone capture by
staying in place.

\subsection{Existence of NE\label{sec0502}}

The next theorem shows the existence of positional NE\ for every $\Gamma
_{N}\left(  G|s_{0}\right)  $. It generalizes Theorem \ref{prp0401} and is
proved very similarly; hence the proof is omitted.

\begin{theorem}
\label{prp0501}For every graph $G$ and for every $s_{0}\in S$, $\left(
\gamma,\varepsilon\right)  \in\Omega^{N}$ the game $\Gamma_{N}\left(
G|s_{0},\gamma,\varepsilon\right)  $ has deterministic positional NE.
Specifically, there exists a deterministic positional profile $\sigma_{\ast
}=\left(  \sigma_{\ast}^{1},\sigma_{\ast}^{2},...,\sigma_{\ast}^{N}\right)  $
such that%
\begin{equation}
\forall n,\forall s_{0},\forall\sigma^{n}:Q^{n}\left(  s_{0},\sigma_{\ast}%
^{n},\sigma_{\ast}^{-n}\right)  \geq Q^{n}\left(  s_{0},\sigma^{n}%
,\sigma_{\ast}^{-n}\right)  .
\end{equation}
For every $s$ and $n$, let $u^{n}\left(  s\right)  =Q^{n}\left(
s,\sigma_{\ast}\right)  $. Then the \ following equations are satisfied%
\begin{align}
\forall n,\forall s  &  \in S^{n}:\sigma_{\ast}^{n}\left(  s\right)  =\arg
\max_{a^{n}\in A^{n}\left(  s\right)  }\left[  q^{n}\left(  s\right)  +\gamma
u^{n}\left(  \mathbf{T}\left(  s,a^{n}\right)  \right)  \right]  ,\\
\forall n,m,\forall s  &  \in S^{n}:u^{m}\left(  s\right)  =q^{m}\left(
s\right)  +\gamma u^{m}\left(  \mathbf{T}\left(  s,\sigma_{\ast}^{n}\left(
s\right)  \right)  \right)  .
\end{align}

\end{theorem}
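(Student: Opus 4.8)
The plan is to follow the proof of Theorem \ref{prp0401} almost verbatim, replacing the player index set $\{1,2,3\}$ by $\{1,2,\dots,N\}$; nothing in that argument used $N=3$. First I would invoke Fink's theorem \cite{fink1963}: since $\Gamma_{N}\left(  G|s_{0},\gamma,\varepsilon\right)$ is a finite $N$-player discounted stochastic game (finitely many states, because $V$ is finite and there is one terminal state; finitely many actions, because each $A^{n}(s)$ is a closed neighborhood or a singleton), it possesses a positional NE in probabilistic positional strategies $\pi=(\pi^{1},\dots,\pi^{N})$ whose value vector $\mathfrak{u}=(\mathfrak{u}^{1},\dots,\mathfrak{u}^{N})$ satisfies, for all $m$ and all $s$,
\begin{equation*}
\mathfrak{u}^{m}(s)=\max_{\pi^{m}(s)}\sum_{a^{1}\in A^{1}(s)}\cdots\sum_{a^{N}\in A^{N}(s)}\left(\prod_{k=1}^{N}\pi^{k}(a^{k}|s)\right)\left[q^{m}(s)+\gamma\sum_{s'}\Pr(s'|s,a^{1},\dots,a^{N})\,\mathfrak{u}^{m}(s')\right].
\end{equation*}

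Next I would exploit the turn structure exactly as in the three-player case. Fix $n$ and a state $s\in S^{n}$. For every $m\neq n$ we have $A^{m}(s)=\{a^{m}\}$, so $\pi^{m}(a^{m}|s)=1$; and since transitions in $\Gamma_{N}$ are deterministic, $\sum_{s'}\Pr(s'|s,a^{1},\dots,a^{N})\,\mathfrak{u}^{m}(s')=\mathfrak{u}^{m}(\mathbf{T}(s,a^{n}))$. Hence for the payoff index $m=n$ the equation above collapses to
\begin{equation*}
\mathfrak{u}^{n}(s)=\max_{\pi^{n}(s)}\sum_{a^{n}\in A^{n}(s)}\pi^{n}(a^{n}|s)\left[q^{n}(s)+\gamma\,\mathfrak{u}^{n}(\mathbf{T}(s,a^{n}))\right],
\end{equation*}
a maximization of a linear functional over the probability simplex on $A^{n}(s)$, attained at a vertex. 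Setting $\sigma_{\ast}^{n}(s)=\arg\max_{a^{n}\in A^{n}(s)}[q^{n}(s)+\gamma\,\mathfrak{u}^{n}(\mathbf{T}(s,a^{n}))]$ (ties broken arbitrarily), the maximizing $\pi^{n}(s)$ puts unit mass on $\sigma_{\ast}^{n}(s)$. Since this holds for every $n$ and every $s\in S^{n}$, each player may without loss use the deterministic positional strategy $\sigma_{\ast}^{n}$; therefore $\mathfrak{u}^{n}=u^{n}$ for all $n$, and substituting back yields the two displayed equations of the theorem---the $\arg\max$ characterization of $\sigma_{\ast}^{n}$ on $S^{n}$, and, for $m\neq n$ at states $s\in S^{n}$, the no-choice recursion $u^{m}(s)=q^{m}(s)+\gamma\,u^{m}(\mathbf{T}(s,\sigma_{\ast}^{n}(s)))$. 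That $\sigma_{\ast}=(\sigma_{\ast}^{1},\dots,\sigma_{\ast}^{N})$ is a positional NE is then inherited from Fink's theorem, since $\sigma_{\ast}$ realizes exactly the value vector $\mathfrak{u}$ that his equations certify as an equilibrium.

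I expect no genuinely hard step: the entire content is the one-line observation that at each of player $n$'s decision states the best response is a pure action, which is immediate from linearity of the objective in $\pi^{n}(s)$. The only matters requiring a little care are the verification that $\Gamma_{N}$ meets Fink's finiteness hypotheses and the bookkeeping that at a state $s\in S^{n}$ it is precisely player $n$ whose action set is non-trivial---so that the collapse of the probabilistic equilibrium equation is applied correctly for the index $m=n$ and, separately, for the indices $m\neq n$. As in the remarks following Theorem \ref{prp0401}, one should also note that $s_{0}$ enters the system in no special way, so the same $\sigma_{\ast}$ and $\mathbf{u}=(u(s))_{s\in S}$ work for every initial position, and that the argument is in fact valid for any $\varepsilon\in[0,1]$.
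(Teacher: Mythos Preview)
Your proposal is correct and follows essentially the same approach as the paper: the paper omits the proof of Theorem~\ref{prp0501} with the remark that it ``generalizes Theorem~\ref{prp0401} and is proved very similarly,'' and your write-up is precisely that generalization---invoking Fink's theorem, collapsing the probabilistic equilibrium equations at each $s\in S^{n}$ via the turn structure and deterministic transitions, and extracting the pure positional strategies by the linearity-of-the-objective argument.
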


The next theorem generalizes Theorem \ref{prp0404} and shows that every
$\Gamma_{N}\left(  G|s_{0}\right)  $ has NE which are not positional. The
proof (which is similar to that of Theorem \ref{prp0404} and hence will be
omitted) depends on auxiliary two-player zero-sum games $\Gamma_{N}^{1}\left(
G|s_{0}\right)  $, ..., $\Gamma_{N}^{N}\left(  G|s_{0}\right)  $, where
$\Gamma_{N}^{n}\left(  G|s_{0}\right)  $ is the two-player game with initial
state $s_{0}$ in which $P_{n}$ \ (who has payoff $Q^{n}$) plays against
$P_{-n}$ (who has payoff $-Q^{n}$ and controls $\left\{  1,2,...,N\right\}
\backslash\left\{  n\right\}  $). Similarly to the 3-player case, for each
$s\in S\ $and $n\in\left\{  1,2,...,N\right\}  $, the game $\Gamma_{N}%
^{n}\left(  G|s_{0}\right)  $ has a value and the players have optimal
deterministic positional strategies. Strategies $\phi_{n}^{n}$ and $\phi
_{n}^{m}$ are as in Section \ref{sec04}. The threat strategy of the $n$-th
player in the $N$-player game $\Gamma_{N}\left(  G|s_{0}\right)  $ is
$\overline{\sigma}^{n}$, defined (exactly as in Section \ref{sec04}) as
follows:%
\begin{equation}
\overline{\sigma}^{n}=\left\{
\begin{array}
[c]{ll}%
\phi_{n}^{n} & \text{as long as every player }m\in\left\{  1,2,...,N\right\}
\backslash n\text{ follows }\phi_{m}^{m}\text{; }\\
\phi_{m}^{n} & \text{as soon as some player }m\in\left\{  1,2,...,N\right\}
\backslash n\ \text{ \textquotedblleft deviates\textquotedblright\ from }%
\phi_{m}^{m}\text{.}%
\end{array}
\right.  \label{eqthrstrat02}%
\end{equation}

\noindent Keeping the above in mind, we can prove the following.

\begin{theorem}
\label{prp0502}For every graph $G$ and for every $N\geq3$, $\ s_{0}\in S$,
$\left(  \gamma,\varepsilon\right)  \in\Omega^{N}$, in the game $\Gamma
_{N}\left(  G|s_{0},\gamma,\varepsilon\right)  $ we have%
\begin{equation}
\forall n\in\left\{  1,2,...,N\right\}  ,\forall\sigma^{n}:Q^{n}\left(
s_{0},\overline{\sigma}^{1},\overline{\sigma}^{2},...,\overline{\sigma}%
^{N}\right)  \geq Q^{n}(s,\sigma^{n},\overline{\sigma}^{-n})
\end{equation}
where $\overline{\sigma}^{n}$ (for $n\in\left\{  1,2,...,N\right\}  $) is a
deterministic non-positional strategy of the form (\ref{eqthrstrat02}).
\end{theorem}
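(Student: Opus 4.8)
The plan is to transcribe, almost verbatim, the proof of Theorem~\ref{prp0404}, with the two bystander players $C_2,C_3$ there replaced by the $N-1$ players other than the deviator, and with the three-player auxiliary facts replaced by their $N$-player counterparts recorded just before the theorem: each $\Gamma_N^n(G|s_0)$ is a two-player zero-sum discounted perfect-information stochastic game, hence has a value and optimal deterministic positional strategies $\phi_n^n$ (for $P_n$) and $(\phi_n^m)_{m\ne n}$ (for $P_{-n}$); and $\Gamma_N(G|s_0)$ is path-equivalent to each $\Gamma_N^n(G|s_0)$. Fix an arbitrary $n\in\{1,\dots,N\}$ and an arbitrary legal strategy $\sigma^n$; it suffices to prove $Q^n(s_0,\overline\sigma)\ge Q^n(s_0,\sigma^n,\overline\sigma^{-n})$. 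Let $\widehat h=(\widehat s_0,\widehat s_1,\dots)$ be the history generated by $\overline\sigma$ from $s_0$, let $\widetilde h=(\widetilde s_0,\widetilde s_1,\dots)$ be the history generated by $(\sigma^n,\overline\sigma^{-n})$ from the same $s_0$, and set $T_n=\min\{t:\widetilde s_t\ne\widehat s_t\}$. If $T_n=\infty$ the histories coincide and equality holds, so I would assume $T_n<\infty$.

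Next I would pin down the structure at the divergence. Up to time $T_n-1$ both histories agree with the history produced when every player uses $\phi_m^m$; in particular no player $m\ne n$ has visibly deviated, so each $\overline\sigma^m$ is still in its $\phi_m^m$ phase, and since the $\phi_m^m$ are positional a mover $m\ne n$ at $\widehat s_{T_n-1}=\widetilde s_{T_n-1}$ would reproduce $\widehat s_{T_n}$. Hence $T_n$ is a turn at which $n$ has the move and at which $\sigma^n$ departs from $\phi_n^n$, and from that point on all $N-1$ players $m\ne n$ switch to $\phi_n^m$. Writing $s^\ast=\widehat s_{T_n-1}=\widetilde s_{T_n-1}$ and using $\widetilde s_t=\widehat s_t$ for $t<T_n$, it remains only to compare the tails $\sum_{t\ge T_n-1}\gamma^t q^n(\widehat s_t)$ and $\sum_{t\ge T_n-1}\gamma^t q^n(\widetilde s_t)$.

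The tail comparison is the three-inequality argument of Theorem~\ref{prp0404}. Along $\widehat h$ every player uses $\phi_m^m$ forever, and these being positional the tail from $s^\ast$ is exactly the play of $\Gamma_N(G|s^\ast)$ under $(\phi_1^1,\dots,\phi_N^N)$, which by path-equivalence is the play of $\Gamma_N^n(G|s^\ast)$ with $P_n$ using $\phi_n^n$ and $P_{-n}$ using $(\phi_m^m)_{m\ne n}$, contributing $\gamma^{T_n-1}Q^n(s^\ast,\phi_n^n,(\phi_m^m)_{m\ne n})$. Along $\widetilde h$, letting $\rho^n$ be the strategy induced on the subgame at $s^\ast$ by $\sigma^n$ together with the fixed (discarded) prefix $(\widetilde s_0,\dots,\widetilde s_{T_n-2})$, the tail is the play of $\Gamma_N^n(G|s^\ast)$ with $P_n$ using $\rho^n$ and $P_{-n}$ using its optimal $(\phi_n^m)_{m\ne n}$, contributing $\gamma^{T_n-1}Q^n(s^\ast,\rho^n,(\phi_n^m)_{m\ne n})$. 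Since in $\Gamma_N^n(G|s^\ast)$ the strategy $(\phi_n^m)_{m\ne n}$ is minimax for $P_{-n}$ and $\phi_n^n$ is maximin for $P_n$,
\[
Q^n(s^\ast,\rho^n,(\phi_n^m)_{m\ne n})\ \le\ Q^n(s^\ast,\phi_n^n,(\phi_n^m)_{m\ne n})\ \le\ Q^n(s^\ast,\phi_n^n,(\phi_m^m)_{m\ne n}),
\]
the first inequality because no $P_n$ response can exceed the value against $P_{-n}$'s optimal play and $\phi_n^n$ attains it, the second because $\phi_n^n$ guarantees $P_n$ at least the value against every $P_{-n}$ strategy. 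Multiplying through by $\gamma^{T_n-1}$ and restoring the common prefix $\sum_{t=0}^{T_n-2}\gamma^t q^n(\widehat s_t)$ on both sides yields $Q^n(s_0,\sigma^n,\overline\sigma^{-n})\le Q^n(s_0,\overline\sigma)$; as $n$ and $\sigma^n$ were arbitrary, $\overline\sigma$ is a Nash equilibrium, and the argument needs at least one bystander to carry out the punishment, which is why $N\ge3$ is assumed.

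The step I expect to demand the most care is not any inequality but the combinatorial bookkeeping at the divergence turn: checking that the first difference between $\widehat h$ and $\widetilde h$ must be produced by player $n$ on one of his own turns, that it is simultaneously detected by all $N-1$ remaining players (who then all switch to their $\phi_n^\cdot$ components), and that $\rho^n$ is a well-defined legal strategy of the subgame at $s^\ast$ (it may depend, harmlessly, on the discarded prefix). Everything past that point is a mechanical rewriting of the $N=3$ proof with ``$C_2,C_3$'' replaced throughout by ``the players $m\ne n$''.
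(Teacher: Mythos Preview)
Your proposal is correct and is precisely the argument the paper has in mind: the paper explicitly omits the proof, saying it ``is similar to that of Theorem~\ref{prp0404} and hence will be omitted,'' and your write-up is a faithful transcription of that proof with the single deviator's two opponents replaced by the $N-1$ players $m\ne n$ and the auxiliary games $\Gamma_3^n$ replaced by $\Gamma_N^n$. The only slip is cosmetic: in the three-player game the non-deviating players are $C_2$ and $R$ (players $2$ and $3$), not ``$C_2,C_3$''.
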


As in 3-player SCAR $\Gamma_{3}\left(  G|s_{0}\right)  $, the above hold for
any $\varepsilon\in\mathbb{[}0,1]$, not just for $\varepsilon\in\left[
0,\frac{1}{N-1}\right]  $.

\subsection{Cop Number, Capturing and Non-capturing NE\label{sec0503}}

The following results generalize those appearing in Section \ref{sec0403} and
hold for every $N\geq2$.

\begin{theorem}
\label{prp0503}For any $G$ with $c\left(  G\right)  =1$ the following holds:%
\[
\forall\left(  \gamma,\varepsilon\right)  \in\Omega^{N},\forall s_{0}\in
S:\text{every NE of }\Gamma_{N}\left(  G|s_{0},\gamma,\varepsilon\right)
\text{ is capturing. }%
\]

\end{theorem}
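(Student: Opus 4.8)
The plan is to mimic the proof of Theorem \ref{prp0405} (the three-player case), since the essential mechanism carries over verbatim to $N$ players. Fix any $G$ with $c\left(  G\right)  =1$, any $\left(  \gamma,\varepsilon\right)  \in\Omega^{N}$, and any $s_{0}\in S$. The case $s_{0}\in S_{C}$ is trivial, so assume $s_{0}\in S_{NC}$. Suppose, for contradiction, that $\left(  \sigma^{1},\ldots,\sigma^{N}\right)$ is a \emph{non-capturing} NE of $\Gamma_{N}\left(  G|s_{0},\gamma,\varepsilon\right)$. Then every cop player receives payoff $0$; in particular,
\[
\forall\rho^{1}:0=Q^{1}(s_{0},\sigma^{1},\sigma^{2},\ldots,\sigma^{N})\geq Q^{1}(s_{0},\rho^{1},\sigma^{2},\ldots,\sigma^{N}).
\]

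Next I would have $C_{1}$ deviate to a strategy $\widehat{\sigma}^{1}$ obtained (by the lifting construction described in the footnote to Theorem \ref{prp0405}) from an optimal cop strategy in one-cop CR on $G$. Since $c\left(  G\right)  =1$, this single cop captures $R$ in some finite time bounded above by $T_{2}\left(  G\right)$, regardless of $R$'s strategy; the presence of the other cops $C_{2},\ldots,C_{N-1}$ (playing $\sigma^{2},\ldots,\sigma^{N-1}$) can only cause capture to occur no later, either by one of them catching $R$ first or by a simultaneous capture. In every case the capture time $T$ of the profile $(s_{0},\widehat{\sigma}^{1},\sigma^{2},\ldots,\sigma^{N})$ satisfies $T\leq T_{2}\left(  G\right)<\infty$, and the immediate reward $C_{1}$ receives at the capture state is at least $\min\!\left(\frac{1-\varepsilon}{1},\frac{\varepsilon}{N-2},\frac{1}{N-1},\ldots\right)$ — i.e., the minimum over all the possible per-cop capture rewards $\frac{1-\varepsilon}{N_{1}}$, $\frac{\varepsilon}{N-1-N_{1}}$, and $\frac{1}{N-1}$ for $N_{1}\in\{1,\ldots,N-1\}$ — which is a strictly positive constant (here one uses $\varepsilon\leq\frac{1}{N-1}$, but even $\varepsilon=0$ only kills the non-capturing term, and $C_{1}$ \emph{does} get at least a chance at being the capturer when he actively hunts; more carefully, since $C_{1}$ himself plays the one-cop strategy, if no other cop intervenes then $C_{1}$ is the sole capturer and gets $1-\varepsilon>0$, while if some other cop does catch $R$ first or simultaneously, $C_{1}$ still collects at least $\min\!\big(\varepsilon\text{-term},\frac{1}{N-1}\big)\geq 0$ — the truly robust bound is that \emph{some} positive reward is collected in the worst case when $\varepsilon>0$, so I would either assume $\varepsilon>0$ here or argue, as in the $N=3$ proof, that the robber will never move into a cop and hence $C_{1}$ is guaranteed to be a capturer). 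Thus
\[
Q^{1}(s_{0},\widehat{\sigma}^{1},\sigma^{2},\ldots,\sigma^{N})\geq\gamma^{T_{2}\left(  G\right)}\cdot\min\!\left(1-\varepsilon,\ \frac{1}{N-1}\right)>0,
\]
contradicting the displayed inequality. Hence no non-capturing NE exists, i.e., every NE of $\Gamma_{N}\left(  G|s_{0},\gamma,\varepsilon\right)$ is capturing.

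The only subtlety — and the one step I would treat with care — is the corner case $\varepsilon=0$ combined with the theoretical possibility that every capture along $C_{1}$'s deviation is effected by \emph{another} cop (never by $C_{1}$), which would leave $C_{1}$ with payoff $0$ and break the contradiction. The clean fix, exactly as used implicitly in the three-player argument, is to observe that at any NE the robber never walks into an occupied vertex (he can always stall to increase capture time), so when $C_{1}$ plays the one-cop hunting strategy $\widehat{\sigma}^{1}$ against a robber playing $\sigma^{N}$, capture happens precisely when $C_{1}$ lands on the robber on $C_{1}$'s own turn — making $C_{1}$ the (sole) capturer and guaranteeing him the reward $1-\varepsilon$, or in the degenerate $\varepsilon=0$ subcase a reward of $1$, still strictly positive. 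Alternatively one simply notes that if any other cop ever captures $R$ before $C_{1}$ would, then that profile was \emph{already} capturing, so $C_{1}$'s deviation need only handle the sub-case where $C_{1}$'s hunt is what terminates the game. Either phrasing closes the gap, and the rest is the verbatim $N$-player analogue of the proof of Theorem \ref{prp0405}.
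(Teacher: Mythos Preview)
Your approach is the intended one (the paper omits the proof, deferring to Theorem~\ref{prp0405}): assume a non-capturing NE, have $C_1$ deviate to a one-cop optimal strategy, and obtain a strictly positive payoff for $C_1$, contradicting the NE property. For $\varepsilon>0$ this works and matches the paper. You are also right to flag the $\varepsilon=0$ corner case, which the paper's own proof of Theorem~\ref{prp0405} glosses over (the lower bound~(\ref{eq030302a}) is not strictly positive when $\varepsilon=0$ and another cop is the capturer). However, neither of your fixes closes the gap. Your first fix (``at any NE the robber never walks into an occupied vertex'') only constrains $\sigma^N$ \emph{along the equilibrium path}; after $C_1$ deviates we are off that path, and there $\sigma^N$ may legitimately prescribe running straight into another cop. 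Your second fix conflates the deviated profile with the original one: that $(\widehat{\sigma}^1,\sigma^2,\ldots,\sigma^N)$ is capturing says nothing about whether $(\sigma^1,\ldots,\sigma^N)$ was, and yields no contradiction.

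In fact this gap does not appear closable: the statement seems to fail at $\varepsilon=0$. Take $G=P_5$ (the path on vertices $1,\ldots,5$), $s_0=(1,5,3,1)$, $\varepsilon=0$, and let $\sigma^1=\sigma^2=$ ``stay forever'' while $\sigma^3$ is the history-dependent rule ``stay while neither cop has moved; once $C_1$ moves, head toward vertex $5$; once $C_2$ moves, head toward vertex $1$''. On path no one moves and all payoffs are $0$; $R$ cannot improve on $0$. If $C_1$ deviates, $R$ reaches $C_2$ at vertex $5$ in two of $R$'s moves, before $C_1$ (who starts four edges away and moves only every third turn) can catch up, so $C_1$'s payoff is $\varepsilon\cdot\gamma^{T}=0$; symmetrically for $C_2$. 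Hence this is a non-capturing NE even though $c(P_5)=1$. So your diagnosis is right, but the resolution is that the theorem (and Theorem~\ref{prp0405}) should really be stated for $\varepsilon\in(0,\tfrac{1}{N-1}]$, not that the $\varepsilon=0$ case can be patched.
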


\begin{theorem}
\label{prp0504}For any $G$ with $c\left(  G\right)  \leq N-1$ the following
holds:%
\[
\forall\left(  \gamma,\varepsilon\right)  \in\Omega^{N},\forall s_{0}\in
S:\text{there exists a capturing NE of }\Gamma_{N}\left(  G|s_{0}%
,\gamma,\varepsilon\right)  .
\]

\end{theorem}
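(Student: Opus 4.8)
The plan is to follow the same route as the proof of Theorem \ref{prp0407}, with the two-cop CR game replaced everywhere by the $(N-1)$-cop CR game and the three auxiliary games by the $N$ auxiliary games $\Gamma_N^n(G|s_0)$ introduced before Theorem \ref{prp0502}. First I would reduce to the genuinely new case. If $c\left(G\right)=1$, then Theorem \ref{prp0503} says that every NE of $\Gamma_N\left(G|s_0,\gamma,\varepsilon\right)$ is capturing, while Theorem \ref{prp0501} guarantees that such a NE exists; so a capturing NE exists and we are done (this also settles $N=2$). Hence I may assume $2\le c\left(G\right)\le N-1$, which in particular forces $N\ge3$ so that Theorem \ref{prp0502} is available. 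I would also dispose of the trivial case $s_0\in S_C$ and fix $s_0\in S_{NC}$ and $\left(\gamma,\varepsilon\right)\in\Omega^N$.

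Next, take a threat profile $\overline{\sigma}=(\overline{\sigma}^1,\ldots,\overline{\sigma}^N)$ of the form (\ref{eqthrstrat02}); by Theorem \ref{prp0502} it is a NE. If it is capturing we are done, so assume $\overline{\sigma}$ is non-capturing; then, when $\overline{\sigma}$ is played, each player $n$ follows $\phi_n^n$ for the whole game and every player receives payoff $0$. Select an optimal profile $\widehat{\sigma}=(\widehat{\sigma}^1,\ldots,\widehat{\sigma}^N)$ in $(N-1)$-cop CR; since $c\left(G\right)\le N-1$, this profile is capturing from every initial state. Define, for each $n$,
\[
\widetilde{\sigma}^n=\left\{\begin{array}[c]{ll}
\widehat{\sigma}^n & \text{as long as every }m\in\{1,\ldots,N\}\backslash n\text{ follows }\widehat{\sigma}^m,\\
\phi_m^n & \text{as soon as some }m\in\{1,\ldots,N\}\backslash n\text{ deviates from }\widehat{\sigma}^m,
\end{array}\right.
\]
with the $\phi_m^n$ exactly those appearing in $\overline{\sigma}$.

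To invoke Theorem \ref{prp0502} for $\widetilde{\sigma}$ it remains to check that each $\widehat{\sigma}^n$ is optimal for $P_n$ in $\Gamma_N^n\left(G|s_0,\gamma,\varepsilon\right)$. For $n=N$ this is immediate, since $\Gamma_N^N\left(G|s_0\right)$ is precisely $(N-1)$-cop CR and $\widehat{\sigma}^N$ is the robber's optimal strategy there. For a cop index $n\in\{1,\ldots,N-1\}$ I would argue as in Theorem \ref{prp0407}, using that $\Gamma_N\left(G|s_0\right)$ is path-equivalent to $\Gamma_N^n\left(G|s_0\right)$: playing $\phi_n^n$ against $(\phi_m^m)_{m\ne n}$ gives $P_n$ payoff $0$ (by the non-capturing hypothesis), so $P_n$'s value in $\Gamma_N^n\left(G|s_0\right)$ is at most $0$; since the cops' turn payoffs are nonnegative the value is also at least $0$, hence exactly $0$; therefore every $P_n$-strategy --- in particular $\widehat{\sigma}^n$ --- is optimal for $P_n$ in $\Gamma_N^n\left(G|s_0\right)$. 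Theorem \ref{prp0502} then yields that $\widetilde{\sigma}$ is a NE of $\Gamma_N\left(G|s_0,\gamma,\varepsilon\right)$, and since along $\widetilde{\sigma}$ nobody ever deviates, the realized history coincides with the one generated by $\widehat{\sigma}$, which is capturing; hence $\widetilde{\sigma}$ is a capturing NE, as required.

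The only step carrying real content beyond bookkeeping is the claim that $P_n$'s value in $\Gamma_N^n\left(G|s_0\right)$ equals zero for each cop $n$; this is where the non-capturing hypothesis, path-equivalence, and nonnegativity of the cops' turn payoffs are combined, exactly as in the three-player case. Everything else is a routine substitution of $N-1$ for $2$ in the proof of Theorem \ref{prp0407}, so I do not anticipate a genuine obstacle.
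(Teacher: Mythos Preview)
Your proposal is correct and follows essentially the same route as the paper, which presents Theorem \ref{prp0504} as the direct $N$-player generalization of Theorem \ref{prp0407} and omits a separate proof. Your write-up is in fact slightly more careful than the three-player proof on two points: you explicitly isolate the cases $c(G)=1$ and $N=2$ (needed because Theorem \ref{prp0502} requires $N\ge3$), and you spell out why the value of $\Gamma_N^n(G|s_0)$ for a cop $n$ is exactly zero by combining the upper bound from the non-capturing hypothesis with the lower bound from nonnegativity of cop turn payoffs---a step the paper leaves implicit.
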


\begin{theorem}
\label{prp0505}For any $G$ with $c\left(  G\right)  \leq N-1$, let
$\widehat{\sigma}=\left(  \widehat{\sigma}^{1},\widehat{\sigma}^{2}%
,...,\widehat{\sigma}^{N}\right)  $ be a strategy profile which is optimal in
the $\left(  N-1\right)  $-cop CR game. Then the following holds:%
\[
\forall\left(  \gamma,\varepsilon\right)  \in\widetilde{\Omega}^{N},\forall
s_{0}\in S:\widehat{\sigma}\text{ is a capturing NE of }\Gamma_{N}\left(
G|s_{0},\gamma,\varepsilon\right)  .
\]

\end{theorem}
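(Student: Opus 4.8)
The plan is to generalize the proof of Theorem \ref{prp0409} from the three-player case to the $N$-player case essentially verbatim, since the structure of the argument is identical. First I would observe that $\widehat{\sigma}$ is obviously a capturing profile: it is optimal in the $(N-1)$-cop CR game and $c(G)\le N-1$ guarantees finite capture time there, and since $\Gamma_N\left(G|s_0,\gamma,\varepsilon\right)$ is path-equivalent to the $(N-1)$-cop CR game, the same history — and hence the same finite capture time $T_1$ — is produced in the SCAR game. Dispose of the trivial case $s_0\in S_C$ immediately. Then, as noted in the remark preceding the theorem, under $\widehat{\sigma}$ exactly one cop captures $R$ (the robber, playing optimally, never moves into a cop-occupied vertex since he could always stall), so without loss of generality say $C_1$ is the capturing cop; the payoffs are $Q^1=\left(1-\varepsilon\right)\gamma^{T_1}$, $Q^j=\varepsilon\gamma^{T_1}$ for each non-capturing cop $C_j$ ($2\le j\le N-1$), and $Q^N=-\gamma^{T_1}$.

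Next I would check, player by player, that no unilateral deviation helps, exactly as in the proof of Theorem \ref{prp0409}. For the robber: any deviation $\sigma^N$ yields capture time $T_2\le T_1$ by optimality of $\widehat{\sigma}$ in CR, so $Q^N$ only decreases (becomes more negative), or stays the same. For the capturing cop $C_1$: any deviation gives $T_2\ge T_1$ by optimality of $\widehat{\sigma}$, and the best $C_1$ can hope for is to be the capturing cop again, earning at most $\left(1-\varepsilon\right)\gamma^{T_2}\le\left(1-\varepsilon\right)\gamma^{T_1}$, so no gain. For a non-capturing cop $C_j$: a deviation gives $T_2\ge T_1$; if there is no capture, $C_j$ gets $0<\varepsilon\gamma^{T_1}$; if capture is effected by some cop other than $C_j$, $C_j$ gets at most $\varepsilon\gamma^{T_2}\le\varepsilon\gamma^{T_1}$; and if $C_j$ becomes the capturing cop, then necessarily $T_2\ge T_1+1$ (otherwise $\widehat{\sigma}$ itself would have $C_j$ capturing no later than $C_1$, contradicting the choice of $C_1$ as capturing cop), so $C_j$ gets $\left(1-\varepsilon\right)\gamma^{T_2}\le\left(1-\varepsilon\right)\gamma^{T_1+1}<\varepsilon\gamma^{T_1}$, where the last strict inequality is exactly the condition $\gamma<\frac{\varepsilon}{1-\varepsilon}$ defining $\widetilde{\Omega}^N$. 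In every case the deviating player weakly decreases his payoff.

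Finally I would note that the argument is symmetric in which cop turns out to be the capturing one: if the capturing cop under $\widehat{\sigma}$ is $C_i$ rather than $C_1$, simply relabel. Since $\widehat{\sigma}$ is a capturing profile, some cop $C_i$ does capture $R$, and by the above no player can profit from a unilateral change, so $\widehat{\sigma}$ is a capturing NE of $\Gamma_N\left(G|s_0,\gamma,\varepsilon\right)$. I do not anticipate a genuine obstacle here — the only point requiring a little care, as in the three-player case, is the sub-case where a non-capturing cop deviates into becoming the capturing cop; one must justify the strict bound $T_2\ge T_1+1$ (from the optimality of $\widehat{\sigma}$ in CR, a deviation by a single cop cannot strictly shorten capture time, and it cannot leave it unchanged while reassigning the capture to a different cop without that cop having already been able to capture at time $T_1$) and then invoke $\gamma<\frac{\varepsilon}{1-\varepsilon}$. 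Everything else is a routine transcription of the $N=3$ proof with $C_2$ replaced by an arbitrary non-capturing cop $C_j$.
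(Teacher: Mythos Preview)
Your approach is exactly what the paper intends: Theorem~\ref{prp0505} is stated without proof as the $N$-player generalization of Theorem~\ref{prp0409}, and you transcribe that proof faithfully in spirit.

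There is one slip that actually matters. In the $N$-player payoff structure (Section~\ref{sec0501}), when a \emph{single} cop captures, each non-capturing cop's turn reward is $\dfrac{\varepsilon}{N-1-N_{1}}=\dfrac{\varepsilon}{N-2}$, not $\varepsilon$. So $C_{j}$'s baseline payoff under $\widehat{\sigma}$ is $\dfrac{\varepsilon}{N-2}\,\gamma^{T_{1}}$, and in the critical sub-case where $C_{j}$ deviates into becoming the (sole) capturing cop at time $T_{2}\ge T_{1}+1$, the comparison you need is
\[
\left(1-\varepsilon\right)\gamma^{T_{2}}\;\le\;\left(1-\varepsilon\right)\gamma^{T_{1}+1}\;<\;\frac{\varepsilon}{N-2}\,\gamma^{T_{1}},
\]
which reduces to $\gamma<\dfrac{\varepsilon}{(N-2)(1-\varepsilon)}$. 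For $N=3$ this coincides with the $\widetilde{\Omega}^{3}$ condition $\gamma<\dfrac{\varepsilon}{1-\varepsilon}$, but for $N\ge 4$ it is strictly stronger than the inequality defining $\widetilde{\Omega}^{N}$. Your line ``the last strict inequality is exactly the condition $\gamma<\frac{\varepsilon}{1-\varepsilon}$'' is therefore not justified once $N\ge 4$: either the argument needs the tighter bound $\gamma<\dfrac{\varepsilon}{(N-2)(1-\varepsilon)}$, or the definition of $\widetilde{\Omega}^{N}$ must be read with that extra factor. Everything else in your outline (the robber's case, the capturing cop's case, the other two sub-cases for a non-capturing cop, and the symmetry remark) goes through unchanged after this correction.
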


\begin{theorem}
\label{prp0506}For any $G$ with $c\left(  G\right)  \geq2$, the following
holds:%
\[
\forall\left(  \gamma,\varepsilon\right)  \in\Omega^{N},\exists s_{0}\in
S:\text{there exists a non-capturing NE of }\Gamma_{N}\left(  G|s_{0}%
,\gamma,\varepsilon\right)  .
\]

\end{theorem}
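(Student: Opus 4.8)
The plan is to mimic, for general $N$, the construction used for $N=3$ in the proof of Theorem~\ref{prp0410}. Since $c(G)\ge 2$, one-cop CR is not cop-win, so there exist vertices $x,y_0$ such that in one-cop CR starting from ``cop at $x$, robber at $y_0$, cop to move'' the robber can evade the cop forever; note $y_0\notin N[x]$, since otherwise the cop captures in one move. I would take $s_0=(x,x,\dots,x,y_0,1)$, i.e.\ all $N-1$ cop tokens stacked on $x$ and the robber on $y_0$, with $C_1$ to move. For the profile: each cop $C_i$ plays ``stay put as long as no cop has yet made a nontrivial move; and as soon as some cop $C_j$ is the first to move, move one step along a (fixed) shortest path towards $C_j$'s current vertex, staying put once you reach it''. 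The robber plays ``stay put as long as every cop stays put; and as soon as cop $C_j$ is the first to move, switch to an optimal robber strategy in one-cop CR played against $C_j$ alone'' — which is a winning evasion strategy from the position then reached, by the choice of $x,y_0$ and the meaning of ``winning''. Exactly as in Theorem~\ref{prp0410}, under this profile no cop ever moves, the robber stays at $y_0\ne x$ forever, so $T_C=\infty$ and every player receives payoff $0$; in particular the profile is \emph{non-capturing}.

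Next I would verify that it is a NE, i.e.\ that nobody can improve unilaterally on payoff $0$. The robber cannot, because his payoff is $\le 0$ under every profile. For a cop $C_i$, suppose he switches to some $\sigma^i$; then the remaining $N-2$ cops all chase $C_i$, so they behave as a single ``virtual cop'' glued to $C_i$, while the robber runs from $C_i$ with a one-cop-CR evasion strategy; since $c(G)\ge 2$ the robber evades that single cop and hence evades \emph{all} the cops, so no capture occurs and $C_i$'s payoff stays $0$. It is essential here that \emph{no capture at all} occurs, not merely that $C_i$ himself fails to capture: when $\varepsilon>0$ a capture by any other cop would still hand $C_i$ the positive reward $\tfrac{\varepsilon}{N-2}\gamma^{T_C}>0$ (for $N\ge 3$), which is precisely why the other cops must chase the deviator — pinning them in place would, when $c(G)=2$ and $N\ge 4$, turn ``deviator $+$ a stationary clump'' into two effective cops that could capture. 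Combining the three cases (robber, a cop, and — vacuously — the no-deviation run) yields the NE.

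The main obstacle, and the only place real work is needed, is the claim that the chasing cops never \emph{independently} catch the robber, i.e.\ that after cop $C_j$ first deviates, every cop $C_i$ with $i\ne j$ occupies a vertex in $N[\,\cdot\,]$ of $C_j$'s current vertex at every robber-turn. The natural argument is induction over robber-turns: it holds at the start because all cops sit on $x=$ $C_j$'s starting vertex, and it is preserved across one cycle $C_1\to\cdots\to C_{N-1}\to R$ because in the cycle $C_j$ moves exactly one step and each chaser moves exactly one step towards $C_j$'s then-current vertex — a short case split on whether a chaser moves before or after $C_j$ in the turn order shows the chaser ends the cycle either exactly on $C_j$'s vertex or within distance one of it. One then has to check there is no capture on cop-turns either (each chaser's move lands on a vertex the robber has already been avoiding, which one arranges by letting the chasers' shortest paths run through $C_j$'s previous vertex). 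This bookkeeping is entirely analogous to the implicit reasoning behind ``$C_1$ and $C_2$ essentially move as one cop'' in the proof of Theorem~\ref{prp0410}, and it is where I would concentrate the effort; the cases $N=3$ (which is Theorem~\ref{prp0410}) and $N=2$ (a single cop, the robber just plays his one-cop-CR evasion strategy) fall out as special cases.
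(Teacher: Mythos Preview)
Your proposal is correct and follows essentially the same approach as the paper. The paper does not spell out a proof of Theorem~\ref{prp0506}; it simply states that the results of Section~\ref{sec0503} generalize those of Section~\ref{sec0403}, so the intended proof is exactly the $N$-player analogue of the proof of Theorem~\ref{prp0410}, which is what you carry out: stack all $N-1$ cops on a vertex $x$ from which a single cop cannot win, have each cop stay put unless another cop moves first (and then shadow the first mover), and have the robber play a one-cop evasion strategy against the first mover. Your identification of the one nontrivial point --- that the shadowing cops remain inside $N[C_j]$ at every robber-turn and therefore are automatically avoided by the robber's one-cop evasion strategy --- is precisely the content of the clause ``$C_1$ and $C_2$ essentially move as one cop'' and of the $x\in N[x']$ argument in case~3 of the proof of Theorem~\ref{prp0410}, now carried through the full cycle $C_1\to\cdots\to C_{N-1}\to R$; your round-by-round bookkeeping (cops with index $>j$ sit at $C_j$'s current vertex, cops with index $<j$ sit at $C_j$'s previous vertex, and the robber's winning one-cop response keeps $R\notin N[C_j]$ after each of $R$'s moves) is the right way to make that clause rigorous for general $N$.
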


\begin{theorem}
\label{prp0507}For any $G$ with $c\left(  G\right)  \geq N$ the following
holds:%
\[
\forall\left(  \gamma,\varepsilon\right)  \in\Omega^{N},\exists s_{0}\in
S:\text{every NE of }\Gamma_{N}\left(  G|s_{0},\gamma,\varepsilon\right)
\text{ is non-capturing.}%
\]

\end{theorem}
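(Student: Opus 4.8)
The plan is to generalize the proof of Theorem \ref{prp0411} (the $N=3$ case) essentially verbatim, the only structural change being that its ``two-cop CR escape position'' is replaced by an ``$(N-1)$-cop CR escape position''. First I would unwind the definition of cop number: $c(G)\ge N$ means that $N-1$ cop tokens are not enough to force a finite capture time, i.e.\ $T_N(G)=\max_{s_0}T_N(G|s_0)=\infty$. Hence there is a configuration $s_0'=(x^1,\dots,x^{N-1},x^N,1)$ of $N-1$ cops and one robber with $T_N(G|s_0')=\infty$. Since $(N-1)$-cop (modified) CR is a two-player zero-sum discounted stochastic game with a value and optimal positional strategies (Section \ref{sec03}), the robber has an optimal strategy $\widetilde\sigma^{N}$ which is \emph{winning} from $s_0'$: it guarantees $T_C=\infty$ against \emph{every} cop strategy. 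Let $\widehat\sigma^{N}$ be the SCAR robber strategy induced by $\widetilde\sigma^{N}$ (exactly by the device described in the footnote of the proof of Theorem \ref{prp0405}), and let $s_0=(x^1,\dots,x^{N-1},x^N,1)$ be the corresponding SCAR state. Note that $s_0$ and $\widehat\sigma^{N}$ do not depend on $(\gamma,\varepsilon)$, so the same choice works for every $(\gamma,\varepsilon)\in\Omega^N$.

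Next, invoking the path-equivalence between $\Gamma_N(G|s_0)$ and $(N-1)$-cop CR (the $N$-player analogue of the equivalence noted at the end of Section \ref{sec03}), I would argue that when the robber plays $\widehat\sigma^{N}$ in $\Gamma_N(G|s_0,\gamma,\varepsilon)$ no capture ever occurs, whatever strategies $\sigma^1,\dots,\sigma^{N-1}$ the cops use. Therefore every turn payoff vanishes, so $Q^n(s_0,\sigma^1,\dots,\sigma^{N-1},\widehat\sigma^{N})=0$ for all $n$ and all cop profiles; in particular $(\sigma^1,\dots,\sigma^{N-1},\widehat\sigma^{N})$ is a non-capturing NE for any choice of cop strategies, which incidentally re-proves, at this $s_0$, the existence statement of Theorem \ref{prp0506}.

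Now take an arbitrary NE $\sigma_*=(\sigma_*^1,\dots,\sigma_*^N)$ of $\Gamma_N(G|s_0,\gamma,\varepsilon)$. By the form of the robber's turn payoffs, a capturing profile gives him $Q^N(s_0,\sigma_*)=-\gamma^{T_C}<0$ with $T_C<\infty$. But by the previous paragraph the robber could unilaterally switch to $\widehat\sigma^{N}$ and obtain payoff $0>-\gamma^{T_C}$, contradicting the NE inequality (\ref{eq03001}) for player $N$. Hence $Q^N(s_0,\sigma_*)=0$, which forces $T_C=\infty$; thus $\sigma_*$ is non-capturing. Since $\sigma_*$ was arbitrary, every NE of $\Gamma_N(G|s_0,\gamma,\varepsilon)$ is non-capturing, which is exactly the claim.

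The only points requiring care --- rather than any genuinely new idea --- are: (i) the passage from $c(G)\ge N$ to a concretely witnessed escape position, which is just unwinding the definition of $c(G)$ together with $T_N(G)=\max_{s_0}T_N(G|s_0)$; and (ii) checking that the path-equivalence, and the construction of a robber SCAR strategy that evades \emph{all} cop behaviours (not only optimal ones), extend from $N=3$ to general $N$. Both tools are already used in Section \ref{sec0403}, so I expect no surprises; the only mild ``obstacle'' is stating the $N$-player path-equivalence cleanly enough to cite it here.
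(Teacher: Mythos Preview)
Your proposal is correct and follows exactly the approach the paper intends: the paper presents Theorem \ref{prp0507} without proof, stating only that it generalizes the corresponding three-player result (Theorem \ref{prp0411}), and your argument is precisely that generalization, with the same two steps (exhibit a robber-winning position via $c(G)\ge N$, then show any NE must be non-capturing lest the robber profitably deviate to his CR-optimal evasion strategy). Your additional observation that the witnessing $s_0$ can be chosen independently of $(\gamma,\varepsilon)$ is a harmless strengthening.
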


\begin{corollary}
\label{prp0508}Given a graph $G$:

\begin{enumerate}
\item suppose that for all $\left(  \gamma,\varepsilon\right)  \in\Omega^{N}$
and $s_{0}\in S$, every NE of $\Gamma_{N}\left(  G|s_{0},\gamma,\varepsilon
\right)  $ is capturing; then $c\left(  G\right)  =1$.

\item suppose that for all $\left(  \gamma,\varepsilon\right)  \in\Omega^{N}$
there exists some $s_{0}\in S$ such that every NE of $\Gamma_{N}\left(
G|s_{0},\gamma,\varepsilon\right)  $ is non-capturing; then $c\left(
G\right)  \geq N$.
\end{enumerate}
\end{corollary}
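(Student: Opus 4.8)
The plan is to obtain both parts purely by contraposition from the existence theorems already established in this section, so no new game-theoretic argument is needed; this is the $N$-player analogue of Corollary \ref{prp0412}, and the only care required is to keep the quantifier order on $(\gamma,\varepsilon)$ and $s_0$ aligned with the way those theorems are phrased.

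For part 1, I would suppose that for every $(\gamma,\varepsilon)\in\Omega^{N}$ and every $s_0\in S$ every NE of $\Gamma_{N}(G|s_0,\gamma,\varepsilon)$ is capturing, and assume for contradiction that $c(G)\geq 2$. Then Theorem \ref{prp0506} applies: fixing any $(\gamma,\varepsilon)\in\Omega^{N}$, it produces a state $s_0$ for which $\Gamma_{N}(G|s_0,\gamma,\varepsilon)$ has a non-capturing NE, contradicting the hypothesis at that very $s_0$. Hence $c(G)\leq 1$, and since $c(G)\geq 1$ for every graph, $c(G)=1$.

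For part 2, I would suppose that for every $(\gamma,\varepsilon)\in\Omega^{N}$ there is some $s_0\in S$ such that every NE of $\Gamma_{N}(G|s_0,\gamma,\varepsilon)$ is non-capturing, and rule out in turn the cases $c(G)=1$ and $2\leq c(G)\leq N-1$. If $c(G)=1$, then by Theorem \ref{prp0503} every NE of $\Gamma_{N}(G|s_0,\gamma,\varepsilon)$ is capturing for all $(\gamma,\varepsilon)$ and all $s_0$; since by Theorem \ref{prp0501} each such game has at least one (positional) NE, this contradicts the hypothesis. If instead $2\leq c(G)\leq N-1$, then by Theorem \ref{prp0504} the game $\Gamma_{N}(G|s_0,\gamma,\varepsilon)$ has a capturing NE for every $(\gamma,\varepsilon)$ and every $s_0$; applying this at the $s_0$ supplied by the hypothesis again yields a contradiction. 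Therefore $c(G)\geq N$.

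There is no genuine obstacle beyond bookkeeping: the mathematical content lies entirely in Theorems \ref{prp0501}, \ref{prp0503}, \ref{prp0504} and \ref{prp0506}. The one thing worth double-checking is that the corollary's hypotheses are stated with exactly the quantifier pattern needed to feed into those theorems (namely ``for all $(\gamma,\varepsilon)$'' outermost, followed by ``for all $s_0$'' in part 1 and ``there exists $s_0$'' in part 2), which is indeed the case.
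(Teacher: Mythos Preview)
Your proposal is correct and follows essentially the same route as the paper: part 1 is the contrapositive of Theorem \ref{prp0506}, and part 2 is obtained from Theorems \ref{prp0503} and \ref{prp0504}, exactly as the paper indicates for the three-player analogue (Corollary \ref{prp0412}). Your additional invocation of Theorem \ref{prp0501} to guarantee that a NE actually exists (so that ``every NE is capturing'' is not vacuous) is a small extra bit of care the paper leaves implicit, but it does not change the approach.
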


\begin{corollary}
\label{prp0509}$G$ is cop-win iff : for all $\left(  \gamma,\varepsilon
\right)  \in\Omega^{N}$ and $s_{0}\in S$, every NE\ of $\Gamma_{N}\left(
G|s_{0},\gamma,\varepsilon\right)  $ is capturing.
\end{corollary}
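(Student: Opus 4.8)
The plan is to derive this as a direct corollary of Theorem~\ref{prp0503} and the first part of Corollary~\ref{prp0508}, in exact analogy with how Corollary~\ref{prp0414} followed from Theorem~\ref{prp0405} and the first part of Corollary~\ref{prp0412} in the three-player case. Recall that $G$ being cop-win means $c(G)=1$; so the statement to prove is the logical equivalence
\[
c(G)=1 \iff \big(\forall(\gamma,\varepsilon)\in\Omega^{N},\ \forall s_{0}\in S:\ \text{every NE of }\Gamma_{N}(G|s_{0},\gamma,\varepsilon)\text{ is capturing}\big).
\]

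For the forward direction, assume $c(G)=1$. Then Theorem~\ref{prp0503} applies verbatim: for every $(\gamma,\varepsilon)\in\Omega^{N}$ and every $s_{0}\in S$, every NE of $\Gamma_{N}(G|s_{0},\gamma,\varepsilon)$ is capturing. This is exactly the right-hand side of the equivalence, so nothing further is needed here.

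For the reverse direction, assume the right-hand side holds, i.e.\ for all parameters and all initial states every NE is capturing. Then by the first part of Corollary~\ref{prp0508} we conclude $c(G)=1$, i.e.\ $G$ is cop-win. Chaining the two directions gives the biconditional, completing the proof. I do not expect any genuine obstacle: the only thing to be careful about is bookkeeping — making sure that the hypotheses of Theorem~\ref{prp0503} (which requires $c(G)=1$) and of Corollary~\ref{prp0508}(1) (whose conclusion is $c(G)=1$) line up as the two halves of the iff, and noting that the quantifier structure ``$\forall(\gamma,\varepsilon)\,\forall s_{0}$'' is identical on both sides so no quantifier juggling is required. The substantive content has already been established in the earlier theorems; this corollary is purely a repackaging of them.
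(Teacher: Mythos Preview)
Your proposal is correct and matches the paper's approach exactly: the paper presents this corollary without a separate proof, but by analogy with Corollary~\ref{prp0414} (which the paper explicitly obtains by combining Theorem~\ref{prp0405} and the first part of Corollary~\ref{prp0412}), the intended argument is precisely to combine Theorem~\ref{prp0503} with the first part of Corollary~\ref{prp0508}, as you do.
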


\subsection{Selfish Cop Number\label{sec0504}}

We know that the cop number $c\left(  G\right)  $ of graph $G$ is the minimum
number of cops required to guarantee (when the cops play optimally and for any
robber strategy and starting position) capture in CR\ played on $G$. Define
correspondingly the \emph{selfish cop} \emph{number} for the SCAR game.

\begin{definition}
\label{prp0510}The \emph{selfish cop number} of a graph $G$ is denoted by
$c_{s}\left(  G\right)  $ and defined to be the smallest $K$ such that: for
any $\left(  \gamma,\varepsilon\right)  \in\Omega^{K+1}$ and any $s_{0}\in S$,
there exists a capturing NE of $\Gamma_{K+1}\left(  G|s_{0},\gamma
,\varepsilon\right)  $.
\end{definition}

The selfish cop number equals the classic one, as demonstrated in the following.

\begin{theorem}
\label{prp0511}For every graph $G$ we have $c_{s}\left(  G\right)  =c\left(
G\right)  $.
\end{theorem}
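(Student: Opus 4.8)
The plan is to read off $c_s(G)=c(G)$ directly from the structural results of Section \ref{sec0503}, by proving the two inequalities $c_s(G)\le c(G)$ and $c_s(G)\ge c(G)$ separately. Throughout write $k=c(G)$, and recall from Definition \ref{prp0510} that $c_s(G)$ is the least integer $K$ for which the statement $P(K)$ holds, where $P(K)$ asserts: ``for every $(\gamma,\varepsilon)\in\Omega^{K+1}$ and every $s_0\in S$, the game $\Gamma_{K+1}(G|s_0,\gamma,\varepsilon)$ has a capturing NE.'' The one translation to keep straight throughout is that $\Gamma_{K+1}$ is the SCAR game with $K$ cop players and one robber, so that ``$K$ cops'' in CR corresponds to the $(K+1)$-player SCAR game.

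First I would establish $c_s(G)\le k$. Apply Theorem \ref{prp0504} with $N=k+1$: its hypothesis ``$c(G)\le N-1$'' then reads $k\le k$, which is true, and its conclusion is precisely $P(k)$. Hence $k$ lies in the set over which $c_s(G)$ is the minimum, so $c_s(G)\le k$.

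Next I would establish $c_s(G)\ge k$ by showing that $P(K)$ fails for every $K<k$, so no such $K$ can be the minimizer. Fix $K<k$ and set $N=K+1$; then $N\le k=c(G)$, i.e.\ $c(G)\ge N$, so Theorem \ref{prp0507} applies. It gives, for every $(\gamma,\varepsilon)\in\Omega^{N}=\Omega^{K+1}$, some initial state $s_0$ for which every NE of $\Gamma_{K+1}(G|s_0,\gamma,\varepsilon)$ is non-capturing; and since Theorem \ref{prp0501} guarantees that at least one (positional) NE exists, this is the same as saying $\Gamma_{K+1}(G|s_0,\gamma,\varepsilon)$ has no capturing NE. Therefore $P(K)$ is false, and as $K<k$ was arbitrary we conclude $c_s(G)\ge k$. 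Together with the first step this yields $c_s(G)=c(G)$.

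The proof is, deliberately, little more than bookkeeping on top of Theorems \ref{prp0501}, \ref{prp0504} and \ref{prp0507}; there is no real obstacle, and the only thing demanding care is the off-by-one between the cop count and the player count, so that the hypotheses $c(G)\le N-1$ of Theorem \ref{prp0504} (invoked with $N=k+1$) and $c(G)\ge N$ of Theorem \ref{prp0507} (invoked with $N=K+1\le k$) are applied with matching indices. If one wants to be scrupulous about the degenerate case $K=0$ (the ``game'' $\Gamma_1$ with no cops), capture never occurs there, so $P(0)$ fails as well; since every graph has $c(G)\ge1$, this changes nothing.
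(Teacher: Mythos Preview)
Your proof is correct and follows essentially the same approach as the paper: both directions are read off directly from Theorems \ref{prp0504} and \ref{prp0507}. One small remark: your appeal to Theorem \ref{prp0501} is unnecessary, since ``every NE is non-capturing'' already implies ``there is no capturing NE'' without any existence assumption; but this is harmless.
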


\begin{proof}
Take any $K$ such that $K\geq c\left(  G\right)  $, then by Theorem
\ref{prp0504} we have that, for every $\left(  \gamma,\varepsilon\right)
\in\Omega^{K+1}$ and every $s_{0}\in S$ there exists a capturing NE of
$\Gamma_{K+1}\left(  G|s_{0},\gamma,\varepsilon\right)  $. On the other hand,
take any $K\leq c\left(  G\right)  -1$, then by Theorem \ref{prp0507} we have
that, for every $\left(  \gamma,\varepsilon\right)  \in\Omega^{K+1}$ there
exists some $s_{0}\in S$, such that there exists no capturing NE of
$\Gamma_{K+1}\left(  G|s_{0},\gamma,\varepsilon\right)  $. Hence $c_{s}\left(
G\right)  $ (i.e., the smallest $K$ such that for every $\left(
\gamma,\varepsilon\right)  \in\Omega^{K+1}$ and every $s_{0}\in S$ there
exists a capturing NE of $\Gamma_{K+1}\left(  G|s_{0},\gamma,\varepsilon
\right)  $) equals $c\left(  G\right)  $.
\end{proof}

\subsection{A Connection between CR and SCAR\label{sec0505}}

We will now show that a slightly modified version of $N$-player SCAR is, in a
certain sense, equivalent to the CR game with $N-1$ cops. \ The modification
consists in letting $\varepsilon$ be a function of $N$ and $N_{1}$, namely we
use $\varepsilon\left(  N,N_{1}\right)  =\frac{N-1-N_{1}}{N-1}$. We will
denote the modified SCAR\ game by $\Gamma_{N}\left(  G|s_{0},\gamma
,\varepsilon\left(  N,N_{1}\right)  \right)  $.

The modification implies that the payoff of each capturing state depends on
the number of capturing cops. The distribution of the payoff remains the same, i.e.,

\begin{enumerate}
\item each capturing cop receives a reward of $\frac{1-\varepsilon\left(
N,N_{1}\right)  }{N_{1}}=\frac{1-\frac{N-1-N_{1}}{N-1}}{N_{1}}=\allowbreak
\frac{1}{N-1}$;

\item each non-capturing cop receives a reward of $\frac{\varepsilon\left(
N,N_{1}\right)  }{N-1-N_{1}}=\frac{\frac{N-1-N_{1}}{N-1}}{N-1-N_{1}%
}=\allowbreak\frac{1}{N-1}$.
\end{enumerate}

\noindent In other words, for every capture each cop (whether he is capturing
or non-capturing)\ receives the same reward.

SCAR with the above modification of $\varepsilon$ falls under the general
formulation of discounted stochastic games and all our previous results still
hold. Furthermore, the $\left(  N-1\right)  $-cops CR game is
\emph{payoff-equivalent} to $\Gamma_{N}\left(  G|s_{0},\gamma,\varepsilon
\left(  N,N_{1}\right)  \right)  $, by which we mean the following. Take any
strategies $\sigma^{1},\sigma^{2},...,\sigma^{N}$ and apply them

\begin{enumerate}
\item to $\Gamma_{N}\left(  G|s_{0},\gamma,\varepsilon\left(  N,N_{1}\right)
\right)  $, with $\sigma^{n}$ being the strategy of the $n$-th player;

\item to the $\left(  N-1\right)  $-cops CR game, with $\sigma^{n}$\ (for
$n\in\left\{  1,2,...,N-1\right\}  $)\ being the strategy the cop player uses
for his $n$-th token, and $\sigma^{N}$ being the strategy the robber player uses.
\end{enumerate}

\noindent Then the same history $(s_{0},s_{1},s_{2},...)$ will be produced in
the two games (they are path-equivalent\noindent)\ and, furthermore, the sum
of the total payoffs of the cops (resp. robber)\ in $\Gamma_{N}\left(
G|s_{0},\gamma,\varepsilon\left(  N,N_{1}\right)  \right)  $ will be the same
as the payoff of the cops (resp. robber)\ in the $\left(  N-1\right)  $-cops
CR game.

Since all cops in $\Gamma_{N}\left(  G|s_{0},\gamma,\varepsilon\left(
N,N_{1}\right)  \right)  $ receive the same payoff, their interests totally
coincide:\ they all want \emph{some} cop to capture the robber in the shortest
possible time, just like in the $\left(  N-1\right)  $-cops CR game. Hence the
following catn be proved in a similar way as Theorems \ref{prp0409} and
\ref{prp0505}.

\begin{theorem}
\label{prp0512}If the profile $\left(  \left(  \widehat{\sigma}_{1}%
,...,\widehat{\sigma}_{N-1}\right)  ,\widehat{\sigma}_{N}\right)  $ is optimal
in the $\left(  N-1\right)  $-cops CR game, then the profile $\left(
\widehat{\sigma}_{1},...,\widehat{\sigma}_{N-1},\widehat{\sigma}_{N}\right)  $
is a NE\ of $\Gamma_{N}\left(  G|s_{0},\gamma,\varepsilon\left(
N,N_{1}\right)  \right)  $.
\end{theorem}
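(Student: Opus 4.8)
The plan is to mimic the proof of Theorem \ref{prp0505} (which in turn follows Theorems \ref{prp0409} and \ref{prp0504}), exploiting the payoff-equivalence just established: in $\Gamma_{N}\left(G|s_{0},\gamma,\varepsilon\left(N,N_{1}\right)\right)$ every cop — capturing or not — receives exactly $\frac{1}{N-1}\gamma^{T_{C}}$ per capture, so each cop's total payoff is $\frac{1}{N-1}\gamma^{T_{C}}$ whenever capture occurs at time $T_{C}$ (and $0$ if it never does), while the robber's payoff is $-\gamma^{T_{C}}$. First I would fix $\left(\gamma,\varepsilon\left(N,N_{1}\right)\right)$ and an initial state $s_{0}$; the case $s_{0}\in S_{C}$ is trivial, so assume $s_{0}\in S_{NC}$. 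Let $T_{1}$ be the capture time of $\left(s_{0},\widehat{\sigma}^{1},\dots,\widehat{\sigma}^{N-1},\widehat{\sigma}^{N}\right)$, which by path-equivalence is the same in the $(N-1)$-cops CR game and in the modified SCAR; since $\left(\widehat{\sigma}_{1},\dots,\widehat{\sigma}_{N}\right)$ is optimal in CR, $T_{1} = T_{N}\left(G|s_{0}\right)$ is well-defined (possibly $\infty$ — I will handle that case at the end).

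Next I would verify the Nash condition player by player. For the robber: if he deviates to some $\sigma^{N}$ while all cops keep $\widehat{\sigma}^{i}$, then by optimality of $\widehat{\sigma}^{N}$ as a robber response in CR, the resulting capture time $T_{2}$ satisfies $T_{2}\le T_{1}$, hence $-\gamma^{T_{2}}\le-\gamma^{T_{1}}$ since $\gamma\in(0,1)$, so the robber does not gain. For cop $i$: if he deviates to some $\sigma^{i}$ while everyone else keeps their strategy, then by optimality of the cop team in CR the capture time $T_{2}$ satisfies $T_{2}\ge T_{1}$ (a single cop's deviation cannot help the team, since the team plays optimally); whether or not capture occurs and regardless of which cops effect it, cop $i$'s payoff is at most $\frac{1}{N-1}\gamma^{T_{2}}\le\frac{1}{N-1}\gamma^{T_{1}}$, which is exactly his equilibrium payoff. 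This is the crucial simplification over Theorem \ref{prp0409}: because every cop's reward is identity-independent, there is no case analysis on "who captures" and no need for the $\gamma<\frac{\varepsilon}{1-\varepsilon}$ restriction — the argument goes through for all $\left(\gamma,\varepsilon\left(N,N_{1}\right)\right)$.

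For the case $T_{1}=\infty$ (which arises exactly when $c(G)\ge N$, so it cannot be excluded in general): then all players receive payoff $0$ under $\widehat{\sigma}$. The robber clearly cannot improve on $0$, which is his maximum possible payoff. A cop $i$ deviating gives capture time $T_{2}\ge T_{1}=\infty$ by team-optimality in CR, so still no capture and still payoff $0$; hence no cop gains either. Thus $\widehat{\sigma}$ is a NE in this degenerate case too, and the statement holds with no restriction on $c(G)$ (consistent with the theorem being stated without such a hypothesis). The main obstacle I anticipate is stating precisely why a unilateral cop deviation cannot decrease $T_{C}$ — this rests on the fact that in the two-player CR game the cop \emph{player} (controlling all $N-1$ tokens) plays optimally, so any single-token deviation is a feasible but non-improving alternative for that player, hence $T_{2}\ge T_{1}$; once this is granted, the rest is the routine three-way check above, and no analogue of Theorem \ref{prp0404}'s threat-strategy machinery is needed since $\widehat{\sigma}$ itself is already a (positional) NE.
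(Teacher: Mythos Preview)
Your proposal is correct and follows exactly the approach the paper indicates: the paper gives no explicit proof but states that the result ``can be proved in a similar way as Theorems \ref{prp0409} and \ref{prp0505},'' and your argument is precisely the player-by-player verification of Theorem \ref{prp0409} carried to the $N$-player setting, with the correct observation that the restriction $\gamma<\frac{\varepsilon}{1-\varepsilon}$ becomes unnecessary because every cop's reward is $\frac{1}{N-1}\gamma^{T_C}$ regardless of who captures. Your handling of the degenerate case $T_{1}=\infty$ is a useful addition that the paper's sketch omits.
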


\section{Conclusion\label{sec06}}

As we have already mentioned, very little work has been previously done on
\emph{multi-player} pursuit games. In this sense SCAR furnishes a novel
generalization of CR and its numerous \emph{two-player} variants. We find
especially interesting the following aspects of SCAR.

\begin{enumerate}
\item On the \textquotedblleft technical\textquotedblright\ side, the
formulation of SCAR\ as a discounted game is quite advantageous. In the
\textquotedblleft natural\textquotedblright\ formulation of a pursuit game,
payoff is expected capture time; since this can be unbounded, there is no
obvious way to establish the existence of NE (in the multi-player case). On
the other hand, in the SCAR formulation payoff is a \emph{discounted} constant
(see (5));\ consequently the existence of a deterministic positional NE
follows immediately from Fink's classical result. Furthermore, because SCAR is
a perfect information game, its payoff can be immediately converted to capture
time, thus preserving the semantics of a pursuit game.

\item On the \textquotedblleft conceptual\textquotedblright\ side, our results
indicate that (perhaps surprisingly) even when $N-1$ cops can capture the
robber if they cooperate, they may settle on a non-cooperating, non-capturing
Nash equilibrium. This is somewhat similar to the \textquotedblleft
lack-of-cooperation\textquotedblright\ phenomenon observed in other branches
of Game Theory (e.g., in Prisoner's Dilemma and the Tragedy of the Commons).
\end{enumerate}

The above facts indicate further research directions, which we intend to
pursue in the future. We conclude this paper by briefly discussing some such directions.

\begin{enumerate}
\item \emph{Refinement of equilibria}. The apparent paradox of non-capturing
Nash equilibria may be resolved by using more refined equilibria concepts
(subgame perfect equilibria, strong equilibria, admissible equilbria etc.). An
obvious target then is to establish the existence and nature of such equilbria
in SCAR.

\item \emph{SCAR\ variants}. These are obtained by changing the number and /
or behaviors of the cops and robbers. Some possibilities are listed below; the
methods of the current paper can be used to study the resulting variants.

\begin{enumerate}
\item One cop pursues several selfish robbers; each robber pays a penalty if
he is captured and a (lower, perhaps zero)\ penalty if another robber is
captured. In a sense this is the dual of the game we have studied in this paper.

\item More generally, $N-M$ cops pursue $M$ robbers; the payoff of each player
may reflect a completely or partially selfish behavior on his part.

\item Even more generally, \emph{teams} of cops pursue teams of robbers; \ a
team is a set of tokens controlled by a single player.

\item The robbers can be \textquotedblleft passive\textquotedblright%
:\ (i)\ they move on the graph according to predetermined, known transition
functions and (ii)\ they do not receive any payoff (but their \ capture
results in payoffs to the capturing and non-capturing cops). Conversely, we
can have active robbers and passive cops.
\end{enumerate}

\item \emph{SCAR\ generalizations}. More generally, a family of
\emph{generalized multi-player pursuit / evasion games on graphs} can be
obtained by varying the \textquotedblleft capture
relationship\textquotedblright\ between players. Here are two examples.

\begin{enumerate}
\item A game played by players $P_{1}$, $P_{2}$, ..., $P_{N}$, in which
$P_{n}$ pursues $P_{n+1}$ (for $n\in\left\{  1,2,...,N-1\right\}  $); here we
have a \textquotedblleft linear\textquotedblright\ pursuit relationship.

\item The same as above but also $P_{N}$ pursues $P_{1}$; here we have a
\textquotedblleft cyclic\textquotedblright\ pursuit relationship.
\end{enumerate}

\noindent Hence a player will, in general, be simultaneously pursuer and
evader. Pursuit relationships are specified in terms of appropriate player
payoffs\footnote{A minimum requirement (to preserve the semantics of pursuit /
evasion) is that total payoff is nondecreasing (resp. nonincreasing)\ with
capture time for the evader (resp. pursuer).}, e.g., the capturing (resp.
captured) player receives (resp. pays)\ one time discounted unit. Again, the
resulting games can be studied by the methods of the current paper.

\item \emph{Non-perfect-information games}. A more drastic change (which can
be used in conjunction to any of the previously mentioned variations) is to
allow for \emph{simultaneous} or \emph{concurrent} player moves. This results
in non-perfect-information games and their study will require more powerful
methods than \ the ones presented in the current paper.
\end{enumerate}

\end{document}